\newcommand{\tuple}[1]{( {#1} )}
\newcommand{\sgl}{\{\cdot\}}
\newcommand{\Nat}{\mathbb{N}}  
\tikzset{
commutative diagrams/.cd,
arrow style=tikz,
diagrams={>=stealth},
row sep=large, 
column sep = huge
}
\tikzset{shiftarr/.style={
        rounded corners,%
        to path={--([#1]\tikztostart.center)
                     -- ([#1]\tikztotarget.center) \tikztonodes
                     -- (\tikztotarget)},
}}
\gdef\scalefactor{#1}\begin{center}\proofSkipAmount \leavevmode}%
\scalebox{\scalefactor}{\DisplayProof}\proofSkipAmount \end{center} }
\let\by\relax
\providecommand{\catname}{\mathbf} 
\providecommand{\clsname}{\mathcal}
\providecommand{\oname}[1]{\operatorname{\mathsf{#1}}}
\def\defcatname#1{\expandafter\def\csname B#1\endcsname{\catname{#1}}}
\def\defcatnames#1{\ifx#1\defcatnames\else\defcatname#1\expandafter\defcatnames\fi}
\def\defclsname#1{\expandafter\def\csname C#1\endcsname{\clsname{#1}}}
\def\defclsnames#1{\ifx#1\defclsnames\else\defclsname#1\expandafter\defclsnames\fi}
\def\defbbname#1{\expandafter\def\csname BB#1\endcsname{\mathbb{#1}}}
\def\defbbnames#1{\ifx#1\defbbnames\else\defbbname#1\expandafter\defbbnames\fi}
\def\Set{\catname{Set}}
\providecommand{\argument}{\operatorname{-\!-}}
\DeclareOldFontCommand{\bf}{\normalfont\bfseries}{\mathbf}
\providecommand{\FSet}{{\mathcal P}_{\omega}}		%
\providecommand{\CSet}{{\mathcal P}_{\omega_1}}		%
\providecommand{\Id}{\operatorname{Id}}
\providecommand{\Hom}{\mathsf{Hom}}
\providecommand{\id}{\mathsf{id}}
\providecommand{\comp}{\mathbin{\circ}}
\providecommand{\iso}{\mathbin{\cong}}
\providecommand{\bang}{\operatorname!}				%
\providecommand{\mto}{\mapsto}
\providecommand{\xto}[1]{\,\xrightarrow{#1}\,}
\providecommand{\dar}{\kern-1.2pt\operatorname{\downarrow}}	
\providecommand{\uar}{\kern-1.2pt\operatorname{\uparrow}}
\providecommand{\pr}{\oname{pr}}
\providecommand{\brks}[1]{\langle #1\rangle}
\providecommand{\inl}{\oname{inl}}
\providecommand{\inr}{\oname{inr}}
\providecommand{\inj}{\oname{in}}
\DeclareSymbolFont{Symbols}{OMS}{cmsy}{m}{n}
\DeclareMathSymbol{\iobj}{\mathord}{Symbols}{"3B}
\providecommand{\comma}{,\operatorname{}\linebreak[1]}		%
\providecommand{\dash}{\nobreakdash-\hspace{0pt}}		%
\providecommand{\by}[1]{\text{/\!\!/~#1}}			%
\providecommand{\pacman}[1]{}					%
\def\undef#1{\let #1\relax}
\providecommand{\noqed}{\def\qed{}}				%
\providecommand{\mone}{{\text{\kern.5pt\rmfamily-}\mathsf{\kern-.5pt1}}}
\newlist{citemize}{itemize}{1}
\setlist[citemize]{label=\labelitemi,wide} %
\newlist{cenumerate}{enumerate}{1}
\setlist[cenumerate,1]{label=\arabic*.~,ref={\arabic*},wide} %
\def\mfix#1{\oname{#1}\@ifnextchar\bgroup\@mfix{}}	%
\def\@mfix#1{#1\@ifnextchar\bgroup\mfix{}}			%
\providecommand{\case}[3]{\mfix{case}{\mathbin{}#1}{of}{#2}{\kern-1pt;}{\mathbin{}#3}}
\setlist[cenumerate,1]{label=(\arabic*),ref={(\arabic*)},wide} %
\newenvironment{theorem}{\begin{thm}}{\end{thm}}
\newenvironment{lemma}{\begin{lem}}{\end{lem}}
\newenvironment{proposition}{\begin{prop}}{\end{prop}}
\newenvironment{remark}{\begin{rem}}{\end{rem}}
\newenvironment{example}{\begin{exa}}{\end{exa}}
\newenvironment{definition}{\begin{defi}}{\end{defi}}
\newcommand{\gray}[1]{
#1}
\newcommand{\myparagraph}[1]{\smallskip\par\noindent\textbf{\textbf{#1}}\hspace{6pt}}
\newcommand{\klstar}{\star}  				%
\newcommand{\kklstar}{\text{\kreuz}}  	    %
\newcommand{\istar}{\dagger}  				%
\newcommand{\iistar}{\ddagger}  			%
\renewcommand{\comp}{\,}					%
\newcommand{\del}{\operatorname{\rhd}}
\newcommand{\bmu}{\bm\mu}
\newcommand{\bta}{\bm\eta}
\newcommand{\module}[1]{M} %
\newcommand{\moduleS}[1]{N} %
\newcommand{\modstar}{{\circ}}
\newcommand{\mediator}[2]{[{#2}, {#1}]}
\newcommand{\modstarS}{{\bullet}}
\renewcommand{\inl}{\inj_1}
\renewcommand{\inr}{\inj_2}
\providecommand{\out}{\operatorname{\mathsf{out}}}
\providecommand{\inm}{\operatorname{\mathsf{in}}}
\providecommand{\tuo}{\operatorname{\out^{\text{\kern.5pt\rmfamily-}\kern-.5pt1}\kern-1pt}}
\providecommand{\nim}{\operatorname{\inm^{\text{\kern.5pt\rmfamily-}\kern-.5pt1}\kern-1pt}}
\providecommand{\modm}{\xi}
\newcommand{\coit}{\operatorname{\mathsf{coit}}}
\newcommand{\iter}{\operatorname{\mathsf{iter}}}
\newcommand{\IB}[3][\hash]{#2 \mathbin{#1} #3}
\newcommand{\IBnu}{\digamma\kern-3pt{}_{\hash}}
\newcommand{\IBnup}{\digamma\kern-3pt{}_{\hash'}}
\newcommand{\IBnum}[1]{\digamma\kern-3pt{}_{#1}}
\newcommand{\SKIP}{\oname{skip}}
\newcommand{\TRUE}{\oname{true}}
\newcommand{\whileTermS}[2]{\mfix{while\;}{#1}{\;do\;}{#2}}
\newcommand{\cpto}{
  \mathrel{\raisebox{0.5ex}{\kern3pt\ensuremath{\mathrel{\tikz{ \draw [-stealth,line width=0.4] (0.6ex,1ex) -- (0,1ex) -- (0,0.4ex) -- (2.2ex,0.4ex); }}}\kern3pt}}
}
 \newcommand{\todo}[1]{}
\newcommand{\SG}[1]{\sgnote{#1}}
\begin{document}
\allowdisplaybreaks

\title{Guarded and Unguarded Iteration\\ for Generalized Processes}
\date{}
\author[Sergey Goncharov]{Sergey Goncharov{\rsuper{a}}}
\author[Lutz Schr\"oder]{Lutz Schr\"oder\rsuper{b}}
\author[Christoph Rauch]{Christoph Rauch\rsuper{c}}
\address{\lsuper{a,b,c}Friedrich-Alexander-Universit\"at Erlangen-N{\"u}rnberg}
\thanks{Research supported by the DFG project \emph{A High Level Language for Monad-based Processes} (GO~2161/1\dash 2, SCHR 1118/8-2)}
\email{\{sergey.goncharov,lutz.schroeder,christoph.rauch\}@fau.de}

\author[Maciej Pir\'og]{Maciej Pir\'og\rsuper{d}}
\address{\lsuper{d}Uniwersytet Wroc\l{}awski}
\email{maciej.pirog@cs.uni.wroc.pl}

\keywords{Monads, iteration, guarded fixpoints, side-effects}
\subjclass{
  F.3.1 [Logics and Meanings of Programs]:  Specifying and Verifying and 
  Reasoning about Programs;
  F.3.2 [Logics and Meanings of Programs]:  Semantics of Programming Languages
    ---   algebraic approaches to semantics, denotational semantics; 
General Terms: Theory}

\maketitle

\begin{abstract} 
Models of iterated computation, such as (completely) iterative monads, often
depend on a notion of guardedness, which guarantees unique solvability of
recursive equations and requires roughly that recursive calls happen only
under certain guarding operations. On the other hand, many models of iteration
do admit unguarded iteration. Solutions are then no longer unique, and in
general not even determined as least or greatest fixpoints, being instead
governed by quasi-equational axioms. Monads that support unguarded iteration
in this sense are called (complete) Elgot monads. Here, we propose to equip
(Kleisli categories of) monads with an abstract notion of guardedness and
then require solvability of abstractly guarded recursive equations; examples
of such \emph{abstractly guarded pre-iterative monads} include both iterative
monads and Elgot monads, the latter by deeming any recursive definition to be
abstractly guarded. Our main result is then that Elgot monads are precisely the
iteration-congruent retracts of abstractly guarded \emph{iterative} monads, the
latter being defined as admitting \emph{unique} solutions of abstractly guarded
recursive equations; in other words, models of unguarded iteration come about by
quotienting models of guarded iteration.
\end{abstract}

\section{Introduction}

\noindent In recursion theory, notions of guardedness traditionally
play a central role. Guardedness typically means that recursive calls
must be in the scope of certain guarding operations, a
condition aimed, among other things, at ensuring progress. The
paradigmatic case are recursive definitions in process algebra, which
are usually called guarded if recursive calls occur only under action
prefixing~\cite{BergstraPonseEtAl01}. A more abstract example are completely
iterative theories~\cite{ElgotBloomEtAl78} and monads~\cite{Milius05}, where,
in the latter setting, a recursive definition is guarded if it factors
through a given ideal of the monad. Guarded recursive definitions
typically have unique solutions; e.g.\ the unique solution of the
guarded recursive definition 
\begin{equation*}
  x =a.\,x
\end{equation*}
is the process that keeps performing the action~$a$.

For unguarded recursive definitions, the picture is, of course,
different. For example, to obtain the denotational semantics of an
unproductive while loop $\whileTermS{\TRUE}{\SKIP}$ characterized by circular operational
behavior
\begin{equation*}
  \whileTermS{\TRUE}{\SKIP} \quad\to\quad \SKIP;\whileTermS{\TRUE}{\SKIP} \quad\to\quad \whileTermS{\TRUE}{\SKIP}
\end{equation*}
one will select one of many solutions of this trivial equation, e.g.\
the least solution in a domain-theoretic semantics. 

Sometimes, however, one has a selection among non-unique solutions of
unguarded recursive equations that is \emph{not} determined
order-theoretically, i.e.\ by picking least or greatest fixpoints. One
example arises from \emph{coinductive
  resumptions}~\cite{GoncharovSchroder13,PirogGibbons15,PirogGibbons14}. In
the paradigm of monad-based encapsulation of
side-effects~\cite{Moggi91}, coinductive resumptions over a base
effect encapsulated by a monad~$T$ form a monad~$T^\nu$, the
\emph{coinductive resumption transform}, given~by
\begin{equation}\label{eq:T-nu}
  T^\nu X= \nu\gamma.\,T(X+\gamma)
\end{equation}
-- that is, a computation over $X$ performs a step with effects
from~$T$, and then returns either a value from $X$ or a resumption
that, when resumed, proceeds similarly, possibly ad infinitum. We thus can view 
coinductive resumptions as processes whose atomic steps are programs over~$T$. 
We
generally restrict to monads $T$ for which~\eqref{eq:T-nu} exists for
all $X$ (although many of our results do not depend on this
assumption). Functors (or monads)~$T$ for which this holds are called
\emph{iteratable}~\cite{AczelAdamekEtAl03}.  Most computationally
relevant monads are iteratable (notable exceptions in the category of
sets are the powerset monad and the continuation monad).
The last occurrence of $\gamma$ in~\eqref{eq:T-nu} may be seen as
being wrapped in an implicit unary \emph{delay} operation that
represents the gap between returning a resumption and resuming~it.
One thus has a natural \emph{delay map} $T^\nu X\to T^\nu X$ that
converts a computation into a resumption, i.e.\ prefixes it with a
delay step. In fact, for $T=\id$, $T^\nu$ is precisely Capretta's
\emph{partiality monad}~\cite{Capretta05}, also called the \emph{delay
  monad}. It is not in general possible to equip $T^\nu X$ with an ordered
domain structure that would allow for selecting least (or greatest)
solutions of unguarded recursive definitions over $T^\nu$. However,
one \emph{can} select solutions in a coherent way, that is, such that
a range of natural quasi-equational axioms is satisfied, making
$T^\nu$ into a (complete) \emph{Elgot
  monad}~\cite{AdamekMiliusEtAl10,GoncharovEA18} whenever~$T$ is so.

More precisely, we closely follow the perspective advanced by Bloom and
Esik~\cite{BloomEsik93,Esik99}, who identify as \emph{iteration operators}
certain categorical operators with the profile $(f\colon X\to Y+X)\mto (f^\istar\colon X\to Y)$
(which are categorical duals of \emph{parametrized recursion operators} 
$(f\colon Y\times X\to X)\mto (f_\istar\colon Y\to X)$~\cite{SimpsonPlotkin00}). The above-mentioned Elgot 
monads support iteration operators in this sense, specifically as operators
on their Kleisli categories (with coproduct~$+$ inherited 
from the base category). We place total (unguarded) iteration and 
partial (guarded) iteration on the same footing and thus aim to unify the 
theories of guarded and unguarded iteration. To this end, we introduce a notion of
\emph{abstractly guarded monads}, that is, monads equipped with a
distinguished class of \emph{abstractly guarded} equation morphisms
satisfying natural closure properties (Section~\ref{sec:guarded}). The
notion of abstract guardedness can be instantiated in various ways,
e.g.\ with the class of immediately terminating `recursive'
definitions, with the class of guarded morphisms in a completely
iterative monad, or with the class of all equation morphisms. We call
an abstractly guarded monad \emph{pre-iterative} if all abstractly
guarded equation morphisms have a solution, and \emph{iterative} if
these solutions are unique. Then completely iterative monads are
iterative abstractly guarded in this sense, and (complete) Elgot
monads are pre-iterative, where we deem every equation morphism to be
abstractly guarded in the latter case.

The quasi-equational axioms of Elgot monads are easily seen to be
satisfied when fixpoints are unique, i.e.\ in iterative abstractly
guarded monads, and moreover stable under iteration-congruent
retractions in a fairly obvious sense. Our first main result
(Section~\ref{sec:congruence}, Theorem~\ref{thm:elgot_is_retract})
states that the converse holds as well, i.e.\ \emph{a monad $T$ is a
  complete Elgot monad iff~$T$ is an iteration-congruent retract of an
  iterative abstractly guarded monad} -- specifically of $T^\nu$ as
in~\eqref{eq:T-nu}. As a slogan,
\begin{quote}
  \emph{monad-based models of unguarded iteration arise by quotienting 
  models of guarded iteration.}
\end{quote}
Our second main result
(Theorem~\ref{thm:nu-algebras}) is an algebraic characterization of
complete Elgot monads: We show that the construction $(-)^\nu$ mapping
a monad $T$ to $T^\nu$ as in~\eqref{eq:T-nu} is a monad on the
category of monads (modulo existence of $T^\nu$), and \emph{complete
  Elgot monads are precisely those $(-)^\nu$-algebras $T$ that cancel
  the delay map on $T^\nu$}, i.e.\ interpret the delay operation as
identity.

As an illustration of these results we discuss various semantic domains of 
processes equipped with canonical solutions of systems of process definitions under
various notions of guardedness (Example~\ref{ex:transfer-example}) and show how
these domains can be related via iteration-preserving morphisms implementing 
a suitable coarsening of the underlying equivalence relation, e.g.\ from bisimilarity to finite trace equivalence (Example~\ref{expl:trace}).
Moreover, we show
(Section~\ref{sec:sandwich}) that sandwiching a complete Elgot monad
between a pair of adjoint functors again yields a complete Elgot
monad, in analogy to a corresponding result for completely iterative
monads~\cite{PirogGibbons15}. Specifically, we prove a sandwich
theorem for iterative abstractly guarded monads and transfer it to
complete Elgot monads using our first main result. For illustration,
we then relate iteration in ultrametric spaces using Escardó's metric
lifting monad~\cite{Escardo99} to iteration in pointed cpo's, by
noting that the corresponding monads on sets obtained using our
sandwich theorems are related by an iteration-congruent retraction in
the sense of our first main result.

The material is organized as follows. We discuss preliminaries on
monads and their Kleisli categories and on coalgebras in
Section~\ref{sec:prelim}. Our notion of abstractly guarded monad,
derived from a notion of guarded co-Cartesian category, is presented
in Section~\ref{sec:guarded}, and extended to parametrized monads in
the sense of Uustalu~\cite{Uustalu03} in
Section~\ref{sec:parametrized}. We prove our main results on the
relationship between Elgot monads and guarded iteration as discussed
above in Section~\ref{sec:congruence}, and present the mentioned
application to sandwiching in Section~\ref{sec:sandwich}. We discuss
related work in Section~\ref{sec:related}; Section~\ref{sec:concl}
concludes. The present paper extends an earlier conference
version~\cite{GoncharovSchroderEtAl17} by full proofs and additional
example material, mostly within Examples~\ref{ex:transfer-example}
and~\ref{expl:trace}.

\section{Preliminaries}\label{sec:prelim}
We work in a category $\BC$ with finite coproducts (including an initial 
object~$\iobj$) throughout. 
A pair
$\sigma=\brks{{\sigma_1\colon Y_1\to X}\comma \sigma_2\colon Y_2\to X}$ of
morphisms is a \emph{summand} of $X$, denoted ${\sigma\colon Y_1\cpto X}$, if
it forms a coproduct cospan, i.e.\ $X$ is a coproduct of $Y_1$
and~$Y_2$ with~$\sigma_1$ and~$\sigma_2$ as coproduct injections. Each
summand $\sigma=\brks{\sigma_1,\sigma_2}$ thus determines a
\emph{complement summand}
$\bar\sigma=\brks{\sigma_2,\sigma_1}\colon Y_2\cpto X$.  We often shorten a
summand $\brks{\sigma_1,\sigma_2}$ to its first component $\sigma_1$, in order to use  
$\sigma$ as a morphism $Y_1\to X$.  Summands of a given object~$X$ are naturally
preordered by taking $\brks{\sigma_1,\sigma_2}$ to be smaller than
$\brks{\theta_1,\theta_2}$ if $\sigma_1$ factors
through~$\theta_1$ and $\theta_2$ factors through $\sigma_2$. This preorder has a greatest
element $\brks{\id_X,\bang}$ and a least element
$\brks{\bang,\id_X}$. By writing $X+Y$ we designate the
latter as a coproduct of $X$ and $Y$ and assign the canonical names
$\inj_1\colon X\cpto X+Y$ and $\inj_2\colon Y\cpto X+Y$ to the corresponding
summands. Dually, we write $\pr_1\colon X\times Y\to X$ and $\pr_2\colon X\times Y\to Y$
for canonical \emph{projections} (without introducing a special arrow
notation).
We do not assume that $\BC$ is
\emph{extensive}~\cite{CarboniLackEtAl93}, in which case coproduct complements
would be uniquely determined.

A \emph{monad} $\BBT$ over $\BC$ can be given in the form of a
\emph{Kleisli triple} $(T,\eta,\argument^\klstar)$ where $T$ is an
endomap over the objects $|\BC|$ of $\BC$, the \emph{unit} $\eta$ is a family
of morphisms $(\eta_X\colon X\to TX)_{X\in|\BC|}$, \emph{Kleisli lifting}
$(\argument)^\klstar$ is a family of maps $\colon\Hom(X,TY)\to\Hom(TX,TY)$,
and the \emph{monad laws} are satisfied:
\begin{align*} 
\eta^{\klstar}=\id, && f^{\klstar}\comp\eta=f, && (f^{\klstar}\comp g)^{\klstar}=f^{\klstar}\comp g^{\klstar}.
\end{align*} 
These laws precisely ensure that taking morphisms of the form $X\to TY$ under 
$f^\klstar g$ as the composition and $\eta$ as identities yields a category, 
which is also called the \emph{Kleisli category} of $\BBT$, and denoted $\BC_{\BBT}$.
The standard (equivalent) categorical definition~\cite{MacLane71} of
$\BBT$ as an endofunctor with natural transformation \emph{unit}
$\eta\colon \Id\to T$ and \emph{multiplication} $\mu\colon TT\to T$ can be
recovered by taking $Tf = (\eta\comp f)^\klstar$,
$\mu=\id^\klstar$. (We adopt the convention that monads and their
functor parts are denoted by the same letter, with the former in
blackboard bold.) We call morphisms ${X\to TY}$ \emph{Kleisli
  morphisms} and view them as a high level abstraction of sequential
programs where~$\BBT$ encapsulates the underlying computational effect
as proposed by Moggi~\cite{Moggi91a}, with~$X$ representing the input
type and~$Y$ the output type. The Kleisli category inherits coproducts
from $\BC$, i.e.\ a coproduct $X+Y$ of objects $X$, $Y$ in~$\BC$
remains a coproduct in $\BC_{\BBT}$, with coproduct injections
$\eta\inl$ and $\eta\inr$.

A more traditional use of monads in semantics is due
to Lawvere~\cite{Lawvere63}, who identified finitary monads on $\Set$
with \emph{algebraic theories}, hence objects $TX$ can be viewed as
sets of terms of the theory over free variables from $X$, the unit as
the operation of casting a variable to a term, and Kleisli composition
as substitution.  We informally refer to this use of monads as
\emph{algebraic monads}. Regardless of this informal convention, for every monad $\BBT$ we 
have an associated category of \emph{(Eilenberg-Moore-)algebras} $\BC^\BBT$
whose objects are pairs $(A,a\colon TA\to A)$ satisfying $a \comp \eta = \id$ and 
$\mu\comp (Ta) = a\comp (Ta)$ and whose morphisms from $(A,a\colon TA\to A)$ to 
$(B, b\colon TB\to B)$ are maps $f\colon  A\to B$ such that $f\comp a = b\comp (Tf)$.

Given an adjunction $F\dashv G\colon \BD\to\BC$, we obtain a monad whose
functor part is the composite $Gf\colon \BC\to\BC$, and both the
Eilenberg-Moore construction and the Kleisli construction show that
every monad has this form. In consequence, we can \emph{sandwich} a
monad~$\BBT$ on $\BD$ between an adjunction $F\dashv G\colon \BD\to\BC$,
obtaining a monad on~$\BC$ with functor part~$GTF$.

A(n\/ \emph{$F$-)coalgebra} for an endofunctor $f\colon \BC\to\BC$ is a pair
$(X,f\colon X\to FX)$ where $X\in |\BC|$. Coalgebras form a category, with
morphisms $(X,f)\to (Y,g)$ being $\BC$-morphisms $h\colon X\to Y$ such that
$(Fh) f = g\, h$. A final object of this category is called a
\emph{final coalgebra}, and we denote it by
\begin{equation*}
(\nu F,\out\colon \nu F\to F\nu F)
\end{equation*}
if it exists. For readability,
\begin{quote}
  \emph{we will often be cavalier about existence of final coalgebras
    and silently assume they exist when we need them;}
\end{quote}
that is, we hide sanity conditions on the involved functors, such as
accessibility (we make an exception to this in parts of
Section~\ref{sec:congruence} where we characterize Elgot monads as
certain Eilenberg-Moore algebras for a monad on the category of
monads). By definition,~$\nu F$ comes with \emph{coiteration} as a
definition principle (dual to the iteration principle for algebras):
given a coalgebra $(X,f\colon X\to FX)$ there is a unique morphism
$(\coit f)\colon X\to\nu F$ such that
\begin{align*}
\out\comp(\coit f) = F(\coit f)\comp f.
\end{align*}
This implies that $\out$ is an isomorphism (\emph{Lambek's lemma}) and
that $\coit\out=\id$ (see~\cite{UustaluVene99} for more details about
coalgebras for coiteration). The category of \emph{$F$-algebras}, $F$-algebra
morphisms and the notion of \emph{initial $F$-algebra} $(\mu F,\inm\colon F\mu F\to\mu F)$ 
are obtained in a completely dual way. The characteristic properties of
final coalgebras and initial algebras can be summarized in the following diagrams:
\begin{equation*}
\begin{tikzcd}[column sep = large,row sep = 4ex]
F\mu F
	\ar[d,"\inm"']\ar[r,"F(\iter f)"] & 
FX\ar[d,"f"] 
\\
\mu F
	\ar[r,"\iter f"']& 
X
\end{tikzcd}
\hspace{10ex}
\begin{tikzcd}[column sep = large,row sep = 4ex]
X
	\ar[d,"f"']
	\ar[r,"\coit f"] & 
\nu F\ar[d,"\out"]\\
FX 
	\ar[r,"F(\coit f)"'] &
F\nu F
\end{tikzcd}
\end{equation*} 
Note that $F$-algebras should not be confused with Eilenberg-Moore algebras of 
monads (as we indicated above, those satisfy additional laws).

We generally drop sub- and superscripts, e.g.\ on natural
transformations, whenever this improves readability.

\section{Abstractly Guarded Categories and Monads}\label{sec:guarded}
The notion of guardedness is paramount in process algebra: typically
one considers systems of mutually recursive process definitions of the
form $x_i = t_i$, and a variable~$x_i$ is said to be guarded in $t_j$
if it occurs in $t_j$ only in subterms of the form $a.\,s$ where
$a.\,(\argument)$ is action prefixing. A standard categorical approach
is to replace the set of terms over variables~$X$ by an object~$TX$
where $\BBT$ is a monad. We then can model separate variables by
partitioning~$X$ into a sum $X_1 +\ldots + X_n$ and thus talk about
guardedness of a morphism $f\colon X\to T(X_1 + \ldots + X_n)$ in any $X_i$,
meaning that every variable from $X_i$ is guarded in $f$. %
One way to capture guardedness categorically is to identify the
operations of~$\BBT$ that serve as guards by distinguishing a suitable
subobject of $TX$; e.g.\ the definition of completely iterative
monad~\cite{Milius05} follows this approach. For our purposes, we
require a yet more general notion where we just distinguish some
Kleisli morphisms as being guarded in certain output variables. We
thus aim to work in a Kleisli category of a monad, but since our
formalization and initial results can already be stated in any
co-Cartesian category, we phrase them at this level of generality as
long as possible.
\begin{definition}[Abstractly guarded
  category/monad] \label{def:g-cat} A co-Cartesian category $\BC$ is
  \emph{abstractly guarded} if it is equipped with a notion of
  \emph{abstract guardedness}, i.e.\ with a relation between morphisms
  $f\colon X\to Y$ and summands $\sigma\colon Y'\cpto Y$ closed under the rules in
  Figure~\ref{fig:guard} where $f\colon X\to_\sigma Y$ denotes the fact that
  $f$ and $\sigma$ are in the relation in question.

  A monad is \emph{abstractly guarded} if its Kleisli category is
  abstractly guarded.  A monad morphism $\alpha\colon\BBT\to\BBS$ between
  abstractly guarded monads $\BBT$, $\BBS$ is \emph{abstractly guarded}
  if $f\colon X\to_\sigma TY$ implies $\alpha f\colon X\to_\sigma SY$.
\end{definition}

\begin{figure}[t!]
\begin{gather*}
\textbf{(trv)}\quad\frac{f\colon X\to Y}{~\inj_1\comp f\colon X\to_{\inj_2} Y+Z~}\qquad\qquad
\textbf{(par)}\quad\frac{~f\colon X\to_\sigma Z~\quad{}~g\colon Y\to_\sigma Z}
{~[f,g]\colon X+Y\to_\sigma Z}\\[2ex]
\textbf{(cmp)}\quad\frac{~f\colon X\to_{\inj_2} Y+Z\qquad g\colon Y\to_{\sigma} V\qquad h\colon Z\to V~}{[g,h]\comp f\colon X\to_{\sigma} V} %
\end{gather*}
\caption{Axioms of abstract guardedness.}
\label{fig:guard}
\end{figure}
\noindent The rules in Figure~\ref{fig:guard} are designed so as to
enable a reformulation of the classical laws of iteration w.r.t.\
abstract guardedness, as we shall see in Section~\ref{sec:congruence}. 
Intuitively, \textbf{(trv)} states that if a program does
not output anything via a summand of the output type then it is
guarded in that summand.  %
Rule~\textbf{(par)} states that putting two guarded equation systems
side by side again produces a guarded system.  Finally,
rule~\textbf{(cmp)} states that guardedness is preserved under
composition: if the unguarded part of the output of a program is
postcomposed with a $\sigma$-guarded program, then the result is
$\sigma$-guarded, no matter how the guarded part is
transformed. That is, guardedness, once introduced, cannot be ``undone''
through sequential composition, but it can be ``forgotten'', as 
the following \emph{weakening rule} indicates:
\begin{displaymath}
  \textbf{(wkn)}\quad\frac{~f\colon X\to_{\sigma} Y~}{~f\colon X\to_{\sigma\theta} Y~},	
\end{displaymath}
where $\sigma$ and $\theta$ are composable summands. This rule was originally 
part of our axiomatization~\cite{GoncharovSchroderEtAl17} but it was
later observed to be derivable from the other three~\cite{GoncharovSchroder18}:
\begin{proposition}
Rule \textbf{(wkn)} is derivable in the calculus of Figure~\ref{fig:guard}.
\end{proposition}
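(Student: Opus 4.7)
The plan is to derive $f\colon X\to_{\sigma\theta} Y$ by a single application of \textbf{(cmp)} to $f$ itself, after first deriving one auxiliary guardedness judgement via \textbf{(trv)}. The key observation I would exploit throughout is that any summand $\tau\colon V'\cpto V$ already presents $V$ as a coproduct of $V'$ with its complement in which $\tau$ is the canonical $\inj_2$-summand. Hence \textbf{(trv)} and \textbf{(cmp)}, though phrased in terms of canonical $\inj_1,\inj_2$, apply with respect to any coproduct presentation of the codomain supplied by a summand.

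Spelling this out, with $\sigma = \brks{\sigma_1,\sigma_2}\colon Y_1\cpto Y$ and $\theta = \brks{\theta_1,\theta_2}\colon Y_2\cpto Y_1$, the composite summand is $\sigma\theta = \brks{\sigma_1\theta_1,\;[\sigma_1\theta_2,\sigma_2]}\colon Y_2\cpto Y$, and its complement presents $Y$ as $(Y_2'+Y_1')+Y_2$ with first injection $[\sigma_1\theta_2,\sigma_2]$ and second injection $\sigma_1\theta_1$. Reading \textbf{(trv)} in this presentation and applying it to the canonical morphism $\inj_2\colon Y_1'\to Y_2'+Y_1'$, I obtain $\inj_1\comp\inj_2\colon Y_1'\to_{\inj_2}(Y_2'+Y_1')+Y_2$; in our presentation this is exactly $\sigma_2\colon Y_1'\to_{\sigma\theta} Y$.

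Next, I read $Y$ as $Y_1'+Y_1$ via the complement of $\sigma$ (with first injection $\sigma_2$ and second injection $\sigma_1$); in this presentation $\sigma$ itself is the $\inj_2$-summand, so the hypothesis $f\colon X\to_\sigma Y$ is already an instance of $f\colon X\to_{\inj_2} Y_1'+Y_1$ matching the first premise of \textbf{(cmp)}. I then apply \textbf{(cmp)} with this $f$, with $g = \sigma_2\colon Y_1'\to_{\sigma\theta} Y$ derived above, and with the arbitrary choice $h = \sigma_1\colon Y_1\to Y$. The conclusion is $[\sigma_2,\sigma_1]\comp f\colon X\to_{\sigma\theta} Y$, and $[\sigma_2,\sigma_1]$ is the identity on $Y$ under the chosen presentation, giving the desired $f\colon X\to_{\sigma\theta} Y$.

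The main subtlety is the bookkeeping around coproduct presentations: the same object $Y$ is used simultaneously as two different coproducts (via $\bar\sigma$ and via $\bar{\sigma\theta}$), and one has to verify that the $\inj_2$-summand of each presentation matches $\sigma$ and $\sigma\theta$ respectively, and that $\inj_1\comp\inj_2$ of the $\bar{\sigma\theta}$-presentation is indeed $\sigma_2$. No separate iso-invariance lemma for $\to_\tau$ is needed, because the rules are applied directly within the chosen coproduct presentations of the single object $Y$; \textbf{(par)} is not used at all.
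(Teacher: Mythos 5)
Your proof is correct and follows essentially the same route as the paper's: both arguments reduce the claim, via a single application of \textbf{(cmp)} to $f$ written as the copairing of the two coproduct injections of the $\bar\sigma$-presentation composed with $f$, to showing that the complement injection $\bar\sigma$ is $\sigma\theta$-guarded, and both discharge that remaining obligation by reading \textbf{(trv)} against a re-associated coproduct presentation of $Y$ in which the guarding summand is the canonical second injection. The only difference is presentational: the paper normalizes $\sigma$ and $\theta$ to canonical $\inr$'s up front, whereas you carry the general summand data $\brks{\sigma_1,\sigma_2}$, $\brks{\theta_1,\theta_2}$ explicitly and invoke the paper's stated convention that the rules apply to arbitrary coproduct presentations.
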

\begin{proof}
  Let $\bar\sigma\colon Z\to Y$ be the complement of $\sigma$, thus
  $Y=Z+Y'$, $\sigma=\inr$ and $\bar\sigma=\inl$. Analogously we
  present $Y'$ as $Z'+Y''$ with $\theta=\inr$.  In summary, $Y$ is a
  coproduct of $Z$, $Z'$ and $Y''$, $f\colon  X \to_{\inr}Z+(Z'+Y'')$, and
  we need to show that $f\colon  X \to_{\inr\inr} Z+(Z'+Y'')$. Since
  $f=[\inl,\inr]\comp f$, by~\textbf{(cmp)} we are left to check that
  $\inl\colon  Z \to_{\inr\inr}Z+(Z'+Y'')$. Now $Z+(Z'+Y'')$ is also a
  coproduct of $Z+Z'$ and~$Y''$, with evident injections; so
  $\inl\colon  Z \to_{\inr\inr}Z+(Z'+Y'')$ is equivalent to
  $\inl\inl\colon  Z \to_{\inr}(Z+Z')+Y''$, which is an instance
  of~\textbf{(trv)}.
\end{proof}
\noindent Rule \textbf{(wkn)} is a weakening principle: If a program
is guarded in some summand then it is guarded in any subsummand of
that summand. Analogously, we obtain stability of guardedness under isomorphisms:
\begin{proposition}
The rule 
{\upshape
\begin{displaymath}
  \textbf{(iso)}\quad\frac{~f\colon X\to_{\sigma} Y\qquad h\colon Y\iso Z}{~h\comp f\colon X\to_{h\sigma} Z~},	
\end{displaymath}
}
is derivable in the calculus of Figure~\ref{fig:guard}.
\end{proposition}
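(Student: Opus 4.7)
The plan is to mimic the derivation of \textbf{(wkn)} given just above: decompose $h$ according to the coproduct structure on $Y$ induced by $\sigma$, transport that structure to $Z$ along the isomorphism, and reduce to single instances of \textbf{(cmp)} and \textbf{(trv)}.

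Concretely, I would let $\bar\sigma$ be the complement of $\sigma$ and write $Y=Y''+Y'$ with $\bar\sigma=\inl$ and $\sigma=\inr$, so the hypothesis reads $f\colon X\to_{\inr} Y''+Y'$. Since $h$ is an isomorphism, $Z$ inherits a coproduct structure with injections $h\bar\sigma$ and $h\sigma$; in particular $h\sigma\colon Y'\cpto Z$ is a bona fide summand, with complement $h\bar\sigma$, and with respect to this presentation $h=[h\bar\sigma, h\sigma]$.

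The main step is then a single application of \textbf{(cmp)} to $h\comp f=[h\bar\sigma, h\sigma]\comp f$: on the guarded $Y'$-branch we put $h\sigma\colon Y'\to Z$, which needs no guardedness requirement, while on the unguarded $Y''$-branch we put $h\bar\sigma\colon Y''\to Z$, for which we must establish $h\sigma$-guardedness. This residual obligation $h\bar\sigma\colon Y''\to_{h\sigma} Z$ follows immediately from \textbf{(trv)} applied to $\id_{Y''}$: under the transported coproduct structure on $Z$, the morphism $h\bar\sigma$ is precisely the first injection $\inj_1\comp\id_{Y''}\colon Y''\to Y''+Y'$, with $\inj_2=h\sigma$.

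I do not anticipate any real obstacle. The only mild subtlety is the legitimacy of freely replacing $Z$ by its alternative coproduct presentation via $h\bar\sigma, h\sigma$, but this manoeuvre is exactly analogous to the one in the displayed \textbf{(wkn)} proof, where $Z+(Z'+Y'')$ is reassociated to $(Z+Z')+Y''$; it is a purely category-theoretic rewriting that the axioms in Figure~\ref{fig:guard} are evidently insensitive to.
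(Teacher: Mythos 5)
Your proof is correct and coincides with the paper's own argument: both transport the coproduct structure $Y=\bar\sigma+\sigma$ to $Z$ along $h$, write $h$ as the copair $[h\bar\sigma,h\sigma]$ with respect to that structure, and conclude by one application of \textbf{(cmp)} whose side condition $h\bar\sigma\colon Y''\to_{h\sigma}Z$ is discharged by \textbf{(trv)}. No gaps.
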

\begin{proof}
Let $\bar\sigma\colon W\to Y$ be the complement of $\sigma$, thus
$Y=W+Y'$, $\sigma=\inr$ and $\bar\sigma=\inl$. Now, $Z$ is a coproduct of $W$ and 
and $Y'$ with $h\inl\colon W\to Z$ and $h\inr\colon Y'\to Z$ as the coproduct injections, and 
$h$ is the copair of $h\inl$ and $h\inr$ w.r.t.\ this coproduct structure. The 
rule in question now follows from~\textbf{(cmp)}, using the fact that by~\textbf{(trv)}, 
$h\inl$ is $h\inr$-guarded. 
\end{proof}

\noindent We write $f\colon X\to_{i_1,\ldots,i_k} X_1+\ldots+X_n$ as a
shorthand for $f\colon X\to_\sigma X_1+\ldots+X_n$ with
$\sigma=[\inj_{i_1},\ldots,\inj_{i_k}]\colon X_{i_1}+\ldots+X_{i_k}\cpto
X_1+\ldots+X_n$.
More generally, we sometimes need to refer to components of some
$X_{i_j}$. This amounts to replacing the corresponding~$i_j$ with a
sequence of pairs $i_j n_{j,m}$, and $\inj_{i_j}$ with
$\inj_{i_j}[\inj_{n_{j,1}},\ldots,\inj_{n_{j,k_j}}]$, so, e.g.\ we
write $f\colon X\to_{12,2} (Y+Z)+Z$ to mean that $f$ is
$[\inj_1\inj_2,\inj_2]$-guarded. Where coproducts $Y+Z$ etc.\ appear
in the rules, we mean any coproduct, not just some selected coproduct.

Recall that we have defined the notion of guardedness as a certain
relation between morphisms and summands. Clearly, the \emph{greatest}
such relation is the one declaring all morphisms to be
$\sigma$-guarded for all~$\sigma$.  We call categories (or monads)
equipped with this notion of guardedness \emph{totally guarded}.  It
turns out we also always have a \emph{least} guardedness relation
(originally called \emph{trivial}~\cite{GoncharovSchroderEtAl17}):
\begin{definition}[Vacuous guardedness]
  A morphism $f\colon X\to Y$ is \emph{vacuously
    $\sigma$-guarded} for ${\sigma\colon Z\cpto Y}$ if $f$ factors through
  the coproduct complement $\bar\sigma$ of~$\sigma$.
\end{definition}
\noindent Intuitively, $f$ is vacuously guarded in ${\sigma\colon Z\cpto Y}$
if $f$ does not output anything via the summand~$Z$; observe that by
the \textbf{(trv)} rule, vacuous guardedness always implies
guardedness. Formally, we have:
\begin{prop}\label{prop:guard_triv}
  By taking the abstractly guarded morphisms to be the vacuously
  guarded morphisms, we obtain the least guardedness relation making
  the given category into a guarded category.
\end{prop}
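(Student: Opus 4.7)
The plan is to verify two things: first, that vacuous guardedness actually is a guardedness relation (i.e., satisfies \textbf{(trv)}, \textbf{(par)}, and \textbf{(cmp)}), and second, that any other guardedness relation on $\BC$ contains it, so that it is the least.

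For the second part, I would use \textbf{(trv)} directly. Given a summand $\sigma\colon Z \cpto Y$ with complement $\bar\sigma\colon W\cpto Y$, present $Y$ as the coproduct $W+Z$ with $\bar\sigma = \inj_1$ and $\sigma = \inj_2$. A vacuously $\sigma$-guarded morphism $f\colon X\to Y$ factors as $f = \bar\sigma\comp f' = \inj_1\comp f'$ for some $f'\colon X\to W$. Then \textbf{(trv)} immediately gives $f = \inj_1\comp f'\colon X\to_{\inj_2}W+Z$, i.e.\ $f\colon X\to_\sigma Y$ in any guardedness relation. Hence vacuous guardedness is contained in every guardedness relation.

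For the first part, I treat the three rules in turn; each reduces to a straightforward diagram chase using the universal property of coproducts. For \textbf{(trv)}, the morphism $\inj_1\comp f$ factors through $\inj_1 = \overline{\inj_2}$ by construction. For \textbf{(par)}, if $f$ and $g$ both factor through $\bar\sigma$ as $f = \bar\sigma\comp f'$ and $g = \bar\sigma\comp g'$, then $[f,g] = \bar\sigma\comp [f',g']$ factors through $\bar\sigma$ too. For \textbf{(cmp)}, assume $f\colon X\to Y+Z$ factors through $\inj_1 = \overline{\inj_2}$ as $f = \inj_1\comp f'$, and $g\colon Y\to V$ factors through $\bar\sigma$ as $g = \bar\sigma\comp g'$. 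Then $[g,h]\comp f = [g,h]\comp \inj_1\comp f' = g\comp f' = \bar\sigma\comp g'\comp f'$, which factors through $\bar\sigma$ as required.

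None of these steps looks genuinely obstructive: the content is entirely in keeping track of which summand plays the role of $\sigma$ versus $\bar\sigma$ in each rule, and this is purely bookkeeping with coproduct injections. The only mildly subtle point is the case \textbf{(cmp)} — one must notice that the behaviour of $h$ is irrelevant because $f$ already factors through the complementary summand, which is exactly the intuitive content of the rule.
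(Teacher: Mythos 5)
Your proposal is correct and follows essentially the same route as the paper's proof: minimality is immediate from \textbf{(trv)}, and closure under the three rules is verified by exactly the same factorization computations (in particular your \textbf{(cmp)} calculation $[g,h]\comp\inj_1\comp f' = g\comp f' = \bar\sigma\comp g'\comp f'$ is the paper's argument verbatim). Nothing is missing.
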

\begin{proof}
  As indicated above, it is immediate from \textbf{(trv)} that every
  vacuously guarded morphism is guarded under any guardedness relation
  making the category into a guarded category. It remains to show that
  vacuous guardedness is closed under the rules in
  Figure~\ref{fig:guard}; in the following we write
  $f\colon X\to_{\sigma} Y$ to mean that $f$ is vacuously $\sigma$-guarded.
\begin{citemize}
\item\textbf{(trv)}:   Immediate from the definition of vacuous
  guardedness.
\item\textbf{(cmp)}:   Suppose $f\colon X\to_{2} Y+Z$, i.e.\
  $f = \inj_1\comp w$ for some $w\inl\colon X\to Y$. Now, for
  $g\colon Y\to_{\sigma} V$ and $h\colon Z\to V$,
  $[g,h]\comp f= g\comp w$.  Let $\bar\sigma\colon W\cpto V$ be the
  complement of $\sigma\colon V'\cpto V$. By assumption, $g$ factors through
  $\bar\sigma$, i.e.\ $w=\bar\sigma\comp u$ for some~$u$. Therefore
  $[g,h]\comp f=\bar\sigma\comp u\comp w$, which by definition
  means that $[g,h]\comp f$ is vacuously $\sigma$-guarded.
\item \textbf{(par)}:   Suppose that $f\colon X\to_{\sigma} Z$ and $g\colon Y\to_{\sigma} Z$, 
i.e.\ $f= \bar\sigma\comp f'$ and $g=\bar\sigma\comp g'$ for some $f'\colon X\to Z'$ and $g'\colon Y\to Z'$ 
where $\sigma\colon Z'\cpto Z$ and $\bar\sigma$ is the coproduct complement of~$\sigma$. Then, of course,
$[f,g] = \bar\sigma\comp [f',g']$, i.e.\ $[f,g]\colon X+Y\to_{\sigma} Z$. 
\qed
\end{citemize}  
\noqed\end{proof}

\noindent We call a guarded category (or monad) \emph{vacuously
  guarded} if its notion of abstract guardedness is given by vacuous
guardedness. We note briefly how vacuous guardedness instantiates to
Kleisli categories:
\begin{lem}
  Let $\BBT$ be a monad on a category~$\BC$. A morphism
  $f\colon X\to T(Y+Z)$ is vacuously $\inr$-guarded iff $f$ factors through
  $T\inl$ in~$\BC$.
\end{lem}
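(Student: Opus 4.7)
The plan is pure definition-chasing: unfold what ``vacuously $\inr$-guarded'' means in the Kleisli category and translate it back through the Kleisli triple equations into $\BC$. The key facts I will use are that the coproduct $Y+Z$ in $\BC_\BBT$ is inherited from $\BC$, with Kleisli injections $\eta\inl\colon Y\to T(Y+Z)$ and $\eta\inr\colon Z\to T(Y+Z)$, and the standard identity $T h = (\eta\comp h)^{\klstar}$.

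First I would observe that the coproduct complement of the summand $\inj_2 = \eta\inr\colon Z\cpto Y+Z$ in $\BC_\BBT$ is the summand $\inj_1 = \eta\inl\colon Y\cpto Y+Z$. Hence, by definition of vacuous guardedness, $f\colon X\to T(Y+Z)$ is vacuously $\inr$-guarded iff there exists a Kleisli morphism $g\colon X\to TY$ such that $f$ equals the Kleisli composite of $g$ with $\eta\inl$, i.e.
\begin{equation*}
 f \;=\; (\eta\comp\inl)^{\klstar}\comp g.
\end{equation*}
Using $(\eta\inl)^{\klstar}=T\inl$, this is exactly the statement that $f = T\inl\comp g$ in $\BC$, which is factorization of $f$ through $T\inl$.

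For the converse direction I would reverse the same chain: given $f = T\inl\comp g$ in $\BC$ with $g\colon X\to TY$, set $g$ to be the witnessing Kleisli morphism and use $T\inl=(\eta\inl)^{\klstar}$ to rewrite $f$ as the Kleisli composite $(\eta\inl)^{\klstar}\comp g$, which is precisely a Kleisli-factorization of $f$ through $\eta\inl$, i.e.\ through $\inl\colon Y\cpto Y+Z$ in $\BC_\BBT$; by definition, $f$ is therefore vacuously $\inr$-guarded.

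There is no real obstacle here; the only subtlety worth spelling out once is the distinction between factoring through $\inl$ in the Kleisli category (which involves Kleisli composition) and factoring through $T\inl$ in $\BC$ (which is ordinary composition), and the identification $(\eta\inl)^{\klstar}=T\inl$ that bridges the two.
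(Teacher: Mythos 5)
Your proof is correct and follows exactly the paper's (very terse) argument: the paper's proof simply notes that the left injection into $Y+Z$ in $\BC_\BBT$ is $\eta\inl$ and that $(\eta\inl)^\klstar = T\inl$, which is precisely the bridge you spell out. Your version just makes the definition-unfolding explicit in both directions.
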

\begin{proof}
  Immediate from the fact that the left injection into the coproduct
  $Y+Z$ in the Kleisli category of~$\BBT$ is $\eta\inl$, and
  $(\eta\inl)^\klstar=T\inl$.
\end{proof}
\noindent The notion of abstract guardedness can thus vary on a large
spectrum from \emph{vacuous guardedness} to \emph{total guardedness},
possibly detaching it from the initial intuition on guardedness. It is
for this reason that we introduced the qualifier \emph{abstract} into
the terminology; for brevity, we will omit this qualifier in the
sequel in contexts where no confusion is likely, speaking only of
guarded monads, guarded morphisms etc.
\begin{rem}\label{rem:multi_guard}
  One subtle feature of our axiomatization is that it allows for
  seemingly counterintuitive situations when a morphism is
  individually guarded in two disjoint summands, but not in their
  union. This can be illustrated by the following example.  Let $\BBT$
  be the algebraic monad induced by the theory of abelian groups
  presented, in additive notation, by binary~$-$ alone. In this
  presentation, the zero element is presented by terms of the form
  $x-x$; the theory thus differs slightly from the more standard
  presentation in that there is no zero element in the absence of
  variables, i.e.\ $T\iobj=\iobj$. We equip~$\BBT$ with vacuous
  guardedness. Now let $z:1\to T(\{x\}+\{y\})$ be the map that picks
  out the zero element. This morphism is both $\inj_1$-guarded and
  $\inj_2$-guarded, i.e.\ it factors both through $T\inr$ and through
  $T\inl$, since we can write the zero element both as $y-y$ and as
  $x-x$. However,~$z$ fails to be $\id$-guarded, because it does not
  factor through $T\iobj=\iobj$.

Note that, conversely, collective guardedness does always imply
individual guardedness, for by~\textbf{(wkn)}, $f\colon X\to_{1,2} Y+Z$
implies both $f\colon X\to_{1} Y+Z$ and $f\colon X\to_{2} Y+Z$.

\end{rem}
\noindent As usual, guardedness serves to identify systems of
equations that admit solutions according to some global principle:
\begin{definition}[Guarded (pre-)iterative category/monad]
  Given $f\colon X\to_2 Y+X$, we say that $f^\istar\colon X\to Y$ is a
  \emph{solution} of $f$ if $f^\istar$ satisfies the \emph{fixpoint
    identity} $f^\istar = [\id,f^\istar]\comp f$. A guarded category
  is \emph{guarded pre-iterative} if it is equipped with an
  \emph{iteration operator} that assigns to every $\inr$-guarded
  morphism $f\colon X\to_2 Y+X$ a solution $f^\istar$ of $f$.  If every such
  $f$ has a unique solution, we call the category \emph{guarded
    iterative}.

  A guarded monad is \emph{guarded (pre-)iterative} if its Kleisli
  category is guarded \mbox{(pre-)}it\-er\-ative.  A guarded monad morphism
  $\alpha\colon \BBT\to\BBS$ between guarded pre-iterative monads
  $\BBT,\BBS$ is \emph{iteration-preserving} if
  $\alpha f^\istar = (\alpha f)^\istar$ for every $f\colon X\to_2 T(Y+X)$.
\end{definition}
\noindent We can readily check that the iteration operator preserves
guardedness:
\begin{prop}\label{prop:guard_preserve}
  Let $\BC$ be a guarded pre-iterative category, let $\sigma\colon Z\cpto Y$, and let $f\colon X\to_{\sigma+\id} Y+X$. Then
  $f^{\istar}\colon X\to_\sigma Y$.
\end{prop}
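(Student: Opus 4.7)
The plan is to exploit the fixpoint identity $f^\istar=[\id_Y,f^\istar]\comp f$ and to apply rule~\textbf{(cmp)} after re-bracketing the coproduct $Y+X$ so that the complement of $\sigma$ is separated out as an ``unguarded'' summand.

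First I would let $\bar\sigma\colon Y''\cpto Y$ be the coproduct complement of $\sigma\colon Z\cpto Y$, identifying $Y$ with $Z+Y''$. I would then introduce the canonical isomorphism $h\colon Y+X\iso Y''+(Z+X)$ determined by $h\comp\inl\comp\bar\sigma=\inl$, $h\comp\inl\comp\sigma=\inr\inl$, and $h\comp\inr=\inr\inr$. A short calculation on the three injections shows that under $h$ the summand $\sigma+\id\colon Z+X\cpto Y+X$ corresponds to $\inr\colon Z+X\cpto Y''+(Z+X)$, so by~\textbf{(iso)} the hypothesis $f\colon X\to_{\sigma+\id} Y+X$ yields $h\comp f\colon X\to_{\inr} Y''+(Z+X)$.

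Next I would set $k=[\bar\sigma,[\sigma,f^\istar]]\colon Y''+(Z+X)\to Y$ and verify on summands that $k\comp h=[\id_Y,f^\istar]$, which is immediate from the description of $h$ above. Since $\bar\sigma$ is the coproduct complement of~$\sigma$, an instance of~\textbf{(trv)} (together with a swap-of-injections step supplied by~\textbf{(iso)}) gives $\bar\sigma\colon Y''\to_\sigma Y$. Rule~\textbf{(cmp)} now applies to the triple consisting of $h\comp f\colon X\to_{\inr}Y''+(Z+X)$ as the $\inr$-guarded premise, $\bar\sigma\colon Y''\to_\sigma Y$ as the $\sigma$-guarded leg, and $[\sigma,f^\istar]\colon Z+X\to Y$ as the other leg, and it produces $k\comp h\comp f\colon X\to_\sigma Y$. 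By the fixpoint identity this composite equals $[\id_Y,f^\istar]\comp f=f^\istar$, which is the desired conclusion.

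The only genuine bookkeeping is checking that $h$ maps the summand $\sigma+\id$ to $\inr$ and that the copair $[\id_Y,f^\istar]$ factors through $h$ as $k$; both reduce to elementary coproduct triangle identities, so I do not anticipate any substantive difficulty in the argument.
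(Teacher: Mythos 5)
Your proof is correct and follows essentially the same route as the paper's: re-bracket $Y+X$ as $\bar\sigma$'s summand plus $(Z+X)$ via the associativity isomorphism, transport guardedness along it with \textbf{(iso)}, note via \textbf{(trv)} that $\bar\sigma$ is $\sigma$-guarded, and conclude with \textbf{(cmp)} and the fixpoint identity. The only difference is presentational (you keep $\sigma$ abstract where the paper normalizes to $Y=Z'+Z$ with $\sigma=\inr$), and the ``swap-of-injections'' step you mention is not even needed since the rules apply to arbitrary coproduct structures.
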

\begin{proof}
Let $\bar\sigma\colon Z'\to Y$ be the complement of $\sigma\colon Z\to Y$, so we proceed under 
the assumption that $Y = Z' + Z$, $\sigma=\inr$ and $\bar\sigma=\inl$. Then 
\begin{align*}
f^\istar =&\; [\id,f^\istar] f
=[[\inl,\inr],f^\istar]\comp f
=[\inl,[\inr,f^\istar]]\comp [[\inl,\inr\inl],\inr\inr]\comp f.
\end{align*}
By assumption, $f\colon X\to_{\inr+\id} (Z'+Z)+X$. The morphism 
$h=[[\inl,\inr\inl],\inr\inr]$ is simply an associativity isomorphism, for which
$h\comp (\inr+\id) = \inr$, hence by~\textbf{(iso)},
$[[\inl,\inr\inl],\inr\inr]\comp f\colon X\to_2 Z'+(Z + X)$. Since by~\textbf{(trv)},
$\inl$ is $\inr$-guarded, we are done by~\textbf{(cmp)}.
\end{proof}
\noindent We note that for guarded morphisms into guarded
\emph{iterative} monads, preservation of iteration is automatic:
\begin{lemma}\label{lem:iter-preserve}
  Let $\alpha\colon \BBT\to\BBS$ be a guarded morphism between guarded
  pre-iterative monads~$\BBT$, $\BBS$ with $\BBS$ being guarded
  iterative. Then $\alpha$ is iteration-preserving.
\end{lemma}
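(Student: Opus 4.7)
The plan is to exploit the uniqueness of solutions in the guarded iterative monad $\BBS$. Since $\BBS$ has \emph{unique} solutions to guarded equations, it will suffice to show two things: (i) that $\alpha f$ is itself guarded so that $(\alpha f)^\istar$ is defined, and (ii) that $\alpha f^\istar$ satisfies the fixpoint identity for $\alpha f$ in $\BBS$. Uniqueness then forces $\alpha f^\istar = (\alpha f)^\istar$.

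For (i), note that $f\colon X\to_2 T(Y+X)$ means $f$ is $\inr$-guarded as a Kleisli morphism in $\BC_\BBT$, and by the definition of a guarded monad morphism, $\alpha f\colon X\to_2 S(Y+X)$ in $\BC_\BBS$, so that $(\alpha f)^\istar$ is indeed defined.

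For (ii), I would start from the fixpoint identity for $f$ in $\BC_\BBT$, which in terms of the base category reads $f^\istar = [\eta^\BBT, f^\istar]^{\klstar_\BBT}\comp f$. Applying $\alpha$ on the left and using the standard monad morphism law $\alpha\comp g^{\klstar_\BBT} = (\alpha\comp g)^{\klstar_\BBS}\comp \alpha$ together with $\alpha\comp\eta^\BBT = \eta^\BBS$ yields
\begin{align*}
\alpha f^\istar
  &= \alpha\comp [\eta^\BBT, f^\istar]^{\klstar_\BBT}\comp f\\
  &= (\alpha\comp [\eta^\BBT, f^\istar])^{\klstar_\BBS}\comp \alpha\comp f\\
  &= [\eta^\BBS, \alpha f^\istar]^{\klstar_\BBS}\comp \alpha f,
\end{align*}
which is exactly the fixpoint identity for $\alpha f^\istar$ with respect to $\alpha f$ in $\BC_\BBS$.

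Combining (i) and (ii), uniqueness of solutions in $\BBS$ forces $\alpha f^\istar = (\alpha f)^\istar$, as desired. There is no real obstacle here; the only subtle point is remembering that the abstract fixpoint equation $f^\istar = [\id, f^\istar]\comp f$ lives in the Kleisli category, so unfolding it into $\BC$ via Kleisli lifting is what lets the monad morphism laws do the work.
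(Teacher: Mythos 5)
Your proposal is correct and is essentially the paper's own argument: apply $\alpha$ to the fixpoint identity for $f^\istar$, use the monad morphism laws to obtain the fixpoint identity for $\alpha f$, and conclude by uniqueness of solutions in $\BBS$. Your explicit step (i), checking that $\alpha f$ is guarded so that $(\alpha f)^\istar$ is defined, is left implicit in the paper but is a worthwhile clarification rather than a divergence.
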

\begin{proof}
Indeed, given $f\colon X\to_2 T(Y+X)$, 
\begin{flalign*}
&& \alpha f^\istar 
&  \;= \alpha\comp [\eta, f^\istar]^\klstar f &\by{fixpoint identity for $f^\istar$}\\
&&&\;= [\eta, \alpha f^\istar]^\klstar\alpha f &\by{monad morphism}
\end{flalign*}
but this equation has $(\alpha f)^\istar$ as its unique solution, 
hence $(\alpha f)^\istar = \alpha f^\istar$.
\end{proof}
\noindent In vacuously guarded categories, there is effectively
nothing to iterate, so we have
\begin{prop}\label{prop:triv-iter}
Every vacuously guarded category is guarded iterative.
\end{prop}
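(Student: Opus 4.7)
The plan is to unfold the definitions: in a vacuously guarded category, an $\inr$-guarded morphism $f\colon X\to_{2} Y+X$ is, by definition, one that factors through the complement $\inl\colon Y\to Y+X$ of $\inr$. So fix such an $f$, write $f = \inl\comp g$ for some $g\colon X\to Y$, and propose $f^\istar := g$.

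Next I would verify the fixpoint identity. This is a one-line calculation:
\begin{align*}
[\id, f^\istar]\comp f \;=\; [\id, g]\comp \inl\comp g \;=\; \id\comp g \;=\; g \;=\; f^\istar,
\end{align*}
using only the coproduct equation $[\id,g]\comp\inl=\id$.

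For uniqueness, assume $h\colon X\to Y$ is any solution. Then, using $f=\inl\comp g$ again,
\begin{align*}
h \;=\; [\id, h]\comp f \;=\; [\id, h]\comp\inl\comp g \;=\; g,
\end{align*}
so $h = g = f^\istar$. Observe that this argument also shows a posteriori that the mediating $g$ in the factorization is uniquely determined, so the iteration operator is well-defined independently of the chosen factorization. I expect no real obstacle here: the statement is essentially a tautology once the definition of vacuous guardedness is unwound, and the only thing to double-check is that we never need $\inl$ to be a monomorphism (we do not, since the identities above are forced by the fixpoint equation itself rather than by cancellation against~$\inl$).
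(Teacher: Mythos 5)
Your proof is correct and follows essentially the same route as the paper's: factor the vacuously guarded $f$ as $\inl\comp g$, take $f^\istar=g$, and derive both existence and uniqueness from the coproduct identity $[\id,h]\comp\inl=\id$. The extra remarks (well-definedness of $g$ a posteriori, no need for $\inl$ to be monic) are accurate but not needed.
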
 
\begin{proof}
  Let $f\colon X\to_2 Y+X$, which by assumption means that $f=\inl g$ for
  some $g$. Then for any $f^\istar$ satisfying
  $f^\istar = [\id,f^\istar]\comp f$, we have
  $f^\istar = [\id,f^\istar]\comp f = [\id,f^\istar]\comp \inl g = g$,
  which proves uniquenes of solutions. Moreover,
  $[\id,g]f=[\id,g]\inl g=g$, which shows existence.
\end{proof}

\noindent We now revisit our motivating considerations on process
algebra from the beginning of this section.
\begin{example}[Generalized processes]\label{ex:gen-proc}
  A natural semantic domain for finitely branching possibly infinite
  processes under strong bisimilarity with final results in $X$ and
  atomic actions in $A$ is the final coalgebra
  $\nu\gamma.\,\FSet(X+A\times\gamma)$ in the category of sets and
  functions, where~$\FSet$ is the finite powerset
  monad. Alternatively, we can view inhabitants of this domain as
  equivalence classes of possibly non-well-founded terms over
  variables from $X$, which can also be thought of as process names,
  and over the operations $+$ of non-deterministic choice, deadlock
  $\iobj$ and action prefixing $a.\,(\argument)$. The latter view is
  useful for syntactic presentations of those processes that happen to
  be finite. Systems of recursive process definitions are naturally
  represented by morphisms
  $f\colon X\to\nu\gamma.\,\FSet((Y+X)+A\times\gamma)$ where $X$ contains
  process names being defined and $Y$ contains the remaining process
  names that can occur freely. For example, the system
\begin{align}\label{eq:spec-ex}
x = y + a.\, x
\end{align}
corresponds to the following data: 
$X = \{x\}$, $Y =\{y\}$, $A = \{a\}$, and
\begin{align*}
f(x) =&\, \tuo\{\inl y,\inr\brks{a,\tuo\{\inl x\}}\}
\end{align*}
(eliding the isomorphism $Y+X\cong\{x,y\}$). The generalization
arising from this example is as follows: Given an endofunctor $\Sigma$
on a co-Cartesian category $\BC$ and a monad $\BBT$ such that final
coalgebras $T_{\Sigma} X = \nu\gamma.\,T(X+\Sigma\gamma)$ exist, we
obtain a corresponding monad $\BBT_{\Sigma}$ called the
\emph{generalized coalgebraic resumption monad transform} of
$\BBT$. As above, we can view morphisms $f\colon X\to T_{\Sigma} (Y+X)$ as
systems of recursive equations for \emph{generalized processes} with
$\BBT$ capturing the relevant computational effect (such as
non-determinism) and $\Sigma$ capturing \emph{atomic steps} (such as
actions $\Sigma = A\times\argument$).
\end{example}
\noindent Abstract guardedness can be used to effectively distinguish
those systems $f\colon X\to T_{\Sigma} (Y+X)$ for which we can define
\emph{desirable} solutions $f^\istar\colon X\to T_{\Sigma} Y$. For the
moment, we proceed under the assumption that \emph{desirable} means
\emph{unique}, for instance~\eqref{eq:spec-ex} has the unique solution
$x = y + a.\, (y + a.\,(\ldots))$. Let us recall the existing approach
to defining guardedness in this context via completely iterative
monads~\cite{Milius05}, which are based on idealized
monads~\cite[Definition 5.5]{Milius05}. To make this precise, recall
some definitions.
\begin{definition}[Monad modules, idealized monads]\label{def:module}
A \emph{module} over a monad~$\BBT$ on~$\BC$ is a pair $(\module T,
\argument^\modstar)$, where~$\module T$ is an endomap over the objects of~$\BC$,
while the lifting~$(\argument)^\modstar$ is a map $\Hom(X,TY)\to\Hom(\module T
X,\module T Y)$ such that the following laws are satisfied:
\begin{align*}
\eta^\modstar = \id,
&&
g^\modstar f^\modstar = (g^\klstar f)^\modstar.
\end{align*}
Note that $\module T$ extends to an endofunctor by taking
$\module T f = (\eta f)^\modstar$.  A \emph{module-to-monad morphism}
is a natural transformation $\modm\colon \module T \to T$ that satisfies
$\modm f^\modstar = f^\klstar \modm$. We call the tuple
$(\BBT, \module T, \argument^\modstar, \modm)$ an \emph{idealized
  monad}; when no confusion is likely, we refer to these data just
as~$\BBT$.  %
An \emph{idealized monad morphism} between idealized monads
$((T,\eta^T,\argument^\klstar), \module T, \argument^\modstar, \modm)$
and
$((S,\eta^S,\argument^\kklstar), \moduleS S, \argument^\modstarS,
\modm')$
is a pair $(\alpha, \beta)$ where $\alpha\colon  T \to S$ is a monad
morphism while $\beta\colon  \module T \to \moduleS S$ is a natural
transformation satisfying $\alpha\modm = \modm'\beta$ and
$\beta f^\klstar = f^\kklstar \beta$.
\end{definition}
\begin{example}
  It follows from previous
  results~\cite[Corollary~3.13]{PirogGibbons14} that the monad
  $\BBT_\Sigma$ from Example~\ref{ex:gen-proc} is idealized when
  equipped with the module $T\Sigma T_{\Sigma}$. In the concrete case
  where $\BBT=\FSet$ and $\Sigma = A\times(\argument)$, i.e.\
  $T_\Sigma X=\nu\gamma.\,\FSet((Y+X)+A\times\gamma)$, the module
  $\FSet(A\times T_\Sigma)$ contains processes that consist of
  (finitely many) non-deterministic branches all of which begin with
  an action.
\end{example}
\noindent Milius~\cite{Milius05} defines guardedness only for
equation morphisms, i.e.\ morphisms of type $X\to T(Y+X)$. Extending
this notion in the obvious way to morphisms of type $X\to T(Y+Z)$ as
required in our framework, we obtain the following definition:
\begin{definition}[Completely iterative monads]\label{def:comp-iter}
  Given an idealized monad
  $(\BBT, \module T, \argument^\modstar, \modm)$, a morphism
  $f\colon  X \to T(Y + Z)$ is \emph{guarded} if it factors via
  $[\eta \inl, \modm]\colon  Y + \module T (Y+Z) \to T(Y+Z)$. The monad
  $\BBT$ is \emph{completely iterative} if every guarded
  $f\colon X\to T(Y+X)$ in this sense has a unique solution.
\end{definition}
\noindent It turns out that the above notion of guardedness is not an
instance of abstract guardedness; specifically, it does not satisfy
our~\textbf{(par)} rule. Equation~\eqref{eq:spec-ex} provides a good
illustration of what happens: although both terms $y$ and $a.\,x$ are
guarded in $x$, we cannot factor the corresponding term
$X \to T(Y + X)$ through any
$[\eta \inl, \modm]\colon  Y + \module T (Y+X) \to T(Y+X)$ due to the
top-level nondeterministic choice.

 Fortunately, we can fix this by
noticing that completely iterative monads actually support iteration
for a wider class of morphisms:
\begin{definition}
  Let $(\BBT, \module T, \argument^\modstar, \modm)$ be an idealized
  monad. Given $\sigma\colon Z\cpto Y$, we say that a morphism
  $f\colon  X \to T Y$ is \emph{weakly $\sigma$-guarded} if it factors
  through $[\eta \bar\sigma, \modm]^\klstar\colon  T(Y' + \module TY) \to TY$
  for a complement $\bar\sigma\colon Y'\cpto Y$ of $\sigma$.
\end{definition}
\noindent Since a morphism that factors as $[\eta \inl, \modm] f$ can be
rewritten as $[\eta \inl, \modm]^\klstar \eta f$, every guarded
morphism in an idealized monad is also weakly guarded.%
\begin{theorem}\label{thm:weak-guardedness}
  Let $(\BBT, \module T, \argument^\modstar, \modm)$ be an idealized
  monad. Then the following hold.
  \begin{enumerate}
  \item\label{item:weak-abstract} $\BBT$ becomes abstractly guarded
    when equipped with weak guardedness as the notion of abstract
    guardedness.
  \item\label{item:weak-solutions} If\/ $\BBT$ is completely iterative,
    then every weakly $\inr$-guarded morphism $f\colon X\to T(Y+X)$ has a
    unique solution. 
  \item\label{item:weak-monad-morphisms} If $(\alpha,\beta)$ is an idealized monad morphism, then $\alpha$ preserves weak guardedness. 
  \end{enumerate}
\end{theorem}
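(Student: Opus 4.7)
The plan is to verify the three items in order, concentrating the main effort on item~\eqref{item:weak-solutions}.

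For item~\eqref{item:weak-abstract}, I would check the three axioms in Figure~\ref{fig:guard} for weak guardedness on the Kleisli category of~$\BBT$. Rule~\textbf{(trv)} is immediate, since $\eta\inl\comp f = [\eta\inl,\modm]^\klstar\comp\eta\inl\comp f$ factors through $[\eta\inl,\modm]^\klstar$. Rule~\textbf{(par)} follows by copairing the witnesses of weak $\sigma$-guardedness given by the hypotheses: $f=[\eta\bar\sigma,\modm]^\klstar\comp f'$ and $g=[\eta\bar\sigma,\modm]^\klstar\comp g'$ yield $[f,g]=[\eta\bar\sigma,\modm]^\klstar\comp [f',g']$. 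Only \textbf{(cmp)} requires a real calculation: given weakly $\inr$-guarded $f = [\eta\inl,\modm]^\klstar\comp f'$ and weakly $\sigma$-guarded $g = [\eta\bar\sigma,\modm]^\klstar\comp g'$, I would use the module-to-monad identity $\modm\comp k^\modstar = k^\klstar\comp\modm$ and the Kleisli laws to rewrite
\begin{equation*}
[g,h]^\klstar\comp f \;=\; [\eta\bar\sigma,\modm]^\klstar\comp u^\klstar\comp f', \qquad u \;=\; [\,g',\ \eta\inr\comp [g,h]^\modstar\,],
\end{equation*}
which exhibits $[g,h]^\klstar\comp f$ as weakly $\sigma$-guarded.

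Item~\eqref{item:weak-solutions} is the main content. Given $f\colon X\to T(Y+X)$ with witness $f'\colon X\to T(Y+\module T(Y+X))$ so that $f = [\eta\inl,\modm]^\klstar\comp f'$, I would introduce the auxiliary equation
\begin{equation*}
\hat f \;=\; \modm \comp [\eta\inl, f']^\modstar \;\colon\; \module T(Y+X)\ \to\ T\bigl(Y+\module T(Y+X)\bigr),
\end{equation*}
which, by construction, factors through $\modm$ via the right coproduct injection of $Y+\module T(Y+\module T(Y+X))$ and so is guarded in the sense of Definition~\ref{def:comp-iter}. Complete iterativity of~$\BBT$ therefore furnishes a unique solution $\hat f^\istar\colon \module T(Y+X)\to TY$, and I would then \emph{define} the candidate solution $f^\istar = [\eta,\hat f^\istar]^\klstar\comp f'$. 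Verification of the fixpoint equation $f^\istar = [\eta,f^\istar]^\klstar\comp f$ reduces, after expanding $f$ and applying the module-to-monad law, to the intermediate identity $\hat f^\istar = \modm\comp [\eta,f^\istar]^\modstar$, which in turn follows by similarly expanding the fixpoint equation of~$\hat f^\istar$. For uniqueness, given any solution $u$ of $f$, I would set $\hat u = \modm\comp [\eta,u]^\modstar$; the same manipulations yield both $u = [\eta,\hat u]^\klstar\comp f'$ and that $\hat u$ solves~$\hat f$, whence $\hat u = \hat f^\istar$ by the uniqueness in~$\BBT$, and therefore $u = f^\istar$.

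Item~\eqref{item:weak-monad-morphisms} is a short computation. Starting from $f = [\eta^T\bar\sigma,\modm]^\klstar\comp f'$, applying $\alpha$ and using the monad-morphism identity $\alpha\comp k^\klstar = (\alpha\comp k)^\kklstar\comp\alpha$ together with $\alpha\comp\eta^T = \eta^S$ and $\alpha\comp\modm = \modm'\comp\beta$, I would obtain
\begin{equation*}
\alpha\comp f \;=\; [\eta^S\bar\sigma,\modm']^\kklstar\comp S(\id + \beta)\comp\alpha\comp f',
\end{equation*}
which exhibits $\alpha f$ as weakly $\sigma$-guarded in~$\BBS$.

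The principal obstacle is the design of the auxiliary equation~$\hat f$ in item~\eqref{item:weak-solutions}: once one has the idea to iterate not on $X$ but on $\module T(Y+X)$, and to recover the solution of $f$ as $[\eta,\hat f^\istar]^\klstar\comp f'$, both existence and uniqueness become careful but routine applications of the module-to-monad morphism law together with the Kleisli equations. The remaining items reduce to bookkeeping once the correct factorizations are pinned down.
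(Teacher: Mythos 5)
Your proposal is correct and follows essentially the same route as the paper's proof: the same factorization arguments for \textbf{(trv)}, \textbf{(par)}, and \textbf{(cmp)}, the same auxiliary guarded equation on $\module T(Y+X)$ (your $\hat f=\modm\comp[\eta\inl,f']^\modstar$ is, by the module-to-monad law, exactly the paper's $[\eta\inl,j]^\klstar\modm$), the same candidate solution $[\eta,\hat f^\istar]^\klstar\comp f'$ with the same uniqueness argument via $\modm\comp[\eta,u]^\modstar$, and the same computation for item~(3).
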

\noindent
That is, completely iterative monads are abstractly guarded iterative
monads w.r.t.\ weak guardedness.
\begin{proof}
\emph{\eqref{item:weak-abstract}: } We need to verify that weak
guardedness is closed under the rules from Definition~\ref{def:g-cat}.
\begin{citemize}
\item \textbf{(trv)}
Given a morphism $f\colon  X \to TY$, the following holds:
\begin{flalign*}
&& (T\inl)f
=&\; (\eta\inl)^\klstar f & \by{Kleisli}\\
&&=&\; ([\eta\inl,\modm]\inl)^\klstar f & \by{coproducts}\\
&&=&\; [\eta\inl,\modm]^\klstar (T\inl) f & \by{Kleisli}
\end{flalign*}
\item \textbf{(cmp)}
Given $f\colon  X \to_2 T(Y+Z)$, $g\colon  Y \to_\sigma TV $, and $h\colon  Y \to TV$, assume that $f$ factors as $[\eta\inl, \modm]^\klstar f'$, while $g$ factors as $[\eta\bar\sigma, \modm]^\klstar g'$. Then, the following holds:
\begin{flalign*}
&& [g, h]^\klstar f
=&\; [[\eta\bar\sigma, \modm]^\klstar g', h]^\klstar [\eta\inl, \modm]^\klstar f' & \\
&&=&\; ([[\eta\bar\sigma, \modm]^\klstar g', h]^\klstar [\eta\inl, \modm])^\klstar f' & \by{Kleisli}\\
&&=&\; [[[\eta\bar\sigma, \modm]^\klstar g', h]^\klstar \eta\inl, [[\eta\bar\sigma, \modm]^\klstar g', h]^\klstar \modm]^\klstar f' & \by{coproducts}\\
&&=&\; [[[\eta\bar\sigma, \modm]^\klstar g', h]\inl, [[\eta\bar\sigma, \modm]^\klstar g', h]^\klstar \modm]^\klstar f' & \by{Kleisli}\\
&&=&\; [[\eta\bar\sigma, \modm]^\klstar g', [[\eta\bar\sigma, \modm]^\klstar g', h]^\klstar \modm]^\klstar f' & \by{coproducts}\\
&&=&\; [[\eta\bar\sigma, \modm]^\klstar g', \modm [[\eta\bar\sigma, \modm]^\klstar g', h]^\modstar]^\klstar f' & \by{module-to-monad morphism}\\
&&=&\; [[\eta\bar\sigma, \modm]^\klstar g', [\eta\bar\sigma, \modm]\inr [[\eta\bar\sigma, \modm]^\klstar g', h]^\modstar]^\klstar f' & \by{coproducts}\\
&&=&\; [[\eta\bar\sigma, \modm]^\klstar g', [\eta\bar\sigma, \modm]^\klstar \eta \inr [[\eta\bar\sigma, \modm]^\klstar g', h]^\modstar]^\klstar f' & \by{Kleisli}\\
&&=&\; ([\eta\bar\sigma, \modm]^\klstar [g', \eta \inr [[\eta\bar\sigma, \modm]^\klstar g', h]^\modstar])^\klstar f' & \by{coproducts}\\
&&=&\; [\eta\bar\sigma, \modm]^\klstar [g', \eta \inr [[\eta\bar\sigma, \modm]^\klstar g', h]^\modstar]^\klstar f'. & \by{Kleisli}\\
\end{flalign*}
\item \textbf{(par)}
Given a morphism $f\colon  X \to_{\sigma} TZ$ and $Y \to_{\sigma} TZ$ assume that $f$ factors as $[\eta\inl, \modm]^\klstar f'$, and $g$ factors as $[\eta\inl, \modm]^\klstar g'$. Then, the following holds:
\begin{flalign*}
&& [f,g]
=&\; [[\eta\bar\sigma,\modm]^\klstar f', [\eta\bar\sigma,\modm]^\klstar g'] & \by{guardedness}\\
&&=&\; [\eta\bar\sigma,\modm]^\klstar [f', g']. & \by{coproducts}
\end{flalign*}
\end{citemize}

\emph{\eqref{item:weak-solutions}: } Let
$f = [\eta \inl, \modm]^\klstar j$ for a morphism
$j \colon X \to T(Y + \module T(Y + X))$. We define an auxiliary morphism
$g = [\eta\inl, j]^\klstar \modm \colon \module T(Y+X) \to T(Y + \module
T(Y+X))$.
Note that $g$ is guarded (in the sense of~\cite{Milius05}), since it
can be rewritten as follows:
\begin{flalign*}
&&\mediator{j}{ \eta\inl}^\klstar \modm
=&\; \modm \mediator{j}{ \eta\inl}^\modstar&\by{module-to-monad morphism}\\
&&=&\; \mediator{\modm}{ \eta\inl} \inr \mediator{j}{ \eta\inl}^\modstar.&\by{coproducts}
\end{flalign*}
Thus, $g$ has a unique solution $g^\istar\colon  \module T(Y+X) \to TY$. We
use it to define a solution to $f$, namely
$f^\iistar = [\eta, g^\istar]^\klstar j$. It is left to show that it
is indeed a solution and that it is unique:

\begin{itemize}
\item \textit{Solution:}
\begin{flalign*}
&&f^\iistar
=&\; \mediator{g^\istar}{\eta}^\klstar j\\
&&=&\; \mediator{\mediator{g^\istar}{\eta}^\klstar g }{\eta}^\klstar j &\by{solution}\\
&&=&\; \mediator{\mediator{g^\istar}{\eta}^\klstar \mediator{j}{ \eta\inl}^\klstar \modm }{\eta}^\klstar j\\
&&=&\; \mediator{(\mediator{g^\istar}{\eta}^\klstar \mediator{j}{ \eta\inl})^\klstar \modm }{\eta}^\klstar j &\by{Kleisli}\\
&&=&\; \mediator{\mediator{\mediator{g^\istar}{\eta}^\klstar j}{ \mediator{g^\istar}{\eta}^\klstar\eta\inl}^\klstar \modm }{\eta}^\klstar j &\by{coproducts}\\
&&=&\; \mediator{\mediator{\mediator{g^\istar}{\eta}^\klstar j}{ \mediator{g^\istar}{\eta}\inl}^\klstar \modm }{\eta}^\klstar j &\by{Kleisli}\\
&&=&\; \mediator{\mediator{\mediator{g^\istar}{\eta}^\klstar j}{ \eta}^\klstar \modm }{\eta}^\klstar j &\by{coproducts}\\
&&=&\; \mediator{\mediator{\mediator{g^\istar}{\eta}^\klstar j}{ \eta}^\klstar \modm}{ \mediator{\mediator{g^\istar}{\eta}^\klstar j}{ \eta} \inl}^\klstar j &\by{coproducts}\\
&&=&\; \mediator{\mediator{\mediator{g^\istar}{\eta}^\klstar j}{ \eta}^\klstar \modm}{ \mediator{\mediator{g^\istar}{\eta}^\klstar j}{ \eta}^\klstar \eta \inl}^\klstar j &\by{Kleisli}\\
&&=&\; (\mediator{\mediator{g^\istar}{\eta}^\klstar j}{ \eta}^\klstar \mediator{\modm}{ \eta \inl})^\klstar j &\by{coproducts}\\
&&=&\; \mediator{\mediator{g^\istar}{\eta}^\klstar j}{ \eta}^\klstar \mediator{\modm}{ \eta \inl}^\klstar j  &\by{Kleisli}\\
&&=&\; \mediator{f^\iistar}{ \eta}^\klstar f.
\end{flalign*}
\item \textit{Uniqueness:} Let $r \colon X \to TY$ be a solution of $f$,
  that is, $r = [\eta,r]^\klstar f$. First, we calculate:
\begin{flalign*}
&& \mediator{r}{\eta}^\klstar \modm
=&\; \mediator{\mediator{r}{\eta}^\klstar f}{\eta}^\klstar \modm & \\
&&=&\; \mediator{\mediator{r}{\eta}^\klstar \mediator{\modm}{ \eta \inl}^\klstar j}{\eta}^\klstar \modm \\
&&=&\; \mediator{(\mediator{r}{\eta}^\klstar \mediator{\modm}{ \eta \inl})^\klstar j}{\eta}^\klstar \modm &\by{Kleisli}\\
&&=&\; \mediator{\mediator{\mediator{r}{\eta}^\klstar \modm}{ \mediator{r}{\eta}^\klstar \eta \inl}^\klstar j}{\eta}^\klstar \modm &\by{coproducts}\\
&&=&\; \mediator{\mediator{\mediator{r}{\eta}^\klstar \modm}{ \mediator{r}{\eta} \inl}^\klstar j}{\eta}^\klstar \modm &\by{Kleisli}\\
&&=&\; \mediator{\mediator{\mediator{r}{\eta}^\klstar \modm}{ \eta}^\klstar j}{ \eta}^\klstar \modm &\by{coproducts}\\
&&=&\; \mediator{\mediator{\mediator{r}{\eta}^\klstar \modm}{ \eta}^\klstar j}{ \mediator{\mediator{r}{\eta}^\klstar \modm}{ \eta} \inl}^\klstar \modm &\by{coproducts}\\
&&=&\; \mediator{\mediator{\mediator{r}{\eta}^\klstar \modm}{ \eta}^\klstar j}{ \mediator{\mediator{r}{\eta}^\klstar \modm}{ \eta}^\klstar \eta\inl}^\klstar \modm &\by{Kleisli}\\
&&=&\; (\mediator{\mediator{r}{\eta}^\klstar \modm}{ \eta}^\klstar \mediator{j}{ \eta\inl})^\klstar \modm &\by{coproducts}\\
&&=&\; \mediator{\mediator{r}{\eta}^\klstar \modm}{ \eta}^\klstar \mediator{j}{ \eta\inl}^\klstar \modm &\by{Kleisli}\\
&&=&\; \mediator{\mediator{r}{\eta}^\klstar \modm}{ \eta}^\klstar g.
\end{flalign*}
Thus, $[\eta,r]^\klstar \modm$ is a solution of $g$. By uniqueness, we obtain that $g^\istar = [\eta,r]^\klstar \modm$. With this, we can check the uniqueness of $f^\iistar$:
\begin{flalign*}
&& r
=&\; \mediator{r}{\eta}^\klstar f & \\
&&=&\; \mediator{r}{\eta}^\klstar \mediator{\modm}{ \eta \inl}^\klstar j \\
&&=&\; (\mediator{r}{\eta}^\klstar \mediator{\modm}{ \eta \inl})^\klstar j &\by{Kleisli}\\
&&=&\; \mediator{\mediator{r}{\eta}^\klstar \modm}{ \mediator{r}{\eta}^\klstar \eta \inl}^\klstar j &\by{coproducts}\\
&&=&\; \mediator{\mediator{r}{\eta}^\klstar \modm}{ \mediator{r}{\eta}\inl}^\klstar j &\by{Kleisli}\\
&&=&\; \mediator{\mediator{r}{\eta}^\klstar \modm}{ \eta}^\klstar j &\by{coproducts}\\
&&=&\; \mediator{g^\istar}{\eta}^\klstar j &\by{the above}\\
&&=&\; f^\iistar.
\end{flalign*} 
\end{itemize}

\emph{\eqref{item:weak-monad-morphisms}: }
Let $(\alpha,\beta)$ be as in Definition~\ref{def:module}. Let $f\colon  X \to TY$ be weakly $\sigma$-guarded. This means that $f$ factors as $[\eta^T \bar\sigma, \modm]^\klstar f'$ for a morphism $f'\colon  X \to T(Y' + \module T Y)$. We need to show that $\alpha f\colon  X \to SY$ factors as $[\eta^S \bar\sigma, \modm']^\kklstar g$ for some $g\colon  X \to S(Y' + \moduleS S Y)$. We calculate:
\begin{flalign*}
&& \alpha f
=&\; \alpha [\eta^T \bar\sigma,\modm]^\klstar f' &\by{factorisation of $f$} \\
&&=&\; (\alpha [\eta^T \bar\sigma,\modm])^\kklstar \alpha f' &\by{monad morphism} \\
&&=&\; [\alpha \eta^T \bar\sigma,\alpha \modm]^\kklstar \alpha f' &\by{coproducts} \\
&&=&\; [\eta^S \bar\sigma,\alpha \modm]^\kklstar \alpha f' &\by{monad morphism} \\
&&=&\; [\eta^S \bar\sigma,\modm' \beta]^\kklstar \alpha f' &\by{idealized monad morphism} \\
&&=&\; ([\eta^S \bar\sigma,\modm'](\id+\beta))^\kklstar \alpha f' &\by{coproducts} \\
&&=&\; ([\eta^S \bar\sigma,\modm']^\kklstar \eta^S (\id+\beta))^\kklstar \alpha f' &\by{Kleisli} \\
&&=&\; [\eta^S \bar\sigma,\modm']^\kklstar (\eta^S (\id+\beta))^\kklstar \alpha f' &\by{Kleisli}
\end{flalign*}
\end{proof}

\section{Parametrizing Guardedness}\label{sec:parametrized}
Uustalu~\cite{Uustalu03} defines a \emph{parametrized monad} to be a
functor from a category $\BC$ to the category of monads over $\BC$. We
need a minor adaptation of this notion where we allow parameters from
a different category than~$\BC$, and simultaneously introduce a
guarded version of parametrized monads:
\begin{definition}[Parametrized guarded monad]
  A \emph{parametrized (guarded) monad} is a functor from a category
  $\BD$ to the category of (guarded) monads and (guarded) monad
  morphisms over $\BC$. Alternatively (by uncurrying), it is a
  bifunctor $\IB{}{}\colon \BC\times\BD\to\BC$ such that for any
  $X\in|\BD|$, $\IB{\argument}{X}\colon \BC\to\BC$ is a (guarded) monad, and
  for every $f\colon Z\to V$, $\IB{\id}{f}\colon \IB{X}{Z}\to\IB{X}{V}$ is the
  $X$-component of a (guarded) monad morphism
  $\IB{\argument}{f}\colon \IB{\argument}{Z}\to\IB{\argument}{V}$,
explicitly,
\begin{align}\label{eq:par-moph}
(\IB{\id}{f})\comp\eta = \eta &&  (\IB{\id}{f})\comp g^\klstar =  ((\IB{\id}{f})\comp g)^\klstar (\IB{\id}{f})\comp 
\end{align}
for any $g\colon X\to Y$ and, in the guarded case,
\begin{equation*}
g\colon Z\to_{\sigma}\IB{V}{X}\quad\text{implies}\quad
(\IB{\id}{f})\comp g\colon Z\to_\sigma\IB{V}{Y}.
\end{equation*}
A \emph{parametrized (guarded) monad morphism} between parametrized
(guarded) monads qua functors into the category of (guarded) monads
over $\BC$ is a natural transformation that is componentwise a
(guarded) monad morphism. In uncurried notation, given parametrized
monads $\IB{}{},\IB[\hat\hash]{}{}\colon \BC\times\BD\to\BC$ a natural
transformation $\alpha\colon \IB{}{}\to\IB[\hat\hash]{}{}$ is a parametrized
(guarded) monad morphism if for each $X\in|\BD|$,
$\alpha_{\argument,X}\colon \IB{\argument}{X}\to\IB[\hat\hash]{\argument}{X}$
is a (guarded) monad morphism.

A parametrized guarded monad $\IB{}{}$ is \emph{guarded
  (pre-)iterative} if each monad $\IB{\argument}{X}$ is guarded
(pre-)iterative and the monad morphisms $\IB{\argument}{f}$ are
iteration-preserving, i.e.\
\begin{align}\label{eq:par-iter-preserve}
(\IB{\id}{f})\comp g^\istar = ((\IB{\id}{f})\comp g)^\istar. 
\end{align}
\end{definition}
\noindent Note that by Lemma~\ref{lem:iter-preserve}, condition~\eqref{eq:par-iter-preserve}
is automatic for guarded iterative parametrized monads.

In the sequel, we tend to use the same notation for parametrized
monads as for the non-parametrized case, assuming that omitted
information is understood from the context. For example, the monad unit
$\eta_{X,Y}\colon X\to\IB{X}{Y}$ is additionally parametrized by $Y$, and
both parameters will be occasionally omitted unless confusion
arises. Kleisli lifting assigns $f^\klstar\colon \IB{X}{Z}\to\IB{Y}{Z}$ to
$f\colon X\to\IB{Y}{Z}$, and for fixed $Z$ all monad laws can be used
for parametrized monads as stated for non-parametrized monads. The
connection between Kleisli lifting and the functor part of the monad
can now be restated as follows:
$(\IB{f}{\id_Z}) = (\eta_{Y,Z} f)^\klstar$ where $f\colon X\to\IB{Y}{Z}$.
\begin{example}\label{ex:param-monad}
  For purposes of the present work, the most important example
  (taken from~\cite{Uustalu03}) is
  $\IB{}{}=T(\argument+\Sigma\argument)\colon\BC\times\BC\to\BC$ where
  $\BBT$ is a (non-parametrized) monad on~$\BC$ and~$\Sigma$ is an
  endofunctor on $\BC$. Informally, $\BBT$ captures a computational
  effect, e.g.\ nondeterminism for $T$ being a (bounded) powerset monad, and $\Sigma$
  captures a signature of actions, e.g.\ $\Sigma X=A\times X$, as in
  Example~\ref{ex:gen-proc}. Specifically, taking $A=1$ we obtain
  $\IB{X}{Y}=T(X+Y)$; in this case, we have only one guard, which can
  be interpreted as a delay. The second argument of $\IB{}{}$ can thus
  be thought of as designated for guarded recursion.
\end{example}

\begin{theorem}\label{thm:ext_guard}
  Let $\IB{}{}\colon \BC\times(\BC\times\BD)\to\BC$ be a parametrized monad,
  with unit $\eta$ and Kleisli lifting $(-)^\klstar$. Then 
  \begin{align*}
   \IB[\hash^\nu]{X}{Y} = \nu\gamma.\, \IB{X}{(\gamma,Y)}
  \end{align*} 
  defines a
  parametrized monad $\IB[\hash^\nu]{}{}\colon \BC\times\BD\to\BC$, whose unit and Kleisli lifting we denote 
$\eta^\nu$ and $(\argument)^\kklstar$, respectively.
 Moreover,
\begin{enumerate}
  \item\label{item:par-guard} If $\IB{}{}$ is guarded, then so is $\IB[\hash^\nu]{}{}$, with guardedness defined as follows: given
  $\sigma\colon Y'\cpto Y$, $f\colon X\to \IB[\hash^\nu]{Y}{Z}$ is
  $\sigma$-guarded if $\out f\colon X\to\IB{Y}{(\IB[\hash^\nu]{Y}{Z},Z)}$
  is $\sigma$-guarded; the correspondence $\IB{}{}\mto\IB[\hash^\nu]{}{}$ extends
to a functor between the respective categories of guarded parametrized monads. 
  \item\label{item:par-pre-iter} If $\IB{}{}$ is guarded pre-iterative, with an iteration operator $(\argument)^\istar$,
then so is $\IB[\hash^\nu]{}{}$, with the iteration
operator $(\argument)^\iistar$ sending  $f\colon X\to_2 \IB[\hash^\nu]{(Y+X)}{Z}$ to $f^\iistar\colon X\to \IB[\hash^\nu]{Y}{Z}$ as follows:
\begin{align*}
f^\iistar = \coit\Bigl([\eta_Y,(\out f)^\istar]^\klstar\out\colon   \IB[\hash^\nu]{(Y+X)}{Z} \to\IB{Y}{((\IB[\hash^\nu]{(Y+X)}{Z}),Z)} \Bigr)\comp\eta^\nu_{Y+X,Z}\inr \kern-6ex
\end{align*}
\item\label{item:par-iter} If $\IB{}{}$ is guarded iterative, then so
  is $\IB[\hash^\nu]{}{}$, with solutions described as in the previous
  clause.
\end{enumerate}
\end{theorem}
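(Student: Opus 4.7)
The plan is to construct all structure on $\IB[\hash^\nu]{}{}$ by coinductive definitions exploiting the final coalgebra $\out\colon\IB[\hash^\nu]{X}{Z}\to\IB{X}{(\IB[\hash^\nu]{X}{Z},Z)}$, and to verify each required equation by the uniqueness clause of coinduction, i.e.\ by exhibiting the two sides as coalgebra morphisms into a common coalgebra. Concretely, the unit is $\eta^\nu_{X,Z}=\coit(\eta_{X,(X,Z)})$; the $\BD$-functorial action $\IB[\hash^\nu]{\id}{h}$ for $h\colon Z\to W$ is the coiteration of $(\IB{\id}{(\id,h)})\comp\out$; and Kleisli lifting $f^\kklstar\colon \IB[\hash^\nu]{X}{Z}\to\IB[\hash^\nu]{Y}{Z}$ for $f\colon X\to\IB[\hash^\nu]{Y}{Z}$ is given by the standard generalized coinductive construction that, after one application of $\out$, substitutes $f$ at each $X$-leaf. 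The three monad laws, $\BD$-functoriality of the assignment $\hash\mapsto\hash^\nu$, and the parametrized monad morphism conditions~\eqref{eq:par-moph} for the $\IB[\hash^\nu]{\id}{h}$ all follow from coinductive uniqueness.

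For clause~\eqref{item:par-guard}, I take the definition of $\sigma$-guardedness of $f\colon X\to\IB[\hash^\nu]{Y}{Z}$ to be $\sigma$-guardedness of $\out\comp f$ in $\IB{}{}$, and verify each of \textbf{(trv)}, \textbf{(par)}, \textbf{(cmp)} from Figure~\ref{fig:guard} by applying $\out$ to the conclusion and reducing to the corresponding instance in $\IB{}{}$ via the one-step unfoldings of $\eta^\nu$ and $(-)^\kklstar$ described above. Functoriality of $(-)^\nu$ on guarded parametrized monad morphisms follows by coinductive uniqueness applied to the coalgebra square induced by such a morphism~$\alpha$. For clauses~\eqref{item:par-pre-iter} and~\eqref{item:par-iter}, I verify that the prescribed formula for $f^\iistar$ satisfies the fixpoint identity $f^\iistar = [\eta^\nu, f^\iistar]^\kklstar\comp f$ by applying $\out$ to both sides and reducing to the base-level fixpoint identity $(\out f)^\istar = [\eta,(\out f)^\istar]^\klstar\out f$ combined with the characterization of $\out\comp(-)^\kklstar$ established in the first paragraph. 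For uniqueness in clause~\eqref{item:par-iter}, given any other solution $r$ of $f$, I would show that $\out\comp r$ solves the same $\inr$-guarded equation in $\IB{}{}$ as $(\out f)^\istar$; the uniqueness hypothesis there plus terminality of the coalgebra then forces $r = f^\iistar$.

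I expect the main technical obstacle to be the \textbf{(cmp)} rule in clause~\eqref{item:par-guard}: since Kleisli lifting in $\IB[\hash^\nu]{}{}$ is defined coinductively, identifying $\out\comp([g,h]^\kklstar f)$ with a $\sigma$-guarded expression in $\IB{}{}$ requires a careful one-step unfolding of the coiteration together with nontrivial bookkeeping of nested coproducts and reindexings before the hypothesis of \textbf{(cmp)} in $\IB{}{}$ can be brought to bear. The fixpoint-identity verification in clause~\eqref{item:par-pre-iter} is of comparable subtlety, since one must match the factor $\eta^\nu\inr$ inside the definition of $f^\iistar$ against the right-hand side $[\eta^\nu, f^\iistar]^\kklstar\comp f$ after unfolding, again relying on the explicit form of Kleisli lifting.
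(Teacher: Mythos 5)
Your overall strategy is the one the paper follows: define the monad structure on $\IB[\hash^\nu]{}{}$ coalgebraically (unit and Kleisli lifting by primitive corecursion, following Uustalu), declare $f$ to be $\sigma$-guarded when $\out f$ is, reduce \textbf{(trv)}, \textbf{(par)}, \textbf{(cmp)} to their counterparts in $\IB{}{}$ via one-step unfoldings, establish functoriality of $(-)^\nu$ by uniqueness of coalgebra morphisms, and verify the fixpoint identity for $f^\iistar$ by applying $\out$ and using the unfolding law for $\out\comp[\eta^\nu,\argument]^\kklstar$. Up to that point the proposal matches the paper's proof in both decomposition and technique.

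The step that would fail as written is the uniqueness argument in clause~(3). You claim that for any solution $r$ of $f$, the morphism $\out r$ solves \emph{the same} $\inr$-guarded equation in $\IB{}{}$ as $(\out f)^\istar$. It cannot: $(\out f)^\istar$ has codomain $\IB{Y}{(\IB[\hash^\nu]{(Y+X)}{Z},Z)}$ whereas $\out r$ has codomain $\IB{Y}{(\IB[\hash^\nu]{Y}{Z},Z)}$, so the two live in Kleisli categories of different instantiations of the parameter. What the unfolding actually yields is that $\out r$ is the unique solution of the \emph{reindexed} equation morphism $(\IB{\id}{[\eta^\nu,r]^\kklstar})\comp\out f$, which still mentions $r$, so uniqueness in $\IB{}{}$ alone does not pin $r$ down. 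The paper closes the loop with two further ingredients: iteration-preservation of the reindexing monad morphism $\IB{\id}{[\eta^\nu,r]^\kklstar}$ (automatic for guarded iterative $\IB{}{}$ by Lemma~\ref{lem:iter-preserve}), which converts the above into $\out r=(\IB{\id}{[\eta^\nu,r]^\kklstar})\comp(\out f)^\istar$, i.e.\ $r=[\eta^\nu,r]^\kklstar\comp g$ for $g=\tuo\comp(\IB{\inl}{\id})\comp(\out f)^\istar$; and then the fact that equations of this fully delayed shape have unique solutions by finality (Uustalu's primitive-corecursion principle --- presumably what your ``terminality of the coalgebra'' is gesturing at), which identifies $r$ with $f^\iistar$. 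Both ingredients are available in your setting, so the gap is repairable, but the intermediate reindexing and the appeal to iteration-preservation are genuinely needed and must be made explicit.
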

\noindent To better understand the typing in the second clause above, note
that
\begin{itemize}
\item $\out f\colon X\to_2\IB{(Y+X)}{(\IB[\hash^\nu]{(Y+X)}{Z},Z)}$, so
\item $(\out f)^\istar\colon X\to\IB{Y}{(\IB[\hash^\nu]{(Y+X)}{Z},Z)}$;
\item the right-most occurrence of $\out$ has type
  \begin{equation*}
    \out\colon \IB[\hash^\nu]{(Y+X)}{Z} \to\IB{(Y+X)}{(\IB[\hash^\nu]{(Y+X)}{Z},Z)};
  \end{equation*}
\item the $\coit(\dots)$ subterm has type $\IB[\hash^\nu]{(Y+X)}{Z} \to\IB[\hash^\nu]{Y}{Z}$.
\end{itemize}
In case there is only one parameter of type $\BC$, i.e.\
$\IB{}{}\colon \BC\times\BC\to\BC$, the typing simplifies slightly: Now
$\hash^\nu$ is just a monad on~$\BC$, which we denote by $\IBnu$
(i.e.\ $\IBnu X= X\hash^\nu()$). We write $\eta^\nu$,
$(\argument)^\kklstar$ for the corresponding monad structure. Then
given $f\colon X\to_2\IBnu(Y+X)$,
\begin{itemize}
\item $\out f\colon X\to_2\IB{(Y+X)}{\IBnu(Y+X)}$;
\item $(\out f)^\istar\colon X\to_2\IB{Y}{\IBnu(Y+X)}$;
\item the right-most occurrence of $\out$ has type
  \begin{equation*}
    \out\colon \IBnu(Y+X) \to\IB{(Y+X)}{\IBnu(Y+X)};
  \end{equation*}
\item the $\coit(\dots)$ subterm has type $\IBnu(Y+X) \to\IBnu Y$;
\item and, of course, $f^\iistar\colon X\to\IBnu Y$.
\end{itemize}
\begin{proof}[Proof (Theorem~\ref{thm:ext_guard})]
\eqref{item:par-guard}:  By currying we equivalently view $\IB{}{}$ as a functor from
  $\BD$ to the category of parametrized guarded monads of type
  $\BC\times\BC\to\BC$, and the transformation $\IB{}{}\mto \IB[\hash^\nu]{}{}$
  as given pointwise.
 It therefore suffices to show that the
  assignment
  \begin{equation*}
    \IB{}{}\mapsto \IBnu\qquad\text{where}\qquad \IBnu{X}=\nu\gamma.\,\IB{X}{\gamma}
  \end{equation*}
  extends to a functor 
  from parametrized guarded monads of type $\BC\times\BC\to\BC$ to
  guarded monads over $\BC$ where guardedness for $\IBnu{}$ is defined as follows:
$f\colon X\to \IBnu Y$ is $\sigma$-guarded iff
  $\out f\colon X\to \IB{Y}{\IBnu Y}$ is $\sigma$-guarded w.r.t.\ $\IB{}{}$.
 Uustalu~\cite{Uustalu03} already proves
  that $\IBnu$ is a monad; we proceed to check that his
  construction is in fact functorial.

  As indicated above, we denote the monad structure on $\IBnu$ by
  $\eta^\nu$, $(\argument)^\kklstar$. These data are uniquely
  determined by commutation of
  \begin{equation}\label{eq:IBnu-monad}
    \begin{tikzcd}
      X\ar{r}{\eta^\nu_X}\ar{d}[swap]{\eta_{X,\IBnu{X}}}&[2ex] \IBnu{X}\ar{d}{\out}\\
      X\hash \IBnu{X}\ar[r,equal] &X\hash \IBnu{X}
    \end{tikzcd}\hspace{30pt}%
    \begin{tikzcd}[column sep = large]
      \IBnu{X} +\IBnu{Y} \ar{r}{[f^\kklstar,\,\id]}\ar{d}[swap]{[\hat f,\,(Y\hash\inr)\out]}
      &[2ex] \IBnu{Y} \ar{d}{\out}\\
      Y\hash(\IBnu{X}+\IBnu{Y})\ar{r}{Y\hash[f^\kklstar,\,\id]}
        & Y\hash(\IBnu{Y})
    \end{tikzcd}
  \end{equation}
  where 
  \begin{align*}
    \hat f =&\;\Bigl(\IBnu{X} \xrightarrow{\out} X \hash \IBnu{X} \xrightarrow{X\hash\inl}
    X\hash(\IBnu{X}{+}\IBnu{Y})\xrightarrow{\;\;\bar f^\klstar}
     Y\hash(\IBnu{X}+\IBnu{Y})\Bigr)\\
    \bar f =&\;\Bigl(X\xrightarrow{\;f} \IBnu{Y}
    \xrightarrow{\out}X\hash \IBnu{Y} \xrightarrow{Y\hash\inr} 
    Y\hash(\IBnu{X}+\IBnu{Y})\Bigr).
  \end{align*}
  That is, $\eta^\nu_X$ is the unique $(\IB{X}{\argument})$-coalgebra
  morphism $(X,\eta_{X,\IBnu{X}})\to(\IBnu{X},\out)$, and $[f^\kklstar,\id]$ is
  the unique $(\IB{Y}{\argument})$-coalgebra morphism
  \begin{equation*}
    (\IBnu{X}+\IBnu{Y},\;[\hat f,(Y\hash\inr)\out])\to(\IBnu{Y},\out),
  \end{equation*}
  the latter being essentially a definition of $f^\kklstar$ by primitive
  corecursion.  In the sequel, we will omit the object part of
  coalgebras when convenient, saying, e.g., that $\eta^\nu_X$ is a
  coalgebra morphism $\eta_{X,\IBnu{X}}\to\out$.

  We need to define the action of $\IBnum{}$ on morphisms: Let $\IB[\hash']{}{}$ be a
  further parametrized monad, with all data of $\hash'$ and
  $\IBnup$ indicated by primes, and let
  \begin{equation*}
    \alpha\colon \IB{}{}\to\IB[\hash']{}{}
  \end{equation*}
  be a parametrized monad morphism. We then define a monad morphism 
  $\IBnum{\alpha}\colon \IBnu\to\IBnup$ by commutation of
  \begin{equation*}
    \begin{tikzcd}[column sep = 12ex]
      \IBnu{X} \ar{r}{(\IBnum{\alpha})_X}\ar{d}[swap]{\alpha_{X,\IBnu{X}}\out} &
      \IBnup{X}\ar{d}{\out'}\\
      \IB[\hash']{X}{\IBnu{X}} \ar{r}{\IB[\hash]{X}{(\IBnum{\alpha})_X}} & X\hash'\IBnup{X},
    \end{tikzcd}
  \end{equation*}
  i.e.\ $(\IBnum{\alpha})_X$ is the unique $(\IB[\hash']{X}{\argument})$-coalgebra
  morphism $(\IBnu{X},\alpha\out)\to(\IBnup{X},\out')$.

  We first check functoriality of $\IBnum{}$. For preservation of identities,
  just note that $\id\colon (\IBnu{X},\id\out)\to(\IBnu{X},\out)$ is a
  coalgebra morphism. For preservation of composition, we have that if
  $\beta\colon \hash'\to\hash''$ is a further parametrized monad morphism
  then by naturality of $\beta$, the $(\IB[\hash']{X}{\argument})$-coalgebra
  morphism $(\IBnum{\alpha})_X\colon \alpha\out\to\out'$ is also an
  $(\IB[\hash'']{X}{\argument})$-coalgebra morphism
  $\beta\alpha\out\to\beta\out'$; so $(\IBnum{\beta})_X(\IBnum{\alpha})_X$ is a
  coalgebra morphism $\beta\alpha\out\to\out''$, and hence equals
  $(\IBnum{\beta\alpha})_X$.
  
  It remains to verify that $\IBnum{\alpha}$ is indeed a monad morphism. 
  First, we show compatibility with the unit, i.e.\ 
  \begin{equation*}
    (\IBnum{\alpha})_X\,\eta^\nu_X = {\eta'}^\nu_X\colon X\to \IBnup{X}.
  \end{equation*}
  We note that by naturality of $\alpha$, the
  $(\IB{X}{\argument})$-coalgebra morphism
  $\eta^\nu_X\colon \eta_{X,\IBnu{X}}\to\out$ is also an
  $(\IB[\hash']{X}{\argument})$-coalgebra morphism
  $\eta'_{X,\IBnu{X}}=\alpha\comp\eta_{X,\IBnu{X}}\to\alpha\out$, so that
  $(\IBnum{\alpha})_X\comp\eta^\nu_X$ is a coalgebra morphism
  $\eta'_{X,\IBnu{X}}\to\out'$ and hence equals ${\eta'}^\nu_X$.

  For compatibility of $\IBnum{\alpha}$ with Kleisli lifting, we have to
  show that for $f\colon X\to \IBnu{Y}$, 
  \begin{equation*}
    (\IBnum{\alpha}\comp f)^{\kklstar'}\IBnum{\alpha}=\IBnum{\alpha}\comp f^\kklstar.
  \end{equation*}
  We strengthen this goal to one concerning $[f^\kklstar,\id]$, specifically
  we show that
  \begin{equation*}
    \begin{tikzcd}[column sep = 12ex]
      \IBnu{X}+\IBnu{Y} \ar{r}{[f^\kklstar\!,\,\id]}\ar{d}[swap]{(\IBnum{\alpha})_X+(\IBnum{\alpha})_Y}
        & \IBnu{Y}\ar{d}{(\IBnum{\alpha})_Y}\\
        \IBnup{X}+ \IBnup{Y} \ar{r}{[((\IBnum{\alpha})_Yf)^{\kklstar'}\!,\,\id]} &
        \IBnup{Y}        
    \end{tikzcd}
  \end{equation*}
  commutes. By definition, the bottom arrow is a
  $(Y\hash'\argument)$-coalgebra morphism 
  \begin{equation*}
    [\widehat{(\IBnum{\alpha})_Yf},(Y\hash'\inr)\out']\to\out',
  \end{equation*}
  and by now-familiar arguments, the top and right-hand arrows compose
  to yield a $(Y\hash'\argument)$-coalgebra morphism 
  \begin{equation*}
    [\alpha\hat f,\alpha(Y\hash\inr)\out]\to\out'.
  \end{equation*}
  It therefore suffices to show that 
  \begin{equation*}
    (\IBnum{\alpha})_X+(\IBnum{\alpha})_Y\colon [\alpha\hat f,\alpha(Y\hash\inr)\out]
    \to[\widehat{(\IBnum{\alpha})_Yf},(Y\hash'\inr)\out']    
  \end{equation*}
  is a $(Y\hash'\argument)$-coalgebra morphism. We first check
  commutation of the corresponding square on the right-hand summand
  $\IBnu{Y}$: 
  \begin{flalign*}
    &&& (Y\hash'((\IBnum{\alpha})_X+(\IBnum{\alpha})_Y))\alpha(Y\hash\inr)\out \\
    &&& =(Y\hash'((\IBnum{\alpha})_X+(\IBnum{\alpha})_Y))(Y\hash'\inr)\alpha\out 
      &\by{naturality of $\alpha$}\\
    &&& = (Y\hash'\inr)(Y\hash' (\IBnum{\alpha})_Y)\alpha\out \\
    &&& = (Y\hash'\inr)\out'(\IBnum{\alpha})_Y. 
   	  & \by{definition of $\IBnum{\alpha}$}
  \end{flalign*}
  For commutation on the left-hand summand we have to show that
  \begin{equation}\label{eq:N-lhs-mor}
    (Y\hash'((\IBnum{\alpha})_X+(\IBnum{\alpha})_Y))\comp\alpha\hat f =
    \widehat{(\IBnum{\alpha})_Yf}(\IBnum{\alpha})_X.    
  \end{equation}
  We rewrite the left-hand side of~\eqref{eq:N-lhs-mor}:
  \begin{flalign*}
    &&& (Y\hash'((\IBnum{\alpha})_X+(\IBnum{\alpha})_Y))\comp\alpha\hat f \\
    &&& = (Y\hash'((\IBnum{\alpha})_X+(\IBnum{\alpha})_Y))\comp\alpha\bar f^\klstar(X\hash\inl)\out
    & \by{definition of $\hat f$}\\
    &&& = (Y\hash'((\IBnum{\alpha})_X+(\IBnum{\alpha})_Y))\comp(\alpha\bar f)^{\klstar'}\alpha(X\hash\inl)\out
    & \by{$\alpha$ a monad morphism}\\
    &&& = (Y\hash'((\IBnum{\alpha})_X+(\IBnum{\alpha})_Y))\comp (\alpha\bar f)^{\klstar'}(X\hash'\inl)\alpha\out.
    & \by{naturality of $\alpha$} 
  \end{flalign*}
  We next rewrite the right-hand side of~\eqref{eq:N-lhs-mor}:
  \begin{flalign*}
    &&& \widehat{(\IBnum{\alpha})_Yf}(\IBnum{\alpha})_X \\
    &&& = \overline{(\IBnum{\alpha})_Yf}^{\klstar'}(X\hash'\inl)\out'(\IBnum{\alpha})_X
    & \by{definition of $\widehat{(\IBnum{\alpha})_Yf}$}\\
    &&& = \overline{(\IBnum{\alpha})_Yf}^{\klstar'}(X\hash'\inl)(X\hash'(\IBnum{\alpha})_X)
      \comp\alpha\out
    & \by{definition of $\widehat{(\IBnum{\alpha})_Yf}$}\\
    &&& = \overline{(\IBnum{\alpha})_Yf}^{\klstar'}(X\hash'((\IBnum{\alpha})_X+(\IBnum{\alpha})_Y))(\IB[\hash']{X}{\IBnum{\inl}})
      \comp\alpha\out
  \end{flalign*}
  It thus suffices to show that
  \begin{equation}
    \label{eq:N-lhs-mor-reduced}
    (Y\hash'((\IBnum{\alpha})_X+(\IBnum{\alpha})_Y))(\alpha\bar f)^{\klstar'}
    = \overline{(\IBnum{\alpha})_Yf}^{\klstar'}(X\hash'((\IBnum{\alpha})_X+(\IBnum{\alpha})_Y)).
  \end{equation}
  We further rewrite the right-hand side
  of~\eqref{eq:N-lhs-mor-reduced}: 
  \begin{flalign*}
    &&& \overline{(\IBnum{\alpha})_Yf}^{\klstar'}(X\hash'((\IBnum{\alpha})_X+(\IBnum{\alpha})_Y))\\
    &&& = ((Y\hash'\inr)\out'(\IBnum{\alpha})_Yf)^{\klstar'}\\
    &&&\qquad\quad  (X\hash'((\IBnum{\alpha})_X+(\IBnum{\alpha})_Y))
      & \by{definition of $\overline{(\IBnum{\alpha})_Yf}$}\\
    &&& = ((Y\hash'\inr)(Y\hash'(\IBnum{\alpha})_Y)\alpha\out f)^{\klstar'}\\
    &&&\qquad\quad (X\hash'((\IBnum{\alpha})_X+(\IBnum{\alpha})_Y))
      & \by{definition of $(\IBnum{\alpha})_Y$}\\
    &&& = ((Y\hash'((\IBnum{\alpha})_X+(\IBnum{\alpha})_Y))(Y\hash'\inr)\comp\alpha\out f)^{\klstar'}\\
    &&&\qquad\quad (X\hash'((\IBnum{\alpha})_X+(\IBnum{\alpha})_Y))\\
    &&& = (Y\hash'((\IBnum{\alpha})_X+(\IBnum{\alpha})_Y))((Y\hash'\inr)\comp\alpha\out f)^{\klstar'}
  \end{flalign*}
  where we use in the last step that
  $\IB[\hash']{Y}{((\IBnum{\alpha})_X+(\IBnum{\alpha})_Y)}$ is a monad morphism. We
  have thus reduced~\eqref{eq:N-lhs-mor-reduced} to showing that
  \begin{equation*}
    \alpha\bar f=(Y\hash'\inr)\alpha\out f.
  \end{equation*}
  But this is straightforward:
  \begin{flalign*}
    &&\alpha\bar f 
    & = \alpha\comp (Y\hash\inr)\out f & \by{definition of $\bar f$}\\
    &&& = (Y\hash'\inr)\comp\alpha\out f. & \by{naturality of $\alpha$}
  \end{flalign*}
\noindent  
  Next, we need to check the axioms of guarded monads for
  $\IBnum{\hash}$. 
\begin{citemize}
 \item\textbf{(trv)} Let $f\colon X\to \IBnum{\hash}Y$. Then 
\begin{align*}
 \out (\IBnum{\hash}\inj_1) f = (\IB{\inj_1}{(\IBnum{\hash}\inj_1)})\comp\out f. 
\end{align*}
By \textbf{(trv)} for $\IB{}{}$, $\out (\IBnum{\hash}\inj_1) f$ is $\inj_2$-guarded,
and thus, by definition so is $(\IBnum{\hash}{\inj_1}) f$.
 \item\textbf{(cmp)} Let $f\colon X\to_{2} \IBnum{\hash}(Y+Z)$, 
$g\colon Y\to_{\sigma} \IBnum{\hash}V$, $h\colon Z\to \IBnum{\hash}V$. Then 
we obtain
\begin{align*}
\out\comp [g,h]^\klstar f
=&\; [\out\comp g,\out\comp h]^\klstar\comp(\IB{\id}{[g,h]^\klstar})\comp\out f.
\end{align*}
By assumption $\out f$ is $\inj_2$-guarded, and therefore, since $\IB{}{}$ is a
parametrized guarded monad, so is $(\IB{\id}{[g,h]^\klstar})\comp\out f$.
Also, by assumption, $\out h$ is $\sigma$-guarded.
By~\textbf{(cmp)} for $\IB{}{}$, this implies that the composite
$[\out\comp g,\out\comp h]^\klstar\comp(\IB{\id}{[g,h]^\klstar})\comp\out f$
is $\sigma$-guarded and thus so is $[g,h]^\klstar f$.
\item\textbf{(par)} Let $f_i\colon X_i\to_\sigma\IBnum{\hash} Y$ for
  $i=1,2$, which by definition means that
  $\out f_i\colon X_i\to_\sigma Y\hash (\IBnum{\hash} Y)$. By \textbf{(par)}
  for $\hash$,
  $\out[f_1,f_2]=[\out f_1,\out f_2]\colon X_i\to_\sigma Y\hash
  (\IBnum{\hash} Y)$,
  so that $[f_1,f_2]\colon X_1+X_2\to_\sigma \IBnum{\hash} Y$ as required.
\end{citemize}

This shows that $\IBnum{\hash}$ is indeed a guarded monad; it remains
to show that given a parametrized guarded monad morphism
$\alpha\colon \hash\to\hash'$ as above, the monad morphism $\IBnum{\alpha}$
preserves guardedness. That is, for $f\colon Z\to_{\sigma}\IBnum{\hash}V$ we
have to show that $\IBnum{\alpha}f\colon Z\to_\sigma\IBnum{\hash'}V$, i.e.\
that $\out(\IBnum{\alpha}f)$ is $\sigma$-guarded. Indeed, by definition
of~$\IBnum{\alpha}$,
\begin{align*}
\out\comp (\IBnum{\alpha}\comp f) = (\IB{\id}{\IBnum{\alpha}})\comp\alpha\out f.
\end{align*}
By assumption, $\out f$ is $\sigma$-guarded and therefore, since $\IB{}{}$
is a parametrized guarded monad and $\alpha$ is a parametrized guarded
monad morphism, so is
$(\IB{\id}{\IBnum{\alpha}})\comp\alpha\comp\out f$.

\eqref{item:par-pre-iter}:  Let $f\colon X\to_2 \IB[\hash^\nu]{(Y+X)}{Z}$, and let
$g=\tuo\comp(\IB{\inl}{\id})\comp (\out f)^\istar\colon X\to_2 \IB[\hash^\nu]{(Y+X)}{Z}$.
Again, using the results of Uustalu~\cite[Theorem 3.11]{Uustalu03}, 
$h = \coit([\eta,(\out f)^\istar]^\klstar\out)\colon \IB[\hash^\nu]{(Y+X)}{Z}\to_2 \IB[\hash^\nu]{Y}{Z}$
is the unique solution of equation
\begin{displaymath}
	h = [\eta^\nu, h\comp g]^\kklstar,
\end{displaymath}
which implies that $f^\iistar = h\comp \eta^\nu\inr$ is a fixpoint of $g$. Indeed,
$f^\iistar = h\comp \eta^\nu\inr = [\eta^\nu, h\comp g]^\kklstar\eta^\nu\inr = h\comp g$,
and thus, $[\eta^\nu, f^\iistar]^\kklstar\comp g = [\eta^\nu, h\comp g]^\kklstar\comp g = h\comp g = f^\iistar$.
We are left to check that $f^\iistar$ is also a fixpoint of $f$. First, we record the
auxiliary equation
\begin{align}\label{eq:iter-trans-aux}
\out [\eta^\nu,f^\iistar]^\kklstar = [\eta, \out f^\iistar]^\klstar (\IB{\id}{[\eta^\nu,f^\iistar]^\kklstar})\out,
\end{align}
which entails the goal as follows (using the fact that $\out$ is an isomorphism):
\begin{flalign*}
&&\out [\eta^\nu,f^\iistar]^\kklstar\comp f
&= [\eta, \out f^\iistar]^\klstar (\IB{\id}{[\eta^\nu,f^\iistar]^\kklstar})\out f&\by{\eqref{eq:iter-trans-aux}}\\
&&&\;= [\eta, \out[\eta^\nu,f^\iistar]^\kklstar\comp g]^\klstar (\IB{\id}{[\eta^\nu,f^\iistar]^\kklstar})\out f&\by{definition of~$f^\iistar$}\\
&&&\;=[\eta, [\eta, \out f^\iistar]^\klstar (\IB{\id}{[\eta^\nu,f^\iistar]^\kklstar})\out g]^\klstar\\
&&&\qquad\quad (\IB{\id}{[\eta^\nu,f^\iistar]^\kklstar})\out f&\by{\eqref{eq:iter-trans-aux}}\\
&&&\;=[\eta, [\eta, \out f^\iistar]^\klstar (\IB{\inl}{[\eta^\nu,f^\iistar]^\kklstar})(\out f)^\istar]^\klstar \\
&&&\qquad\quad (\IB{\id}{[\eta^\nu,f^\iistar]^\kklstar})\out f&\by{definition of~$g$}\\
&&&\;=[\eta, \out f^\iistar]^\klstar [\eta\inl, (\IB{\inl}{[\eta^\nu,f^\iistar]^\kklstar})(\out f)^\istar]^\klstar\\
&&&\qquad\quad (\IB{\id}{[\eta^\nu,f^\iistar]^\kklstar})\out f\\
&&&\;=[\eta, \out f^\iistar]^\klstar (\IB{\inl}{[\eta^\nu,f^\iistar]^\kklstar}) [\eta,(\out f)^\istar]^\klstar\out f&\by{\eqref{eq:par-moph}}\\
&&&\;=[\eta, \out f^\iistar]^\klstar (\IB{\inl}{[\eta^\nu,f^\iistar]^\kklstar})(\out f)^\istar&\by{definition of~$(\argument)^\istar$}\\
&&&\;=[\eta, \out f^\iistar]^\klstar (\IB{\id}{[\eta^\nu,f^\iistar]^\kklstar})\out g&\by{definition of~$g$}\\
&&&\;=\out [\eta^\nu,f^\iistar]^\kklstar\comp g&\by{\eqref{eq:iter-trans-aux}}\\
&&&\;= \out f^\iistar.&\by{definition of~$f^\iistar$}
\intertext{
Equation~\eqref{eq:iter-trans-aux} is derived as follows:
}
&&\out[\eta^\nu,f^\iistar]^\kklstar 
&= \bigl(\IB{\id}{[[\eta^\nu,f^\iistar]^\kklstar,\id]}\bigr) \bigl((\IB{\id}{\inr})\out\comp [\eta^\nu,f^\iistar]\bigr)^\klstar\\
&&&\qquad\quad  (\IB{\id}{\inl})\out&\by{definition of~$(\argument)^\kklstar$}\\
&&&\;=  \bigl((\IB{\id}{[[\eta^\nu,f^\iistar]^\kklstar,\id]})(\IB{\id}{\inr})\out\comp [\eta^\nu,f^\iistar]\bigr)^\klstar\\
&&&\qquad\quad (\IB{\id}{[[\eta^\nu,f^\iistar]^\kklstar,\id]})\comp(\IB{\id}{\inl})\out&\by{\eqref{eq:par-moph}}\\
&&&\;=  (\out\comp [\eta^\nu,f^\iistar])^\klstar (\IB{\id}{[\eta^\nu,f^\iistar]^\kklstar})\out\\
&&&\;= [\eta, \out f^\iistar]^\klstar (\IB{\id}{[\eta^\nu,f^\iistar]^\kklstar})\out.&\by{definition of~$\eta^\nu$}
\end{flalign*}
Property~\eqref{eq:par-iter-preserve} transfers routinely along $\IB{}{}\mto\IB[\hash^\nu]{}{}$. 

\eqref{item:par-iter}:  We have to show that, given
$f\colon X\to_2 \IB[\hash^\nu]{(Y+X)}{Z}$ and
$\hat f\colon X\to \IB[\hash^\nu]{Y}{Z}$ such that
$\hat f = [\eta^\nu, \hat f]^\kklstar f$, we have
$\hat f = f^\iistar$, with $f^\iistar$ defined as in
Claim~(\ref{item:par-pre-iter}).  Again, let
$g=\tuo\comp(\IB{\inl}{\id})\comp (\out f)^\istar\colon X\to_2
\IB[\hash^\nu]{(Y+X)}{Z}$.
As we indicated above, $f^\iistar$ is the unique solution of the
equation $[\eta^\nu, f^\iistar]^\kklstar\comp g = f^\iistar$, and thus
to obtain the desired identity $\hat f = f^\iistar$, it suffices to
prove the same equation for $\hat f$. Note
that~\eqref{eq:iter-trans-aux} remains valid for $\hat f$ instead of
$f^\iistar$ and therefore we obtain
\begin{displaymath}
	\out\hat f = \out [\eta^\nu,\hat f]^\kklstar f = [\eta, \out \hat f]^\klstar (\IB{\id}{[\eta^\nu,f^\iistar]^\kklstar})\out f,
\end{displaymath}
which implies $\out\hat f = ((\IB{\id}{[\eta^\nu,f^\iistar]^\kklstar})\out f)^\istar$, for 
$(\IB{\id}{[\eta^\nu,f^\iistar]^\kklstar})\out f$ is $\inr$-guarded, and therefore
has a unique fixpoint. Now, since
\begin{flalign*}
&& \out\hat f
&  \;= ((\IB{\id}{[\eta^\nu,f^\iistar]^\kklstar})\out f)^\istar\\
&&&\;= (\IB{\id}{[\eta^\nu,f^\iistar]^\kklstar})\comp (\out f)^\istar&\by{\eqref{eq:par-iter-preserve}} \\
&&&\;= [\eta, \out f^\iistar]^\klstar (\IB{\id}{[\eta^\nu,f^\iistar]^\kklstar})\out\tuo \comp(\IB{\inl}{\id}) \comp (\out f)^\istar\\
&&&\;= \out [\eta^\nu,f^\iistar]^\kklstar g&\by{\eqref{eq:iter-trans-aux}, definition of~$g$}\\
&&&\;= \out f^\iistar,&\by{definition of~$f^\iistar$}
\end{flalign*}
we obtain $f^\iistar = \hat f$ using the fact that $\out$ is an
isomorphism.
\end{proof}
\begin{remark}\label{rem:t-nu}
  The definitions figuring in Theorem~\ref{thm:ext_guard} specialize
  to two generic cases occurring in previous literature:
\begin{cenumerate}
\item With $\BD=1$, $\IB{}{} = T(\argument+\Sigma\argument)$ for an
  endofunctor $\Sigma$ and a totally guarded pre-iterative monad
  $\BBT=(T,\eta,\argument^\klstar,\argument^\istar)$, we obtain the
  setting studied by Goncharov et al.~\cite{GoncharovEA18}: 
  $\IBnu{}$ is isomorphically a monad $\BBT_{\Sigma}$ on
  $\BC$ with $T_{\Sigma} X=\nu\gamma.\,T(X+\Sigma\gamma)$, unit
  $\eta^\nu = \tuo\eta\inl$, with Kleisli lifting
  $(f\colon X\to T_\Sigma Y)^\kklstar$ uniquely determined by the
  equation
\begin{align*}
\out f^\kklstar = [\out f\comma\eta\inr \Sigma f^\kklstar]^\klstar\out,
\end{align*}
and with the total iteration operator
\begin{align*}
\bigl(f\colon X\to T_\Sigma(Y+X)\bigr)^\iistar=\coit\bigl(\bigl[[\eta\inl,(T[\inl+\id,\inl\inr]\out f)^\istar],\eta\inr\bigr]^\klstar\out\bigr)\comp\eta^\nu\inr.
\end{align*}
\item With $\BD=1$, and any vacuously guarded $\IB{}{}\colon \BC\times\BC\to\BC$, we obtain the setting 
of Uustalu~\cite{Uustalu03}, with the guarded iterative monad $\IBnu{}=\nu\gamma.\,\IB{\argument}{\gamma}$
defined as follows: The monad structure is specified by~\eqref{eq:IBnu-monad}, and the
iteration operator $(\argument)^\iistar$ is uniquely determined by the equation 
$[\eta^\nu, f^\iistar]^\kklstar = f^\iistar$ for every $\inr$-guarded $f\colon X\to \IBnu{(Y+X)}$.
According to Theorem~\ref{thm:ext_guard}~(2),~$f$ is $\inr$-guarded
iff $\out f\colon X\to \IB{(Y+X)}{\IBnu{(Y+X)}}$ factors through $\IB{\inl}{\id}\colon \IB{Y}{\IBnu{(Y+X)}}\to\IB{(Y+X)}{\IBnu{(Y+X)}}$,
which is precisely the notion of guardedness in~\cite{Uustalu03}.
\end{cenumerate}
\end{remark}
\begin{example}\label{ex:transfer-example}
We proceed to illustrate the use of Theorem~\ref{thm:ext_guard} by various instances 
of Example~\ref{ex:gen-proc}.
\begin{cenumerate}
  \item By equipping the finite powerset monad $\FSet$ on $\Set$ with vacuous  
guardedness, we obtain by Theorem~\ref{thm:ext_guard} a notion of guardedness for $\nu\gamma.\,\FSet(X+A\times\gamma)$,
which allows for systems consisting of equations of the form
\begin{align}\label{eq:geq-example}
 x = y_1 + \ldots + y_n + a_1.\, t_1 + \ldots + a_m.\,t_m 
\end{align}
where the variables $y_i$ are not allowed to occur on the left-hand side
and the terms $t_i$ represent elements of $\nu\gamma.\,\FSet(X+A\times\gamma)$,
as previously explained in Example~\ref{ex:gen-proc}.  By Proposition~\ref{prop:triv-iter} and
Theorem~\ref{thm:ext_guard}, we conclude that these systems have unique
solutions, which is of course a known fact in process algebra. This and the 
following examples are intentionally chosen to be simple for illustrative 
purposes, but we emphasize that the same principles apply to the examples obtained 
by replacing $\FSet$ with a more general $T$ and $A\times\argument$ with a more general $\Sigma$.
For example, replacing $T$ with a \emph{subdistribution monad}~\cite{HasuoJacobsEtAl07},
to model probability instead on nondeterminism, would require changing the format 
of~\eqref{eq:geq-example}~to
\begin{align*}
 x = p_1\cdot y_1 + \ldots + p_n\cdot y_n + p'_1\cdot a_1.\, t_1 + \ldots + p'_m\cdot a_m.\,t_m 
\end{align*}
where the non-negative real coefficients $p_1,\ldots,p_n,p_1',\ldots,p_m'$, subject 
to the condition $p_1+\ldots+p_n+p_1'+\ldots+p_m'\leq 1$, represent the probabilities 
of choosing the corresponding alternative.
\item By replacing $\FSet$ with countable powerset $\CSet$ in the
  previous clause, we can relax the format of equation systems that
  can be solved, at the price of losing uniqueness of
  solutions. Specifically, let $\CSet$ be totally guarded
  pre-iterative with solutions of $f\colon X\to \CSet(Y+X)$ calculated via
  least fixpoints. The derived notion of guardedness for
  $\nu\gamma.\,\CSet(X+A\times\gamma)$ according to
  Theorem~\ref{thm:ext_guard} is again total, i.e.\ allows solving
  arbitrary systems of equations (we discuss an application of such
  unguarded recursive process definitions
  in~\cite[Section~3]{GoncharovEA18}; specifically, they allow
  defining countably branching systems in basic process algebra). 
  The canonical derived iteration operator makes use of both least
  fixpoints and unique coalgebraic fixpoints. For example, the
  canonical solution of
\begin{align}\label{eq:a-omega}
  x = x + a.\,x
\end{align} is
  the infinite sequence $x = a^\omega$, seen as an element of the
  final countably branching labelled transition system
  $\nu\gamma.\,\CSet(A\times\gamma)$ -- intuitively, the original
  system~\eqref{eq:a-omega} is first collapsed to $x = a.\,x$, iterating away the first
  $x$ in the sum by taking a least fixpoint, and the resulting system
  is solved uniquely. In detail, the definitions in
  Theorem~\ref{thm:ext_guard} unfold as follows. We have the case
  mentioned in Example~\ref{ex:param-monad} where
  $\IB{X}{Y}=\CSet(X+A\times Y)$ (so
  $\IBnu{X}=\nu\gamma.\CSet(X+A\times\gamma)$). Our example equation~\eqref{eq:a-omega} 
  corresponds to the map $f\colon X\to\IBnu(\iobj+X)$ where $X=\{x\}$
  and
  $\out f(x)=\{\inl\inr x,\inr\brks{a,\tuo(\{x\})}\}\in\CSet((\iobj+X)+A\times\IBnu(\iobj+X))\iso\CSet(A\times\IBnu{X}+X)$. The
  definition of $(-)^\iistar$ according to Theorem~\ref{thm:ext_guard}
  now tells us to first iterate $\out f$ in $\CSet$, by taking a
  least fixpoint with~$x$ seen as a variable, obtaining
  $(\out f)^\istar\colon X\to\CSet(A\times\IBnu{X})$ where
  \begin{equation*}
    (\out f)^\istar(x)=\{\brks{a,\tuo(\{\inl x\})}\}.
  \end{equation*}
  We next form the map
  $g=(\CSet\inr)[(\out f)^\istar,\eta]^\klstar\out\colon \IBnu{X}\to
  \CSet(\iobj+A\times\IBnu{X})$,
  where~$\eta$ and $(\argument)^\klstar$ are the unit and the Kleisli lifting of $\CSet$, so for
  $t\in\IBnu{X}=\nu\gamma.\,\CSet(X+A\times\gamma)$,
  \begin{align*}
     g(t) & =\{\inr(\out f)^\istar(x)\mid \inl x\in\out t\} \cup\{\inr\brks{b,s}\in A\times\IBnu{X}\mid \inr\brks{b,s}\in\out t\}\\
     & =\{\inr\brks{a,\tuo(\{x\}})\mid \inl x\in\out t\} \cup\{\inr\brks{b,s}\in A\times\IBnu{X}\mid \inr\brks{b,s}\in\out t\}.
  \end{align*}
  We then obtain a final coalgebra morphism
  $\coit g\colon \IBnu{X}\to\IBnu\iobj=\nu\gamma.\, \CSet(A\times\gamma)$.
  The solution $f^\iistar(x)$ is obtained by applying $\coit g$ to
  $\eta^\nu_{X}(x)=\tuo(\eta_{X,\IBnu X}(x))=\tuo(\{\inl x\})$, using
  the description of $\eta^\nu$ recalled in the proof of
  Theorem~\ref{thm:ext_guard}. Since
  $g(\tuo(\{\inl x\}))=\{\inr\brks{a,\tuo(\{\inl x\})}\}$, we obtain
  that $f^\iistar(x)$ is $a^\omega$, as expected.
\item Consider a further variation of the same example obtained by
  replacing $A$ in the previous example by $1+A$, where the adjoined
  element is supposed to capture the invisible action $\tau$ in the
  usual sense of process algebra~\cite{Milner89}.  Applying
  Theorem~\ref{thm:ext_guard} to $\IB{X}{Y}=\CSet(X+(1+A)\times Y)$ as
  in the previous example, we would derive a notion of guardedness
  that identifies as guarded any recursive call preceded by an action,
  visible or not.  We can refine this view by allowing only visible
  actions as guards, which is in fact standard for
  CCS~\cite{Milner89}. To this end, consider the obvious isomorphism
\begin{align*}
\nu\gamma.\,\CSet(X+(1+A)\times\gamma)\cong \nu\gamma'.\,\nu\gamma.\,\CSet(X+\gamma + A\times\gamma'),  
\end{align*}
which involves two more parametrized monads:
$\CSet(\argument+\argument + A\times\argument)\colon\Set\times
(\Set\times\Set)\to\Set$
and
$\nu\gamma.\,\CSet(\argument+\gamma +
A\times\argument)\colon\Set\times\Set\to\Set$.
The latter parametrized monad is formed on top of the former. We equip
$\nu\gamma.\,\CSet(\argument+\gamma + A\times\argument)$ with the
vacuous notion of guardedness. By furthermore forming the fixpoint
$\nu\gamma'.\,\nu\gamma.\,\CSet(X+\gamma + A\times\gamma')$, we obtain
precisely the notion of guardedness we aimed at for the isomorphic
monad $\IB[\hash^\nu]{}{}$.
 \item Consider $TX = \CSet(\mu\gamma.\, X + 1 + A\times\gamma)$, which can be understood as a semantic
domain for processes with results in $X$ as before, but now modulo \emph{finite trace 
equivalence} instead of strong bisimilarity as the underlying equivalence relation: the elements of $TX$ are sets of traces from 
$\mu\gamma.\, X + 1 + A\times\gamma\cong A^\star + A^\star\times X$ consisting 
of terminating traces (from $A^\star\times X$) and non-terminating traces (from $A^\star$).
In order to apply our theory to this example, we make use of Hasuo et al.'s results on
\emph{coalgebraic finite trace semantics}~\cite{HasuoJacobsEtAl07}. Specifically,
we make use of the fact that due to presence of a canonical distributive law
\begin{displaymath}
	X + 1 + A\times\CSet \to \CSet(X + 1 + A\times\argument) 
\end{displaymath}
and a suitable order-enrichment of $\CSet$, the object
$\mu\gamma.\, X + 1 + A\times\gamma$ computed in $\Set$ 
carries a final coalgebra $\nu\gamma.\, X + 1 + A\times\gamma$ in the Kleisli category
of $\CSet$. In this category we equip the parametrized monad
$\IB{}{} = \argument + 1 + A\times\argument$ with the vacuous notion of
guardedness and thus derive the notion of guardedness for
$\IB[\hash^\nu]{}{}$, allowing exactly for recursive calls preceded by
actions from $A$.  Again, by Proposition~\ref{prop:triv-iter} and by
Theorem~\ref{thm:ext_guard}~(3), the obtained monad is guarded
iterative.

Note that the monad $\CSet(\mu\gamma.\, \argument + 1 + A\times\gamma)$ is arguably too large, as it contains sets
of traces not realized by any process from
$\nu\gamma.\,\CSet(X+A\times\gamma)$. This can easily be fixed by
cutting down to the submonad of $\CSet(\mu\gamma.\, \argument + 1 + A\times\gamma)$ consisting of
the \emph{prefix-closed} sets of traces, i.e.\ such sets $S$ that
$st\in S$ implies $s\in S$ and $\brks{st,x}\in S$ implies $s\in S$. It
is easy to see that this is a guarded pre-iterative submonad of
$\BBT$, and therefore guarded iterative.
\end{cenumerate}
\end{example}

\section{Complete Elgot Monads and Iteration Congruences}\label{sec:congruence}
\begin{figure}[t!]
  \tikzset{
    font=\tiny,
    nonterminal/.style={
      rectangle,
      minimum size=6mm,
      very thick,
      draw=orange!50!black!50,         %
      fill=orange!50!white,
      font=\itshape
    },
    terminal/.style={
      scale=.5,
      circle,
      inner sep=0pt,
      thin,draw=black!50,
      top color=white,bottom color=black!20,
      font=\ttfamily
    },
    iterated/.style={
      fill=green!20,
      thick,
      draw=green!50
    },
    natural/.style={
      circle,
      minimum size=4mm,
      inner sep=2pt,
      thin,draw=black!50,
      top color=white,bottom color=black!20,
      font=\ttfamily},
    skip loop/.style={to path={-- ++(0,#1) -| (\tikztotarget)}},
    o/.style={
      shorten >=#1,
      decoration={
        markings,
        mark={
          at position 1
          with {
            \fill[black!55] circle [radius=#1];
          }
        }
      },
      postaction=decorate
    },
    o/.default=2.5pt,
    p/.style={
      shorten <=#1,
      decoration={
        markings,
        mark={
          at position 0
          with {
            \fill[black!55] circle [radius=#1];
          }
        }
      },
      postaction=decorate
    },
    p/.default=2.5pt
  }

  {
    \tikzset{nonterminal/.append style={text height=1.5ex,text depth=.25ex}}
    \tikzset{natural/.append style={text height=1.5ex,text depth=.25ex}}
  }
  \captionsetup[subfigure]{labelformat=empty,justification=justified,singlelinecheck=false}
  \pgfdeclarelayer{background}
  \pgfdeclarelayer{foreground}
  \pgfsetlayers{background,main,foreground}
    \begin{subfigure}{\textwidth}
    \centering
    \caption{Fixpoint:}
    \vspace{-2ex}
    \raisebox{-.5\height}{
    \begin{tikzpicture}[
      point/.style={coordinate},>=stealth',thick,draw=black!50,
      tip/.style={->,shorten >=0.007pt},every join/.style={rounded corners},
      hv path/.style={to path={-| (\tikztotarget)}},
      vh path/.style={to path={|- (\tikztotarget)}},
      text height=1.5ex,text depth=.25ex %
      ]
      \node [nonterminal] (f) {$f$};
      \draw [<-] (f.west) -- +(-1,0) node [midway,above] {$X$};
      \path [o,<-,draw] ($(f.west)+(-0.5,0)$) -- +(0,-0.8) -| ($(f.east)+(0.5,-0.15)$) node [pos=0.8,right] {$X$} -- +(-0.5,0);
      \draw [->] (f.east)++(0,0.15) -- +(1,0) node [midway,above] {$Y$};
      
      \begin{pgfonlayer}{background}
        \draw [iterated] ($(f.north west)+(-0.25,0.25)$) rectangle ($(f.south east)+(0.25,-0.25)$);
      \end{pgfonlayer}
    \end{tikzpicture}
    }
    ~~=~~
    \raisebox{-.5\height}{
    \begin{tikzpicture}[
      point/.style={coordinate},>=stealth',thick,draw=black!50,
      tip/.style={->,shorten >=0.007pt},every join/.style={rounded corners},
      hv path/.style={to path={-| (\tikztotarget)}},
      vh path/.style={to path={|- (\tikztotarget)}},
      text height=1.5ex,text depth=.25ex %
      ]
      \node [nonterminal] (f) {$f$};
      \node [nonterminal] (f2) at ($(f.east)+(1.5,-0.15)$) {$f$};
      \draw [<-] (f.west) -- +(-1,0) node [midway,above] {$X$};
      \draw [p,->] (f.east)++(0,-0.15) -- (f2.west) node [pos=0.25,below] {$X$};
      \path [o,<-,draw] (f2.west)++(-0.5,0) -- ++(0,-0.8) -| ($(f2.east)+(0.5,-0.15)$) node [near end,right] {$X$} -- +(-0.5,0);
      \draw [->] (f.east)++(0,0.15) -- node [midway,above] {$Y$} ++(0.5,0) -- ++(0,0.5) -- ++(3,0);
      \draw [->] (f2.east)++(0,0.15) -- ++(0.5,0) -- node [midway,right] {$Y$} ++(0,0.65);

      \begin{pgfonlayer}{background}
        \draw [iterated] ($(f2.north west)+(-0.25,0.25)$) rectangle ($(f2.south east)+(0.25,-0.25)$);
      \end{pgfonlayer}
    \end{tikzpicture}
    }
  \end{subfigure}
  \par%
  \begin{subfigure}{\textwidth}
    \centering
    \caption{Naturality:}
    \raisebox{-.5\height}{
    \begin{tikzpicture}[
      point/.style={coordinate},>=stealth',thick,draw=black!50,
      tip/.style={->,shorten >=0.007pt},every join/.style={rounded corners},
      hv path/.style={to path={-| (\tikztotarget)}},
      vh path/.style={to path={|- (\tikztotarget)}},
      text height=1.5ex,text depth=.25ex %
      ]
      \node [nonterminal] (f) {$f$};
      \node [nonterminal] (g) at ($(f.east)+(1.5,0.15)$) {$g$};
      \draw [<-] (f.west) -- +(-1,0) node [midway,above] {$X$};
      \path [o,<-,draw] (f.west)++(-0.5,0) -- ++(0,-0.8) -| ($(f.east)+(0.5,-0.15)$) node [near end,right] {$X$}  -- ++(-0.5,0);
      \draw [->] ($(f.east)+(0,0.15)$) -- (g) node [midway,above] {$Y$};
      \draw [->] (g.east) -- +(1,0) node [midway,above] {$Z$};
      \begin{pgfonlayer}{background}
        \draw [iterated] ($(f.north west)+(-0.25,0.25)$) rectangle ($(f.south east)+(0.25,-0.25)$);
      \end{pgfonlayer}
    \end{tikzpicture}
    }
    ~~=~~
    \raisebox{-.5\height}{
    \begin{tikzpicture}[
      point/.style={coordinate},>=stealth',thick,draw=black!50,
      tip/.style={->,shorten >=0.007pt},every join/.style={rounded corners},
      hv path/.style={to path={-| (\tikztotarget)}},
      vh path/.style={to path={|- (\tikztotarget)}},
      text height=1.5ex,text depth=.25ex %
      ]
      \node [nonterminal] (f) {$f$};
      \node [nonterminal] (g) at ($(f.east)+(1.5,0.15)$) {$g$};
      \draw [<-] (f.west) -- +(-1,0) node [midway,above] {$X$};
      \path [o,<-,draw] (f.west)++(-0.5,0) 
        -- ++(0,-0.75) 
        -| ($(g.east)+(0.5,-0.5)$) node [near end,right] {$X$} 
        -| ($(f.east)+(0.6,-0.15)$) 
        -- ++(-0.6,0);
      \draw [->] ($(f.east)+(0,0.15)$) -- (g) node [midway,above] {$Y$};
      \draw [->] (g.east) -- +(1,0) node [midway,above] {$Z$};
      \begin{pgfonlayer}{background}
        \draw [iterated] ($(f.north west)+(-0.25,0.25)$) rectangle ($(g.south east)+(0.25,-0.35)$);
      \end{pgfonlayer}
    \end{tikzpicture}
    }
  \end{subfigure}
  \par\medskip
  \begin{subfigure}{\textwidth}
    \centering
    \vspace{1ex}
    \caption{Codiagonal:}
    \vspace{-1ex}
    \raisebox{-.5\height}{
    \begin{tikzpicture}[
      point/.style={coordinate},>=stealth',thick,draw=black!50,
      tip/.style={->,shorten >=0.007pt},every join/.style={rounded corners},
      hv path/.style={to path={-| (\tikztotarget)}},
      vh path/.style={to path={|- (\tikztotarget)}},
      text height=1.5ex,text depth=.25ex %
      ]
      \node [nonterminal,minimum height=1.2cm] (f) {$g$};
      \draw [<-] (f.west) -- +(-1,0) node [midway,above] {$X$};
      \draw [->] (f.east)++(0,0.4) -- ++(1.5,0) node [pos=0.3,above] {$Y$};
      \draw [p,->] (f.east) -- ++(1,0) --node [midway,right] {$X$}  ++(0,-1.1) -| ($(f.west)+(-0.5,0)$);
      \draw [p,->] (f.east)++(0,-0.4) -- node [midway,below] {$X$} ++(0.5,0) -- ++(0,0.4);
      \begin{pgfonlayer}{background}
        \draw [iterated] ($(f.north west)+(-0.25,0.25)$) rectangle ($(f.south east)+(0.75,-0.25)$);
      \end{pgfonlayer}
    \end{tikzpicture}
    }
    ~~=~~
    \raisebox{-.5\height}{
    \begin{tikzpicture}[
      point/.style={coordinate},>=stealth',thick,draw=black!50,
      tip/.style={->,shorten >=0.007pt},every join/.style={rounded corners},
      hv path/.style={to path={-| (\tikztotarget)}},
      vh path/.style={to path={|- (\tikztotarget)}},
      text height=1.5ex,text depth=.25ex %
      ]
      \node [nonterminal,minimum height=1cm] (f) {$g$};
      \draw [<-] (f.west) -- +(-1.6,0) node [pos=0.7,above] {$X$};
      \draw [->] (f.east)++(0,0.3) -- ++(1.5,0) node [pos=0.3,above] {$Y$};
      \draw [p,->] (f.east) -- ++(1.1,0) -- node [midway,right] {$X$} ++(0,-1.35) -| ($(f.west)+(-0.95,0)$);
      \draw [p,->] (f.east)++(0,-0.3) -- ++(0.5,0) -- node [midway,right] {$X$} ++(0,-0.65) -| ($(f.west)+(-0.5,0)$);
      \begin{pgfonlayer}{background}
        \draw [iterated] ($(f.north west)+(-0.75,0.35)$) rectangle ($(f.south east)+(0.85,-0.6)$);
        \draw [iterated,fill=green!50] ($(f.north west)+(-0.25,0.2)$) rectangle ($(f.south east)+(0.25,-0.25)$);
      \end{pgfonlayer}
    \end{tikzpicture}
    }
  \end{subfigure}
  \par%
  \begin{subfigure}{\textwidth}
    \caption{Uniformity:}
    \centering
\begin{tabular}{rcl}
   \raisebox{-.5\height}{
    \begin{tikzpicture}[
      point/.style={coordinate},>=stealth',thick,draw=black!50,
      tip/.style={->,shorten >=0.007pt},every join/.style={rounded corners},
      hv path/.style={to path={-| (\tikztotarget)}},
      vh path/.style={to path={|- (\tikztotarget)}},
      text height=1.5ex,text depth=.25ex %
      ]
      \node [nonterminal,fill=blue!20,draw=blue!50] (h) {$h$};
      \node [nonterminal] (f) at ($(h.east)+(1.5,0)$) {$f$};
      \draw [<-] (h.west) -- +(-1,0) node [midway,above] {$Z$};
      \draw [->] (h.east) -- (f.west) node [midway,above] {$X$};
      \draw [->] (f.east)++(0,0.15) -- +(1,0) node [midway,above] {$Y$};
      \draw [p,->] (f.east)++(0,-0.15) -- +(1,0) node [midway,below] {$X$};
    \end{tikzpicture}
    }
    &$~~=~~$&
    \raisebox{-.5\height}{
    \begin{tikzpicture}[
      point/.style={coordinate},>=stealth',thick,draw=black!50,
      tip/.style={->,shorten >=0.007pt},every join/.style={rounded corners},
      hv path/.style={to path={-| (\tikztotarget)}},
      vh path/.style={to path={|- (\tikztotarget)}},
      text height=1.5ex,text depth=.25ex %
      ]
      \node [nonterminal] (f) {$g$};
      \node [nonterminal,fill=blue!20,draw=blue!50] (h) at ($(f.east)+(1.5,-0.15)$) {$h$};
      \draw [<-] (f.west) -- +(-1,0) node [midway,above] {$Z$};
      \draw [p,->] (f.east)++(0,-0.15) -- (h.west) node [midway,below] {$Z$};
      \draw [->] (f.east)++(0,0.15) -- ++(0.65,0) -- ++(0,0.4) node [midway,left] {$Y$} -- ++(2.35,0);
      \draw [->] (h.east) -- +(1,0) node [midway,above] {$X$};
    \end{tikzpicture}
    }\\
    &\Large{$\Downarrow$}&\\
    \raisebox{-.5\height}{
    \begin{tikzpicture}[
      point/.style={coordinate},>=stealth',thick,draw=black!50,
      tip/.style={->,shorten >=0.007pt},every join/.style={rounded corners},
      hv path/.style={to path={-| (\tikztotarget)}},
      vh path/.style={to path={|- (\tikztotarget)}},
      text height=1.5ex,text depth=.25ex %
      ]
      \node [nonterminal,fill=blue!20,draw=blue!50] (h) {$h$};
      \node [nonterminal] (f) at ($(h.east)+(1.5,0)$) {$f$};
      \draw [<-] (h.west) -- +(-1,0) node [midway,above] {$Z$};
      \draw [->] (h.east) -- (f.west) node [midway,above] {$X$};
      \draw [->] (f.east)++(0,0.15) -- +(1,0) node [midway,above] {$Y$};
      \path [o,<-,draw] (f.west)++(-0.5,0) -- ++(0,-0.8) -|
      ($(f.east)+(0.5,-0.15)$) node [pos=0.8,right] {$X$} -- ++(-0.5,0);
      \begin{pgfonlayer}{background}
        \draw [iterated] ($(f.north west)+(-0.25,0.25)$) rectangle ($(f.south east)+(0.25,-0.25)$);
      \end{pgfonlayer}
    \end{tikzpicture}
    }
    &$~~=~~$&
    \raisebox{-.5\height}{
    \begin{tikzpicture}[
      point/.style={coordinate},>=stealth',thick,draw=black!50,
      tip/.style={->,shorten >=0.007pt},every join/.style={rounded corners},
      hv path/.style={to path={-| (\tikztotarget)}},
      vh path/.style={to path={|- (\tikztotarget)}},
      text height=1.5ex,text depth=.25ex %
      ]
      \node [nonterminal] (f) {$g$};
      \draw [<-] (f.west) -- +(-1,0) node [midway,above] {$Z$};
      \path [o,<-,draw] ($(f.west)+(-0.5,0)$) -- +(0,-0.8) -| ($(f.east)+(0.5,-0.15)$) node [pos=0.8,right] {$Z$} -- +(-0.5,0);
      \draw [->] (f.east)++(0,0.15) -- +(1,0) node [midway,above] {$Y$};
      
      \begin{pgfonlayer}{background}
        \draw [iterated] ($(f.north west)+(-0.25,0.25)$) rectangle ($(f.south east)+(0.25,-0.25)$);
      \end{pgfonlayer}
    \end{tikzpicture}
    }
  \end{tabular}
  \end{subfigure}
\vspace{2ex}
\caption{Axioms of guarded iteration.}
\label{fig:ax}
\end{figure}

%
%
%
%

\begin{figure}[t!]
  \tikzset{
    font=\tiny,
    nonterminal/.style={
      rectangle,
      minimum size=6mm,
      very thick,
      draw=orange!50!black!50,         %
      fill=orange!50!white,
      font=\itshape
    },
    terminal/.style={
      scale=.5,
      circle,
      inner sep=0pt,
      thin,draw=black!50,
      top color=white,bottom color=black!20,
      font=\ttfamily
    },
    iterated/.style={
      fill=green!20,
      thick,
      draw=green!50
    },
    natural/.style={
      circle,
      minimum size=4mm,
      inner sep=2pt,
      thin,draw=black!50,
      top color=white,bottom color=black!20,
      font=\ttfamily},
    skip loop/.style={to path={-- ++(0,#1) -| (\tikztotarget)}},
    o/.style={
      shorten >=#1,
      decoration={
        markings,
        mark={
          at position 1
          with {
            \fill[black!55] circle [radius=#1];
          }
        }
      },
      postaction=decorate
    },
    o/.default=2.5pt,
    p/.style={
      shorten <=#1,
      decoration={
        markings,
        mark={
          at position 0
          with {
            \fill[black!55] circle [radius=#1];
          }
        }
      },
      postaction=decorate
    },
    p/.default=2.5pt
  }

  {
    \tikzset{nonterminal/.append style={text height=1.5ex,text depth=.25ex}}
    \tikzset{natural/.append style={text height=1.5ex,text depth=.25ex}}
  }
  \captionsetup[subfigure]{labelformat=empty,justification=justified,singlelinecheck=false}
  \pgfdeclarelayer{background}
  \pgfdeclarelayer{foreground}
  \pgfsetlayers{background,main,foreground}
 \begin{subfigure}{\textwidth}
   \centering
   \vspace{2.5ex}
   \caption{Dinaturality 1:}
   \vspace{.1ex}
   \raisebox{-.5\height}{
   \begin{tikzpicture}[
     point/.style={coordinate},>=stealth',thick,draw=black!50,
     tip/.style={->,shorten >=0.007pt},every join/.style={rounded corners},
     hv path/.style={to path={-| (\tikztotarget)}},
     vh path/.style={to path={|- (\tikztotarget)}},
     text height=1.5ex,text depth=.25ex %
     ]
     \node [nonterminal] (f) {$g$};
     \node [nonterminal] (g) at ($(f.east)+(1.5,-0.15)$) {$h$};
     \draw [<-] (f.west) -- +(-1,0) node [midway,above] {$X$};
     \path [<-,draw] (f.west)++(-0.5,0) -- ++(0,-0.95) -| ($(g.east)+(0.85,-0.15)$) node [near end,right] {$X$} -- ++(-0.85,0);
     \draw [->] (f.east)++(0,0.15) -- ++(0.65,0) node [midway,above] {$Y$} -- ++(0,0.4) -- ++(2.5,0);
     \draw [p,->] (f.east)++(0,-0.15) -- (g.west) node [midway,below] {$Z$};
     \draw [->] (g.east)++(0,0.15) -- ++(0.25,0) -- ++(0,0.55) node [midway,right] {$Y$};

     \begin{pgfonlayer}{background}
       \draw [iterated] ($(f.north west)+(-0.25,0.5)$) rectangle ($(g.south east)+(0.6,-0.25)$);
     \end{pgfonlayer}
   \end{tikzpicture}
   }
   ~~=~~
   \raisebox{-.5\height}{
   \begin{tikzpicture}[
     point/.style={coordinate},>=stealth',thick,draw=black!50,
     tip/.style={->,shorten >=0.007pt},every join/.style={rounded corners},
     hv path/.style={to path={-| (\tikztotarget)}},
     vh path/.style={to path={|- (\tikztotarget)}},
     text height=1.5ex,text depth=.25ex %
     ]
     \node [nonterminal] (h) at ($(f.west)+(-1.2,0.15)$) {$g$};
     \node [nonterminal] (f) {$h$};
     \node [nonterminal] (g) at ($(f.east)+(1.5,-0.15)$) {$g$};
     \draw [<-] (h.west) -- ++(-0.5,0) node [midway,above] {$X$};
     \draw [->] (h.east)++(0,0.15) -- ++(0.4,0) -- ++(0,0.75) node [midway,left] {$Y$} -- ++(3.75,0) -- ++(0,-0.5);
     \draw [o,<-] (f.west) -- ($(h.east)+(0,-0.15)$) node [pos=0.8,below] {$Z$};
     \path [o,<-,draw] (f.west)++(-0.5,0) -- ++(0,-0.95) -| ($(g.east)+(0.85,-0.15)$) node [near end,right] {$Z$} -- ++(-0.85,0);
     \draw [->] (f.east)++(0,0.15) -- ++(0.65,0) node [midway,above] {$Y$} -- ++(0,0.4) -- ++(2.5,0);
     \draw [->] (f.east)++(0,-0.15) -- (g.west) node [midway,below] {$X$};
     \draw [->] (g.east)++(0,0.15) -- ++(0.25,0) -- ++(0,0.55) node [midway,right] {$Y$};

     \begin{pgfonlayer}{background}
       \draw [iterated] ($(f.north west)+(-0.25,0.5)$) rectangle ($(g.south east)+(0.6,-0.25)$);
     \end{pgfonlayer}
   \end{tikzpicture}
   }
 \end{subfigure}
 \begin{subfigure}{\textwidth}
   \centering
   \vspace{2.5ex}
   \caption{Dinaturality 2:}
   \vspace{.1ex}
   \raisebox{-.5\height}{
   \begin{tikzpicture}[
     point/.style={coordinate},>=stealth',thick,draw=black!50,
     tip/.style={->,shorten >=0.007pt},every join/.style={rounded corners},
     hv path/.style={to path={-| (\tikztotarget)}},
     vh path/.style={to path={|- (\tikztotarget)}},
     text height=1.5ex,text depth=.25ex %
     ]
     \node [nonterminal] (f) {$g$};
     \node [nonterminal] (g) at ($(f.east)+(1.5,-0.15)$) {$h$};
     \draw [<-] (f.west) -- +(-1,0) node [midway,above] {$X$};
     \path [o,<-,draw] (f.west)++(-0.5,0) -- ++(0,-0.95) -| ($(g.east)+(0.85,-0.15)$) node [near end,right] {$X$} -- ++(-0.85,0);
     \draw [->] (f.east)++(0,0.15) -- ++(0.65,0) node [midway,above] {$Y$} -- ++(0,0.4) -- ++(2.5,0);
     \draw [->] (f.east)++(0,-0.15) -- (g.west) node [midway,below] {$Z$};
     \draw [->] (g.east)++(0,0.15) -- ++(0.25,0) -- ++(0,0.55) node [midway,right] {$Y$};

     \begin{pgfonlayer}{background}
       \draw [iterated] ($(f.north west)+(-0.25,0.5)$) rectangle ($(g.south east)+(0.6,-0.25)$);
     \end{pgfonlayer}
   \end{tikzpicture}
   }
   ~~=~~
   \raisebox{-.5\height}{
   \begin{tikzpicture}[
     point/.style={coordinate},>=stealth',thick,draw=black!50,
     tip/.style={->,shorten >=0.007pt},every join/.style={rounded corners},
     hv path/.style={to path={-| (\tikztotarget)}},
     vh path/.style={to path={|- (\tikztotarget)}},
     text height=1.5ex,text depth=.25ex %
     ]
     \node [nonterminal] (h) at ($(f.west)+(-1.2,0.15)$) {$g$};
     \node [nonterminal] (f) {$h$};
     \node [nonterminal] (g) at ($(f.east)+(1.5,-0.15)$) {$g$};
     \draw [<-] (h.west) -- ++(-0.5,0) node [midway,above] {$X$};
     \draw [->] (h.east)++(0,0.15) -- ++(0.4,0) -- ++(0,0.75) node [midway,left] {$Y$} -- ++(3.75,0) -- ++(0,-0.5);
     \draw [<-] (f.west) -- ($(h.east)+(0,-0.15)$) node [pos=0.8,below] {$Z$};
     \path [<-,draw] (f.west)++(-0.5,0) -- ++(0,-0.95) -| ($(g.east)+(0.85,-0.15)$) node [near end,right] {$Z$} -- ++(-0.85,0);
     \draw [->] (f.east)++(0,0.15) -- ++(0.65,0) node [midway,above] {$Y$} -- ++(0,0.4) -- ++(2.5,0);
     \draw [p,->] (f.east)++(0,-0.15) -- (g.west) node [midway,below] {$X$};
     \draw [->] (g.east)++(0,0.15) -- ++(0.25,0) -- ++(0,0.55) node [midway,right] {$Y$};

     \begin{pgfonlayer}{background}
       \draw [iterated] ($(f.north west)+(-0.25,0.5)$) rectangle ($(g.south east)+(0.6,-0.25)$);
     \end{pgfonlayer}
   \end{tikzpicture}
   }
 \end{subfigure}
 \begin{subfigure}{\textwidth}
   \centering
   \vspace{2.5ex}
   \caption{Bekić identity:}
   \vspace{.1ex}
  \raisebox{-.4\height}{
  \begin{tikzpicture}[
    point/.style={coordinate},>=stealth',thick,draw=black!50,
    tip/.style={->,shorten >=0.007pt},every join/.style={rounded corners},
    hv path/.style={to path={-| (\tikztotarget)}},
    vh path/.style={to path={|- (\tikztotarget)}},
    text height=1.5ex,text depth=.25ex %
    ]
    \node [nonterminal,minimum height=8mm] (g) {$g$};
    \node [nonterminal,minimum height=8mm] (f) at ($(g.east)+(1.75,0)$) {$f$};
    \node [nonterminal,minimum height=8mm] (g2) at ($(f.east)+(1.5,-0.85)$) {$g$};

    \path [draw,<-] (g.west) -- +(-1.25,0) node [pos=0.35,above] {$X$};
    \path [name path=ygdirect,->,draw] (g.west)++(-1.25,0.85) -| ($(g.east)+(0.5,0)$) node [pos=0.18,above] {$Y$};

    \path [p,draw,->] (g.east) -- (f.west);  
    \path [name path=ygexit,draw] ($(g.east)+(0,0.25)$) -- ($(f.west)+(-0.5,0.25)$);
    \path [draw,->] ($(f.west)+(-0.5,0.25)$) -- ($(f.west)+(-0.5,0.85)$) -| ($(g2.east)+(1.15,0.25)$);
    \path [p,draw,->] ($(g.east)+(0,-0.25)$) -| ($(g.east)+(0.5,-0.85)$) -| ($(g.west)+(-0.5,0)$);

    \path [name path=zg2exit,draw,->] ($(g2.east)+(0,0.25)$) -- ++(1.85,0) node [pos=0.85,above] {$Z$};
    \path [p,draw,->] ($(g2.east)+(0,0)$) -| ($(g2.east)+(1.15,-1.05)$) -| ($(f.west)+(-0.5,0)$);
    \path [p,draw,->] ($(g2.east)+(0,-0.25)$) -| ($(g2.east)+(0.5,-0.77)$) -| ($(g2.west)+(-0.5,0)$);

    \path [draw,->] ($(f.east)+(0,0.25)$) -| ($(g2.east)+(0.75,0.25)$);
    \path [p,name path=yfdirect,draw,->] ($(f.east)+(0,0)$) -| ($(g2.east)+(0.5,0)$);
    \path [p,draw,->] ($(f.east)+(0,-0.25)$) -| ($(g2.west)+(-0.85,0)$) -- ($(g2.west)$);

    \begin{pgfonlayer}{background}
      \draw [iterated] ($(g.north west)+(-0.25,.25)$) rectangle ($(g.south east)+(0.25,-0.25)$);
      \draw [iterated] ($(f.north west)+(-0.25,.25)$) rectangle ($(g2.south east)+(0.95,-0.45)$);
      \draw [iterated,fill=green!50] ($(g2.north west)+(-0.25,.25)$) rectangle ($(g2.south east)+(0.25,-0.25)$);
    \end{pgfonlayer}
  \end{tikzpicture}}
  ~~=~~
  \raisebox{-.4\height}{
  \begin{tikzpicture}[
    point/.style={coordinate},>=stealth',thick,draw=black!50,
    tip/.style={->,shorten >=0.007pt},every join/.style={rounded corners},
    hv path/.style={to path={-| (\tikztotarget)}},
    vh path/.style={to path={|- (\tikztotarget)}},
    text height=1.5ex,text depth=.25ex %
    ]
    \node [nonterminal,minimum height=8mm] (f) {$f$};
    \node [nonterminal,minimum height=8mm] (g) at ($(f.south)+(0,-1)$) {$g$};

    \path [draw,->] ($(f.west)+(-1.25,0)$) -- (f.west) node [pos=0.25,above] {$Y$};
    \path [draw,->] ($(g.west)+(-1.25,0)$) -- (g.west) node [pos=0.25,below] {$X$};

    \path [draw,->] ($(f.east)+(0,0.25)$) -| ($(g.east)+(0.25,0.25)$);

    \path [draw,->] ($(g.east)+(0,0.25)$) -- ++(1.65,0) node [above,pos=0.85] {$Z$};
    \path [p,draw,->] (f.east) -| ($(g.east)+(0.45,0)$);
    \path [p,draw,->] ($(f.east)+(0,-0.25)$) -| ($(g.east)+(0.65,-0.25)$);
    \path [p,draw,->] (g.east) -- ++(1,0) -- ++(0,2.3) -| ($(f.west)+(-0.5,0)$);
    \path [p,draw,->] ($(g.east)+(0,-0.25)$) -- ++(1,0) -- ++(0,-0.6) -| ($(g.west)+(-0.5,0)$);
    \begin{pgfonlayer}{background}
      \draw [iterated] ($(f.north west)+(-0.25,.25)$) rectangle ($(g.south east)+(0.80,-0.25)$);
    \end{pgfonlayer}
  \end{tikzpicture}}
\end{subfigure}  
 \caption{Derivable laws of iteration.}
\label{fig:dinat-bekic}
\end{figure}

%
%
%
%

 Besides the fixpoint identity we are interested in
natural guarded versions of the classical properties of the iteration
operator, which we refer to as the \emph{iteration
  laws}~\cite{Elgot75,BloomEsik93,SimpsonPlotkin00}:
\begin{itemize}
  \item\emph{naturality:} $g^{\klstar} f^{\istar} = ([(T\inl) \comp g, \eta\inr]^{\klstar} \comp f)^{\istar}$ for $f\colon X\to_2 T(Y+X)$, $g\colon  Y \to TZ$;
  \item\emph{codiagonal:} $(T[\id,\inr] \comp f)^{\istar} = f^{\istar\istar}$ for  $f\colon  X \to_{12,2} T((Y + X) + X)$;
  \item\emph{uniformity:} $f \comp h = T(\id+ h) \comp g$ implies
	$f^{\istar} \comp h = g^{\istar}$ for $f\colon  X \to_2 T(Y + X)$, $g\colon  Z \to_2 T(Y + Z)$ and
	$h\colon  Z \to X$.
\end{itemize}
Remarkably, this list does not include the 
well-known \emph{dinaturality law}, as is turns out to be derivable 
(cf.~\cite{GoncharovEA18,EsikGoncharov16}). We prove this further below.
The above axioms are summarized in graphical form in Figure~\ref{fig:ax},
and then become quite intuitive. 
We indicate the scope of the
iteration operator by a shaded box and guardedness by bullets at the
outputs of a morphism. Blue boxes indicate morphisms of the base category $\BC$,
to contrast orange boxes referring to Kleisli morphisms.

A guarded pre-iterative monad is called a \emph{complete Elgot monad}
if it is totally guarded and satisfies all iteration laws. In the
sequel we shorten `complete Elgot monads' to `Elgot monads' (to be
distinguished from Elgot monads in the sense
of~\cite{AdamekMiliusEtAl10}, which have solutions only for morphisms
with finitely presentable domain).

In general, the fact that the iteration laws are correctly formulated
relies on the axioms of guardedness.
For example, in 
the codiagonal axiom, this follows
by~\textbf{(cmp)} from the assumption
$f\colon  X \to_{12,2} T((Y + X) + X)$ that $T[\id,\inr] \comp f$ is
$\inj_2$-guarded, and by Proposition~\ref{prop:guard_preserve} that
$f^\istar$ is $\inj_2$-guarded. Indeed, the axioms for guarded monads
are designed precisely to enable the formulation of the iteration
laws.

We show next that for guarded \emph{iterative} monads, all iteration
laws are automatic. In preparation, we prove the aforementioned fact that 
dinaturality follows from the other axioms (thus generalizing corresponding recent 
observations on iteration theories~\cite{GoncharovEA18,EsikGoncharov16}).
Additionally, we show that the well-known Beki\'{c} identity is derivable 
too.
\begin{prop}\label{prop:deriv_din}
  Any guarded pre-iterative monad satisfying naturality, codiagonal
  and uniformity also satisfies 
\begin{itemize}
  \item\emph{dinaturality:} $([\eta \inl, h]^{\klstar} \comp g)^{\istar} = [\eta, ([\eta \inl,
	g]^{\klstar} \comp h)^{\istar}]^{\klstar} \comp g$ for  $g\colon  X \to_2 T(Y + Z)$ and $h\colon Z\to T(Y+X)$ or 
$g\colon  X \to T(Y + Z)$ and $h\colon Z\to_2 T(Y+X)$;
  \item\emph{Beki\'{c} identity:} $\left(T[\id+\inl,\inr\comp\inr]\comp [f, g]\right)^\istar = [h^\istar, [\eta,h^\istar]^\klstar g^\istar] $ with $f\colon X\to_{12,2} T((Y+X)+Z)$, $g\colon Z\to_{12,2} T((Y+X)+Z)$, and $h=
[\eta,g^\istar]^\klstar f\colon X\to_2 T(Y+X)$.
\end{itemize}
\end{prop}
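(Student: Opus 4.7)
The plan is to derive both laws from the fixpoint identity together with naturality, codiagonal, and uniformity, which are well-established consequences in the setting of iteration theories but must be adapted to the guarded setting by carefully tracking guardedness annotations through every intermediate step.

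For the Bekić identity, I would work with the joint equation $e = T[\id+\inl, \inr\inr]\comp[f, g]\colon X+Z\to T(Y+(X+Z))$, which is $\inr$-guarded by \textbf{(par)} applied to the $(12,2)$-guardedness of $f$ and $g$, so its iteration $e^\istar\colon X+Z\to TY$ is defined. I would verify the two components $e^\istar\inl = h^\istar$ and $e^\istar\inr = [\eta,h^\istar]^\klstar g^\istar$ separately by applying uniformity along $\inl\colon X\to X+Z$ and $\inr\colon Z\to X+Z$ respectively, after suitably identifying the restrictions of $e$ to these injections with equations equivalent to $h$ and to the substitution of $g^\istar$ into the $Z$-variables of the $[\eta, h^\istar]$-rewriting of $g$; the fixpoint identity for $g^\istar$ ensures that the substituted equation on the $\inl$-side coincides with $h$ itself.

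For dinaturality in the case $g\colon X\to_2 T(Y+Z)$ (the symmetric case being handled analogously), I would introduce the auxiliary equation
\[
f_{\mathrm{aux}} = T[\inl\inl,\inr]\comp([\eta\inl,h]^\klstar g)\colon X\to T((Y+X)+X),
\]
where the preprocessed composite $[\eta\inl,h]^\klstar g$ is $\inr$-guarded by \textbf{(cmp)} applied to $g$. A direct check shows that $f_{\mathrm{aux}}$ is $(12,2)$-guarded (vacuously in the first $X$-summand, and $\inr$-guarded in the second by inheritance from $g$) and that $T[\id,\inr]\comp f_{\mathrm{aux}} = [\eta\inl,h]^\klstar g$, so codiagonal gives $f_{\mathrm{aux}}^{\istar\istar} = ([\eta\inl,h]^\klstar g)^\istar$, which is the left-hand side of dinaturality. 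On the other hand, naturality lets one identify $f_{\mathrm{aux}}^\istar$ with an expression involving $([\eta\inl,g]^\klstar h)^\istar$ wrapped via $T\inl$, and a second iteration together with an application of uniformity produces the right-hand side $[\eta,([\eta\inl,g]^\klstar h)^\istar]^\klstar g$.

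The main obstacle in both arguments is not the algebraic manipulation itself but the careful discharge of guardedness side-conditions: for every auxiliary morphism to which $(\argument)^\istar$, codiagonal, or uniformity is applied, one must verify via \textbf{(trv)}, \textbf{(par)}, \textbf{(cmp)}, and Proposition~\ref{prop:guard_preserve} that the relevant summand is guarded. Especially in the dinaturality case, the naive joint-equation on $X+Z$ fails to be $\inr$-guarded when $h$ is not guarded, necessitating the preprocessing step that composes $h$ into $g$ before entering the double iteration. Once this bookkeeping is in place, the remaining calculations reduce to routine applications of the monad laws, coproduct calculus, and the three stipulated iteration axioms.
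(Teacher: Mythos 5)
There is a genuine gap in your treatment of the Beki\'{c} identity: the two component equations $e^\istar\comp\inl=h^\istar$ and $e^\istar\comp\inr=[\eta,h^\istar]^\klstar g^\istar$ are not instances of uniformity as you claim. Uniformity is an implication whose premise is an on-the-nose equality of \emph{equation morphisms} before iteration: to conclude $e^\istar\comp\inl=h^\istar$ you would need $e\comp\inl=T(\id+\inl)\comp h$, i.e.\ that $e\comp\inl=T[\id+\inl,\inr\comp\inr]\comp f$ factors through $T(\id+\inl)$. It does not: the $Z$-outputs of $f$ are routed by $e$ into the $Z$-summand of the recursion variable $X+Z$ (via $\inr\comp\inr$), whereas $h=[\eta,g^\istar]^\klstar f$ has already substituted $g^\istar$ for them. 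No appeal to the fixpoint identity for $g^\istar$ turns this into an equality of morphisms of the shape uniformity requires; the ``identification'' you gesture at is precisely the content of the Beki\'{c} identity being proved. The paper circumvents this by first proving two \emph{restricted} Beki\'{c} instances for equations of the special shapes $T(\id+\inl)\comp[f,g]$ and $T(\id+\inr)\comp[f,g]$, where all recursive calls pass through a single summand so that the factorization needed for uniformity holds by construction, and then reducing the general case to these via codiagonal and naturality, using the decomposition $T[\id+\inl,\inr\comp\inr]\comp[f,g]=T[\id,\inr]\comp T((\id+\inl)+\inr)\comp[f,g]$.

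Your dinaturality sketch contains a good observation (the joint equation on $X+Z$ is not guarded when $h$ is unguarded, so $h$ must first be composed into $g$), and the codiagonal step $f_{\mathrm{aux}}^{\istar\istar}=([\eta\inl,h]^\klstar g)^\istar$ is sound. But the remaining half --- ``naturality lets one identify $f_{\mathrm{aux}}^\istar$ with an expression involving $([\eta\inl,g]^\klstar h)^\istar$ wrapped via $T\inl$, and a second iteration together with an application of uniformity produces the right-hand side'' --- is not carried out and does not apply as stated: $f_{\mathrm{aux}}=[\eta\inl\inl,\,T[\inl\inl,\inr]\comp h]^\klstar g$ is not of the form $[(T\inl)\comp g',\eta\inr]^\klstar f'$ to which naturality applies, and it is exactly at this point that the intertwining of $g$ and $h$ must be resolved. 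The paper derives dinaturality in the opposite direction, \emph{from} the already-established full Beki\'{c} identity instantiated with $f\mapsto T(\inl+\id)\comp g$ and $g\mapsto(T\inl)\comp h$, combined with a symmetry consequence of uniformity and its mirror image. Since your Beki\'{c} argument is the one that is broken, your dinaturality argument is left without support as well.
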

\noindent In axiomatizations of total iteration, the Beki\'{c}
identity is sometimes taken to replace codiagonal and
dinaturality~\cite[Section~6.8]{BloomEsik93}~\cite{AdamekMiliusEtAl10,GoncharovEA18}. Both
dinaturality and the Beki\'{c} identity are again depicted graphically
in Figure~\ref{fig:dinat-bekic}.
The two versions of the dinaturality axiom
correspond to the alternative sets of guardedness assumptions in its formulation;
 basically, we need to distinguish cases on whether
the loop over~$g$ and~$h$ is guarded at~$g$ or at~$h$.

\begin{proof}
Following~\cite{EsikGoncharov16}, we consider a specific instance of
uniformity:
\begin{align}\label{eq:unif_inst}
f^\istar = ([T(\id+\inl) f, h]\gray{: X+Z\to_2 T(Y+(X+Z))})^\istar\inl
\end{align}
where $f\colon X\to_2 T(Y+X)$ and $h\colon Z\to_2 T(Y+(X+Z))$, and prove the following 
instance of the Beki\'{c} identity:
\begin{align}\label{eq:bekic1}
(T(\id + \inl)\comp [f, g]\gray{\colon X+Z\to T(Y+(X+Z))})^\istar = [f^\istar,  [\eta,f^\istar]^\klstar g] 
\end{align}
where $f\colon X\to_2 T(Y+X)$ and $g\colon Z\to_2 T(Y+X)$. Indeed, on the one hand, by~\eqref{eq:unif_inst},
\begin{align}\label{eq:bekic2}
\left(T(\id + \inl)\comp [f, g]\right)^\istar\inl = f^\istar, 
\end{align}
and on the other hand
\begin{flalign*}
&&(T(\id& + \inl)\comp [f, g] )^\istar\inr\\ 
&&=&\; [\eta,\left(T(\id + \inl)\comp[f, g]\right)^\istar]^\klstar \comp T(\id+\inl) [f, g]\inr&\by{fixpoint}\\
&&=&\; [\eta,\left(T(\id + \inl)\comp[f, g]\right)^\istar]^\klstar \comp T(\id+\inl) g\\
&&=&\; [\eta,\left(T(\id + \inl)\comp[f, g]\right)^\istar\inl]^\klstar\comp g\\*
&&=&\; [\eta, f^\istar]^\klstar\comp g.&\by{\eqref{eq:bekic2}}
\end{flalign*}
As the result we obtain~\eqref{eq:bekic1}. Analogously, we prove another 
instance of the Beki\'{c} identity, namely
\begin{align}\label{eq:bekic3}
(T(\id + \inr)\comp [f,g]\gray{\colon X+Z\to T(Y+(X+Z))})^\istar = [[\eta,g^\istar]^\klstar f,g^\istar] 
\end{align}
with $f\colon X\to_2 T(Y+Z)$ and $g\colon Z\to_2 T(Y+Z)$. We proceed to show that 
under the other axioms, these two instances imply the full Beki\'{c} identity
\begin{align}
(T[\id+\inl,\inr\comp\inr]\comp[f, g]\gray{\colon X+Z\to T(Y+(X+Z))})^\istar= [h^\istar, [\eta,h^\istar]^\klstar g^\istar] \label{eq:bekic4}
\end{align}
where $f\colon X\to_{12,2} T((Y+X)+Z)$ and $g\colon Z\to_{12,2} T((Y+X)+Z)$ and 
$h= [\eta,g^\istar]^\klstar f\colon X\to T(Y+X)$. Let us argue briefly that~$h$ is 
$\inr$-guarded. Note that the assumption for~$f$ by~\textbf{(iso)} implies that $f'=T[\id+\inl,\inr]\comp f\colon X\to T(Y+(X+Z))$
is $\inr$-guarded and therefore $h= [\eta\inl, [\eta\inl, g^\istar]]^\klstar f'$
is $\inr$-guarded by \textbf{(cmp)}.

Now, the proof of~\eqref{eq:bekic4} runs as follows:
\begin{flalign*}
&&(T[\id+&\inl,\inr\comp\inr]\comp[f, g])^\istar\\
&&=&\;\bigl(T[\id,\inr]\comp T((\id+\inl)+\inr)\comp [f,g]\bigr)^\istar\\
&&=&\;(T((\id+\inl)+\inr)\comp [f, g])^{\istar\istar}&\by{codiagonal}\\
&&=&\;(T(\id+\inr)\comp [T((\id+\inl)+\id)\comp f, T((\id+\inl)+\id)\comp g])^{\istar\istar}\\
&&=&\;[[\eta,(T((\id+\inl)+\id)\comp g)^\istar]^\klstar T((\id+\inl)+\id)\comp  f,\\
&&&\qquad (T((\id+\inl)+\id)\comp g)^\istar]^\istar&\by{\eqref{eq:bekic3}}\\
&&=&\;[[\eta(\id+\inl),(T((\id+\inl)+\id)\comp g)^\istar]^\klstar  f,\\
&&&\qquad(T((\id+\inl)+\id)\comp g)^\istar]^\istar&\by{naturality}\\
&&=&\;[T(\id+\inl)\comp [\eta,g^\istar]^\klstar\comp f, T(\id+\inl)\comp g^\istar]^\istar\\
&&=&\;(T(\id+\inl)\comp [h, g^\istar])^\istar\\
&&=&\; [h^\istar,  [\eta,h^\istar]^\klstar g^\istar].&\by{\eqref{eq:bekic1}}  
\end{flalign*}
Finally, let us derive dinaturality from~\eqref{eq:bekic4}. Suppose that
$g\colon X \to_2 T(Y + Z)$ and $h\colon Z\to T(Y+X)$ satisfy either guardedness premise of the 
dinaturality axiom and consider the following instance of~\eqref{eq:bekic4} with
$f$ replaced by $T(\inl+\id)\comp g$ and $g$ replaced by $(T\inl)\comp h$ 
(note that by the fixpoint identity, $((T\inl)\comp h)^\istar = h$):
\begin{align}\label{eq:bekic-din}
[T(\id+\inr)\comp g, T(\id+\inl)\comp h]^\istar = [([\eta\inl,h]^\klstar g)^\istar, [\eta,([\eta\inl,h]^\klstar g)^\istar]^\klstar h].   
\end{align}
Let $\gamma_{X,Y}\colon Y+X\to Y+X$ be the obvious symmetry transformation and note the 
following simple consequence of uniformity:
\begin{align}\label{eq:unif-din}
[T(\id+\inr)\comp g, T(\id+\inl)\comp h]^\istar\gamma = [T(\id+\inr)\comp h, T(\id+\inl)\comp g]^\istar.   
\end{align}
By combining~\eqref{eq:unif-din},~\eqref{eq:bekic-din} and the symmetric form of
the latter (with $h$ and $g$ switched), we obtain:
\begin{align*}
[([\eta\inl,h]^\klstar g)^\istar, [\eta,([\eta\inl,h]^\klstar g)^\istar]^\klstar h] =
[[\eta,([\eta\inl,g]^\klstar h)^\istar]^\klstar g,([\eta\inl,g]^\klstar h)^\istar].   
\end{align*}
Dinaturality is now obtained by composing both sides with  
$\inl\colon X\to X+Z$.
\end{proof}

\noindent The proof of the following result runs in accordance with
the original ideas of Elgot for iterative theories~\cite{Elgot75},
except that, by Proposition~\ref{prop:deriv_din}, dinaturality is now
replaced with uniformity.
\begin{theorem}\label{thm:compl_ax}
Every guarded iterative monad validates naturality, dinaturality, codiagonal, and uniformity.
\end{theorem}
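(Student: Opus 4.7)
The plan is to exploit the defining feature of guarded iterative monads, namely uniqueness of solutions to guarded recursive equations. By Proposition~\ref{prop:deriv_din}, it suffices to establish naturality, codiagonal, and uniformity; dinaturality then follows. For each of the three remaining laws, the strategy is the same: identify a guarded fixpoint equation whose unique solution is one side of the desired identity, verify that the other side satisfies the same equation, and invoke uniqueness.

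For uniformity, given $f\comp h = T(\id+h)\comp g$ with $f$ and $g$ both $\inr$-guarded, I would compute
$f^\istar\comp h = [\eta,f^\istar]^\klstar\comp f\comp h = [\eta,f^\istar]^\klstar\comp T(\id+h)\comp g = [\eta,f^\istar\comp h]^\klstar\comp g$,
showing that $f^\istar\comp h$ is a solution of the defining fixpoint equation for $g^\istar$. Uniqueness then yields $f^\istar\comp h = g^\istar$.

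Codiagonal and naturality follow the same recipe. For codiagonal, both $(T[\id,\inr]\comp f)^\istar$ and $f^{\istar\istar}$ must be shown to solve $x = [\eta,x]^\klstar\comp T[\id,\inr]\comp f$; checking this for $f^{\istar\istar}$ amounts to unfolding the outer and inner fixpoint identities and reassociating the coproduct injections. It is crucial here that $f^\istar\colon X\to_2 T(Y+X)$ is itself $\inj_2$-guarded by Proposition~\ref{prop:guard_preserve}, so that $f^{\istar\istar}$ is even defined. For naturality, both $g^\klstar f^\istar$ and $([(T\inl)\comp g,\eta\inr]^\klstar\comp f)^\istar$ are shown to satisfy the fixpoint equation for $[(T\inl)\comp g,\eta\inr]^\klstar\comp f$, which is $\inr$-guarded by~\textbf{(cmp)} combined with~\textbf{(trv)} applied to $\eta\inr$.

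The main obstacle will be the careful tracking of guardedness throughout: each invocation of uniqueness presupposes that the equation being solved is $\inr$-guarded in the appropriate variable, and this guardedness must be re-derived from the original hypotheses using the closure rules \textbf{(trv)}, \textbf{(cmp)}, \textbf{(par)} together with Proposition~\ref{prop:guard_preserve}. Beyond this bookkeeping, the verifications reduce to manipulation of Kleisli composition, unit and multiplication laws, and coproduct universal properties, following Elgot's original argument for iterative theories modulo replacing dinaturality with uniformity as the primitive uniqueness-style axiom.
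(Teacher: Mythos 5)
Your proposal is correct and follows essentially the same route as the paper's proof: reduce to naturality, codiagonal, and uniformity via Proposition~\ref{prop:deriv_din}, then for each law show that the candidate morphism satisfies the defining fixpoint equation of the relevant iterate and conclude by uniqueness of solutions. The guardedness bookkeeping you flag (via Proposition~\ref{prop:guard_preserve} and the closure rules) is exactly what the paper relies on to make each invocation of uniqueness legitimate.
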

\begin{proof}
By Proposition~\ref{prop:deriv_din} we only need to verify naturality, codiagonal and uniformity.
\begin{citemize}
  \item\emph{Naturality.} Let $f\colon X\to_2 T(Y+X)$ and $g\colon  Y \to TZ$. Then 
\begin{align*}
g^{\klstar} f^{\istar} 
=&\; g^{\klstar} [\eta,f^{\istar}]^\klstar f\\
=&\; [g^{\klstar}\eta,g^{\klstar} f^{\istar}]^\klstar f\\
=&\; [g,g^{\klstar} f^{\istar}]^\klstar f\\
=&\; [\eta,g^{\klstar} f^{\istar}]^\klstar [(T\inl) \comp g, \eta\inr]^{\klstar} f.
\end{align*}
Since the same equation uniquely characterizes $([(T\inl) \comp g, \eta\inr]^{\klstar} \comp f)^{\istar}$,
the latter is equal to~$g^{\klstar} f^{\istar} $.
  \item\emph{Codiagonal.} Let $f\colon  X \to_{12,2} T((Y + X) + X)$. Then 
\begin{align*}
f^{\istar\istar} 
=&\; [\eta,f^{\istar\istar}]^\klstar f^{\istar}\\
=&\; [\eta,f^{\istar\istar}]^\klstar [\eta,f^{\istar}]^\klstar f\\
=&\; [[\eta,f^{\istar\istar}], [\eta,f^{\istar\istar}]^\klstar f^{\istar}]^\klstar f\\
=&\; [[\eta,f^{\istar\istar}], f^{\istar\istar}]^\klstar f\\
=&\; [\eta, f^{\istar\istar}]^\klstar T[\inr,\id]\comp f.
\end{align*}
Therefore $f^{\istar\istar}$ satisfies the fixpoint identity for $(T[\inr,\id]\comp f)^\istar$,
and thus $f^{\istar\istar}  = (T[\inr,\id]\comp f)^\istar$.
  \item\emph{Uniformity.} Suppose that $f \comp h = T(\id+ h) \comp g$ for some 
$f\colon  X \to_2 T(Y + X)$, $g\colon  Z \to_2 T(Y + Z)$ and $h\colon  Z \to X$. Then 
\begin{align*}
f^\istar h = [\eta,f^\istar]^\klstar f\comp h = [\eta,f^\istar]^\klstar T(\id+ h) \comp g = [\eta, f^\istar h]^\klstar g,
\end{align*}
that is, $f^\istar h$ satisfies the fixpoint equation for $g^\istar$. Hence $g^\istar = f^\istar h$.\qed
\end{citemize}
\noqed\end{proof}
\noindent We now proceed to introduce key properties of morphisms of
guarded monads that allow for transferring pre-iterativity and
the iteration laws, respectively.
\begin{definition}[Guarded retraction]
Let $\BBT$ and $\BBS$ be guarded monads. We call a monad morphism
 $\rho\colon\BBT\to\BBS$ a \emph{guarded retraction} if there is a family of morphisms 
$(\upsilon_X\colon SX\to TX)_{X\in|\BC|}$  (not necessarily natural in $X$!) 
such that 
\begin{enumerate}
 \item for every $f\colon X\to_{\sigma} SY$, we have $\upsilon_Y f\colon X\to_{\sigma} TY$, and
 \item $\rho_X\upsilon_X=\id$ for all $X\in |\BC|$.
\end{enumerate}
\end{definition}

\begin{theorem}\label{thm:ext_iter}
  Let $\rho\colon\BBT\to\BBS$ be a guarded retraction, witnessed by
  $\upsilon\colon\BBS\to\BBT$, and suppose that $(\BBT,\argument^\istar)$
  is guarded pre-iterative. Then $\BBS$ is guarded pre-iterative with
  the iteration operator $(\argument)^\iistar$ given by
  $f^\iistar = \rho\comp (\upsilon f)^\istar$.
\end{theorem}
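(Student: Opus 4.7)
The plan is to verify that the proposed iteration operator is both well-defined and satisfies the $\BBS$-fixpoint identity; both will follow mechanically from the axioms of a guarded retraction together with the monad-morphism laws of $\rho$.

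First I would check well-definedness. For every $\inr$-guarded $f\colon X\to_2 S(Y+X)$, property~(1) of the guarded retraction applied with $\sigma = \inr$ yields $\upsilon_{Y+X}\comp f\colon X\to_2 T(Y+X)$, so the guarded pre-iterative structure of~$\BBT$ supplies $(\upsilon f)^\istar\colon X\to TY$, and hence $f^\iistar = \rho_Y\comp (\upsilon f)^\istar\colon X\to SY$ is a legitimate Kleisli morphism in~$\BBS$.

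Next I would verify the fixpoint identity $f^\iistar = [\eta, f^\iistar]^\kklstar\comp f$ by the calculation
\begin{align*}
f^\iistar &= \rho\comp(\upsilon f)^\istar\\
          &= \rho\comp [\eta, (\upsilon f)^\istar]^\klstar\comp \upsilon f\\
          &= [\rho\eta, \rho(\upsilon f)^\istar]^\kklstar\comp \rho\comp\upsilon f\\
          &= [\eta, f^\iistar]^\kklstar\comp f,
\end{align*}
using successively the fixpoint identity for $(\upsilon f)^\istar$ in~$\BBT$, the two monad-morphism laws for $\rho$ (preservation of unit and of Kleisli lifting), and the retraction equality $\rho\upsilon = \id$ from property~(2), combined with the definition $f^\iistar = \rho\comp(\upsilon f)^\istar$.

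The proof is essentially routine; the only subtlety worth flagging is that $\upsilon$ is not assumed natural, hence cannot be commuted past Kleisli liftings in $\BBT$. The calculation above sidesteps this because $\upsilon$ is applied only once---to $f$ itself---before being cancelled by $\rho$ on the outside, so non-naturality never gets a chance to obstruct the argument. This also explains why the statement does not simultaneously claim transfer of uniqueness of solutions: without naturality of $\upsilon$ there is no obvious way to pull an alternative $\BBS$-solution back to a $\BBT$-solution.
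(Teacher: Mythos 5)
Your proposal is correct and follows essentially the same route as the paper: the well-definedness observation is immediate from property~(1) of a guarded retraction, and your four-line computation of the fixpoint identity (fixpoint law in $\BBT$, monad-morphism laws for $\rho$, then $\rho\upsilon=\id$) is exactly the paper's proof. The remark on why non-naturality of $\upsilon$ is harmless here is accurate and a nice touch, though not needed for the argument.
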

\begin{proof}
Since $\BBT$ satisfies the fixpoint identity, $[\eta,(\upsilon f)^\istar]^\klstar\upsilon f  = (\upsilon f)^\istar$ and therefore,
\begin{flalign*}
&& f^\iistar   =&\; \rho\comp (\upsilon f)^\istar\\
&& =&\;\rho\comp [\eta,(\upsilon f)^\istar]^\klstar\upsilon f\\
&& =&\; [\rho\eta, \rho(\upsilon f)^\istar]^\klstar\rho\upsilon f&\by{$\rho$ is a monad morphism}\\
&& =&\; [\eta, f^\iistar]^\klstar f. &\by{$\rho\upsilon=\id$}
\end{flalign*}
\end{proof}
\begin{definition}[Iteration congruence]
Let $\BBT$ be a guarded pre-iterative monad and let $\BBS$ be a monad. We call a 
monad morphism $\rho\colon\BBT\to\BBS$ an \emph{iteration congruence} if for every pair 
of morphisms $f,g\colon X\to_2 T(Y+X)$, 
\begin{align}\label{eq:icong}
\rho f = \rho g\text{\qquad implies\qquad} \rho f^\istar = \rho g^\istar.
\end{align}
If $\rho$ is moreover a guarded retraction, we call $\rho$ an
\emph{iteration-congruent retraction}.
\end{definition}
\begin{theorem}\label{thm:ext_ax}
  Under the premises of Theorem~\ref{thm:ext_iter}, assume
  moreover that $\rho$ is an iteration-congruent retraction. Then any
  property out of naturality, dinaturality, codiagonal, and uniformity
  that is satisfied by $\BBT$ is also satisfied by~$\BBS$.
\end{theorem}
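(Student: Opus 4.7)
The plan is to verify each of the four iteration laws for $\BBS$ separately, using the formula $f^\iistar=\rho\comp(\upsilon f)^\istar$ from Theorem~\ref{thm:ext_iter} to reduce each identity to the corresponding one for $\BBT$, then closing the remaining gap via iteration-congruence. The standing toolkit is $\rho\upsilon=\id$, the monad-morphism laws $\rho\eta=\eta$ and $\rho\comp k^\klstar=(\rho k)^\kklstar\comp\rho$ --- from which, by postcomposing with $\upsilon$, we derive $g^\kklstar\comp\rho=\rho\comp(\upsilon g)^\klstar$ for $g\colon X\to SY$ and $Sk\comp\rho=\rho\comp Tk$ on functor parts --- together with the fact that $\upsilon$ preserves guardedness.

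For naturality, I would rewrite $g^\kklstar f^\iistar=\rho\comp(\upsilon g)^\klstar(\upsilon f)^\istar$, apply naturality of $(-)^\istar$ in $\BBT$ to the $\inr$-guarded $\upsilon f$ to pull $(\upsilon g)^\klstar$ into the scope of $(-)^\istar$, and compare the result with the expansion $([(S\inl)g,\eta\inr]^\kklstar f)^\iistar=\rho\comp(\upsilon[(S\inl)g,\eta\inr]^\kklstar f)^\istar$ of the right-hand side. The two inner morphisms both lie in $X\to_2 T(Y+X)$ and share the common $\rho$-image $[(S\inl)g,\eta\inr]^\kklstar f$, so iteration-congruence finishes the case. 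Codiagonal follows the same pattern: apply codiagonal in $\BBT$ to the $12,2$-guarded $\upsilon f$, then use iteration-congruence twice --- once to identify $\rho(T[\id,\inr]\upsilon f)^\istar$ with $(S[\id,\inr]f)^\iistar=\rho(\upsilon S[\id,\inr]f)^\istar$, and once to identify $f^{\iistar\iistar}=\rho(\upsilon\rho(\upsilon f)^\istar)^\istar$ with $\rho(\upsilon f)^{\istar\istar}$; guardedness of the intermediate morphisms is checked via $\upsilon$-preservation and Proposition~\ref{prop:guard_preserve} applied in $\BBT$ and $\BBS$. Dinaturality is handled identically in each of the two guardedness cases of its premise.

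Uniformity is the most delicate case. Given $fh=S(\id+h)g$ in $\BBS$, the lift to $\BBT$ only yields $\rho(\upsilon f\comp h)=\rho(T(\id+h)\upsilon g)$, not the strict equation that a direct application of uniformity in $\BBT$ to $\upsilon f,\upsilon g,h$ would require. My plan is first to verify, via the fixpoint identity for $(\upsilon f)^\istar$ in $\BBT$ combined with the monad-morphism laws for $\rho$ and the hypothesis, that both $f^\iistar\comp h$ and $g^\iistar$ satisfy the fixpoint equation $x=[\eta,x]^\kklstar g$ in $\BBS$. To conclude equality without uniqueness, I would construct a morphism $b\colon Z\to_2 T(Y+Z)$ with $\rho b=g$ such that uniformity in $\BBT$ applied to $\upsilon f, b, h$ yields $(\upsilon f)^\istar\comp h=b^\istar$, and close by iteration-congruence applied to $b$ and $\upsilon g$. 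The main obstacle, and the step where the retraction structure is most sharply exploited, is the construction of this intermediate $b$, which must simultaneously satisfy $\rho b=g$ and the strict equation $\upsilon f\comp h=T(\id+h)b$ in $\BBT$.
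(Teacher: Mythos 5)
Your treatment of naturality, dinaturality and codiagonal is essentially the paper's proof. The paper first records that, given $\rho\upsilon=\id$, iteration-congruence is equivalent to the identity $\rho\comp(\upsilon\rho\comp f)^\istar=\rho\comp f^\istar$ for all $f\colon X\to_2 T(Y+X)$, and then argues exactly as you propose: expand $f^\iistar=\rho\comp(\upsilon f)^\istar$, move $\rho$ and the Kleisli liftings around with the monad-morphism laws, apply the corresponding law of $(\argument)^\istar$ in $\BBT$ to the $\upsilon$-lifted data, and identify the two iterated morphisms through their common $\rho$-image (your ``iteration-congruence finishes the case'' is the same step, phrased without the intermediate reformulation). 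Those three cases are correct and match the paper.

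The uniformity case is where your proposal has a genuine gap. The intermediate morphism $b\colon Z\to_2 T(Y+Z)$ with $\rho b=g$ and $\upsilon f\comp h=T(\id+h)\comp b$ that your plan hinges on need not exist: nothing forces $\upsilon f\comp h$ to factor through $T(\id+h)$, and you give no construction, so the case is not closed. (Your fallback observation that $f^\iistar\comp h$ and $g^\iistar$ both satisfy the fixpoint equation for $g$ cannot help, since solutions in $\BBS$ are not unique.) The paper's own proof of this case is entirely different and uses no iteration-congruence at all: from $f\comp h=S(\id+h)\comp g$ it derives the \emph{strict} equation
\begin{equation*}
\upsilon\comp f\comp h \;=\; \upsilon\comp S(\id+h)\comp g \;=\; \upsilon\comp S(\id+h)\comp\rho\upsilon\comp g \;=\; T(\id+h)\comp\upsilon\comp g
\end{equation*}
and then applies uniformity of $(\argument)^\istar$ in $\BBT$ directly to $\upsilon f$, $\upsilon g$ and $h$, obtaining $(\upsilon f)^\istar\comp h=(\upsilon g)^\istar$ on the nose; postcomposing with $\rho$ gives $f^\iistar\comp h=g^\iistar$. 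Note that the final rewriting step above (which, via naturality of $\rho$, amounts to cancelling the inserted $\upsilon\rho$ against $T(\id+h)\comp\upsilon\comp g$) is precisely the point your worry about the non-naturality of $\upsilon$ concerns; the paper discharges the whole case there, whereas your attempt to route around it leads to the unconstructible $b$. If you are not prepared to accept that step, you need a genuinely different argument, not the one you sketched.
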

\begin{proof}
  The crucial observation is that under our
  assumptions,~\eqref{eq:icong} is equivalent to the condition that 
for all $f\colon X\to_2 T(Y+X)$,
  \begin{align}\label{eq:retraction-iteration}
    \rho\comp (\upsilon\rho\comp f)^\istar = \rho f^\istar.
  \end{align}
Indeed,~\eqref{eq:icong} $\implies$ \eqref{eq:retraction-iteration},
for $\rho\upsilon\rho\comp f=\rho\comp f$ and therefore
$\rho (\upsilon\rho\comp f)^\istar = \rho\comp f^\istar$ and
conversely, assuming~\eqref{eq:retraction-iteration} both for $f$ and
for $g$, and $\rho f = \rho g$, we obtain that
$\rho f^\istar = \rho (\upsilon\rho\comp f)^\istar= \rho
(\upsilon\rho\comp g)^\istar=\rho g^\istar$.
The proof of transfer of the respective properties then proceeds as follows.
\begin{citemize}
 \item\emph{Naturality:} 
\begin{flalign*}
&&g^\klstar f^\iistar
=&\;g^\klstar\rho\comp (\upsilon\comp f)^\istar\\
&&=&\;(\rho\upsilon\comp g)^\klstar\comp\rho\comp (\upsilon\comp f)^\istar&\by{$\rho\upsilon=\id$}\\
&&=&\;\rho\comp (\upsilon\comp g)^\klstar\comp(\upsilon\comp f)^\istar&\by{$\rho$ is a monad~morphism}\\
&&=&\;\rho\comp ([(T\inl)\comp \upsilon\comp g,\eta\inr]^\klstar	\upsilon\comp f)^\istar&\by{naturality for $(\argument)^\istar$}\\
&&=&\;\rho\comp (\upsilon\rho\comp[(T\inl)\comp \upsilon\comp g,\eta\inr]^\klstar\upsilon\comp f)^\istar&\by{\eqref{eq:retraction-iteration}}\\
&&=&\;\rho\comp (\upsilon[\rho\comp(T\inl)\comp \upsilon\comp g,\eta\inr]^\klstar\rho\comp\upsilon\comp f)^\istar&\by{$\rho$ is a monad~morphism}\\
&&=&\;\rho\comp (\upsilon[(S\inl)\comp\comp g,\eta\inr]^\klstar\comp f)^\istar&\by{$\rho\upsilon=\id$}\\
&&=&\;([(S\inl)\comp\comp g,\eta\inr]^\klstar f)^\iistar.
\end{flalign*}   
\item \emph{Dinaturality:} First observe that it follows from the fact that $\rho$
is a monad morphism and $\rho\upsilon=\id$ that 
$\rho\upsilon[\eta \inl, h]^{\klstar} \comp g
=\rho[\eta \inl, \upsilon h]^{\klstar}\upsilon g$, and therefore, by~\eqref{eq:icong},
that
\begin{align}\label{eq:dinat_rho_ups}
\rho(\upsilon[\eta \inl, h]^{\klstar} \comp g)^\istar
=\rho([\eta \inl, \upsilon h]^{\klstar} \upsilon g)^\istar.
\end{align}
Then we obtain the goal as follows:
\begin{flalign*}
&&([\eta \inl, h]^{\klstar} \comp g)^{\iistar} 
=&\; \rho(\upsilon[\eta\inl, h]^{\klstar} \comp g)^{\istar}&\\
&&=&\; \rho([\eta\inl, \upsilon h]^{\klstar} \upsilon g)^{\istar}&\by{\eqref{eq:dinat_rho_ups}}\\
&&=&\; \rho [\eta, ([\eta \inl, \upsilon g]^{\klstar} \comp\upsilon  h)^{\istar}]^{\klstar} \comp \upsilon  g&\by{dinaturality for~$(\argument)^\istar$}\\
&&=&\; [\eta, \rho ([\eta \inl, \upsilon g]^{\klstar} \comp\upsilon  h)^{\istar}]^{\klstar} \rho \upsilon  g&\by{$\rho$ is a monad~morphism}\\
&&=&\; [\eta, \rho ([\eta \inl, \upsilon g]^{\klstar} \comp\upsilon  h)^{\istar}]^{\klstar}  g&\by{since $\rho\upsilon=\id$}\\
&&=&\; [\eta, \rho (\upsilon [\eta \inl, g]^{\klstar} h)^{\istar}]^{\klstar}  g&\by{analogous~to~\eqref{eq:dinat_rho_ups}}\\
&&=&\; [\eta, ([\eta \inl, g]^{\klstar} h)^{\iistar}]^{\klstar}  g.
\end{flalign*}
\item \emph{Codiagonal:} 
\begin{flalign*}
&&(f^\iistar)^\iistar =&\; \rho\comp (\upsilon\rho (\upsilon\comp f)^\istar)^\istar\\
&&=&\;\rho\comp ((\upsilon\comp f)^\istar)^\istar&\by{\eqref{eq:retraction-iteration}}\\
&&=&\;\rho\comp (T[\id,\inr]\comp\upsilon\comp f)^\istar&\by{codiagonal for $(\argument)^\istar$}\\
&&=&\;\rho\comp (\upsilon\rho\comp T[\id,\inr]\comp\upsilon\comp f)^\istar&\by{\eqref{eq:retraction-iteration}}\\
&&=&\;\rho\comp (\upsilon\comp S[\id,\inr]\comp\comp f)^\istar&\by{$\rho\upsilon=\id$}\\
&&=&\;(S[\id,\inr]\comp f)^\iistar.
\end{flalign*}
\item\emph{Uniformity:} Suppose that $f\comp h = S(\id+h)\comp g$. Then 
\begin{flalign*}
&&\upsilon\comp f h =&\; 
\upsilon\comp S(\id+h)\comp g&\\
&&=&\;\upsilon\comp S(\id+h)\comp\rho\upsilon\comp g\\
&&=&\;T(\id+h)\comp\upsilon\comp g
\end{flalign*}
and therefore $(\upsilon\comp f)^\istar\comp h=(\upsilon\comp g)^\istar$. This implies $f^\iistar\comp h=g^\iistar$ by definition.\qed
\end{citemize}
\noqed\end{proof}
\noindent Recall from the introduction that a monad $\BBS$ is
\emph{iteratable} if its coinductive resumption transform
\/~$\BBS^\nu$ exists. We make $\BBS^\nu$ into a guarded monad by
applying Theorem~\ref{thm:ext_guard} to $\BBS$ as a vacuously guarded
monad; explicitly: $f\colon X\to S^\nu(Y+X)$ is guarded iff
\begin{equation*}
\out f = S(\inl+\id)\comp g\qquad\text{for some\qquad $g\colon X\to S(Y+S^\nu(Y+X))$}.
\end{equation*}
We are now set to prove our first main result, which states that every
iteratable Elgot monad can be obtained by quotienting a guarded
iterative monad; that is, every choice of solutions that obeys the
iteration laws arises by quotienting a more fine-grained model in
which solutions are uniquely determined:
\begin{theorem}
\label{thm:elgot_is_retract}
A totally guarded iteratable monad $\BBS$ is an Elgot monad iff there
is a guarded iterative monad $\BBT$ and an iteration-congruent
retraction $\rho\colon\BBT\to\BBS$. Specifically, every iteratable Elgot
monad $\BBS$ is an iteration-congruent retract of its coinductive
resumption transform \/~$\BBS^\nu$.
\end{theorem}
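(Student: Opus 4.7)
For the backward direction, suppose that $\rho\colon\BBT\to\BBS$ is an iteration-congruent retraction with $\BBT$ guarded iterative. Then Theorem~\ref{thm:compl_ax} yields that $\BBT$ satisfies naturality, dinaturality, codiagonal, and uniformity; Theorem~\ref{thm:ext_iter} makes $\BBS$ guarded pre-iterative; and Theorem~\ref{thm:ext_ax} transfers the four iteration laws to $\BBS$. Together with the hypothesis that $\BBS$ is totally guarded, this makes $\BBS$ a complete Elgot monad.

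For the forward direction, take $\BBT=\BBS^\nu$ with the guardedness specified just before the theorem, obtained via Theorem~\ref{thm:ext_guard} from the parametrized monad $\IB{X}{Y}=S(X+Y)$ equipped with vacuous guardedness; by Proposition~\ref{prop:triv-iter} vacuous guardedness is guarded iterative, so Theorem~\ref{thm:ext_guard}(\ref{item:par-iter}) gives that $\BBS^\nu$ is guarded iterative. Define
\[
\rho_X = \out^\istar\colon S^\nu X\to SX,\qquad \upsilon_X = \tuo\comp (S\inr)\comp (S\eta^\nu)\colon SX\to S^\nu X,
\]
where $\out^\istar$ uses the iteration of $\BBS$ applied to the structure $\out\colon S^\nu X\to S(X+S^\nu X)$ (which is trivially $\inr$-guarded in the totally guarded $\BBS$), and $\upsilon$ is the ``one-step delay'' embedding. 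A short calculation with the fixpoint identity $\rho = [\eta^S,\rho]^\klstar\out$ together with the unit description $\eta^\nu = \tuo\comp\eta^S\inl$ first yields $\rho\eta^\nu=\eta^S$, whence $\rho\upsilon=\id$. Since $\out(\upsilon f) = S(\inr\comp\eta^\nu)\comp f$ factors through $S\inr$, $\upsilon f$ is vacuously $\sigma$-guarded in $\BBS^\nu$ for \emph{every} summand $\sigma$, which is exactly what is needed given total guardedness of $\BBS$. That $\rho$ is a natural monad morphism is a standard verification using naturality and uniformity of $(\argument)^\istar$ in $\BBS$, which identify $\rho$ with the unique coalgebra-to-algebra morphism from $(S^\nu X,\out)$ into the relevant $S$-structure.

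The main obstacle is iteration-congruence of $\rho$. I would first establish the stronger property that $\rho$ is \emph{iteration-preserving}: for every $\inr$-guarded $f\colon X\to_2 S^\nu(Y+X)$,
\[
\rho\comp f^\istar = (\rho\comp f)^\istar,
\]
which is legitimate since $\BBS$ is totally guarded and pre-iterative; iteration-congruence then follows tautologically, as $\rho f=\rho g$ implies $(\rho f)^\istar=(\rho g)^\istar$, hence $\rho f^\istar=\rho g^\istar$. To prove iteration-preservation, use the explicit description of $f^\istar$ from Theorem~\ref{thm:ext_guard}(\ref{item:par-pre-iter}) as a coiterate built from $(\out f)^\istar$ in $\BBS$, and push $\rho=\out^\istar$ through that coiterate using the codiagonal law (which collapses the doubly nested iteration into a single one) together with naturality and uniformity. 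The genuine difficulty is that mere satisfaction of the fixpoint equation for $(\rho f)^\istar$ by $\rho f^\istar$ is insufficient, because $\BBS$ generally admits multiple fixpoints; the argument must identify $\rho f^\istar$ with \emph{the} canonical solution selected by $(\argument)^\istar$, and this is where uniformity-based coalgebraic reasoning, in conjunction with the concrete coiterative form of $f^\istar$ supplied by Theorem~\ref{thm:ext_guard}, plays an essential role.
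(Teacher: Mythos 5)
Your proposal follows essentially the same route as the paper: the same retraction $\rho=\out^\istar$ (with an equivalent one-step-delay section $\upsilon$), the key identity $\rho\comp(\coit h)=h^\istar$ extracted from uniformity, and a codiagonal/naturality/dinaturality computation showing $\rho\comp f^\iistar=(\rho\comp f)^\istar$, from which the congruence property follows exactly as you say. The one step to tighten is the verification that $\rho$ is a monad morphism: there is no uniqueness of coalgebra-to-algebra morphisms to appeal to here, since the fixpoint equation $\rho=[\eta,\rho]^\klstar\out$ generally has many solutions in a non-iterative $\BBS$, and the paper instead establishes $\rho\comp f^\kklstar=(\rho\comp f)^\klstar\comp\rho$ by a direct calculation that uses codiagonal and dinaturality in addition to naturality and uniformity.
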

\begin{proof}
  `If' is immediate by Theorems~\ref{thm:ext_ax}
  and~\ref{thm:compl_ax}. We prove `only if', i.e.\ that
  $\BBS=(S,\eta,\argument^\klstar,\argument^\istar)$ is an
  iteration-congruent retract of
  $\BBS^\nu=(\nu\gamma.\,S(\argument+\gamma),\eta^\nu,\argument^\klstar,\argument^\iistar)$. We
  define $\upsilon_X=\tuo\eta\inr\tuo (S\inl)$ and
\begin{displaymath}
\rho_X = \bigl(S^\nu X\xrightarrow{\out} S(X+S^\nu X)\bigr)^\istar.
\end{displaymath}
Clearly, $\upsilon f$ is $\sigma$-guarded for every $f\colon X\to SY$ and $\upsilon$ is left inverse to $\rho$, for
\begin{flalign*}
&&\rho\upsilon 
=&\;\out^\istar\tuo\eta\inr\tuo (S\inl)\\ 
&&=&\; [\eta,\rho]^\klstar \out\tuo\eta\inr\tuo (S\inl)&\by{fixpoint for $(\argument)^\istar$}\\ 
&&=&\; \rho\tuo (S\inl)\\
&&=&\; [\eta,\rho]^\klstar \out\tuo (S\inl)&\by{fixpoint for $(\argument)^\istar$}\\ 	
&&=&\; \eta^\klstar\\
&&=&\;\id.
\end{flalign*}
It follows straightforwardly by naturality of $(\argument)^\istar$ that $\rho$
is a natural transformation. Note the following property of $\rho$: for any $h\colon X\to S(Y+X)$, 
$\out (\coit h) = S(\id+\coit h)\comp h$, and hence, by uniformity
\begin{align}\label{eq:rho_coit_is_iter}
\rho(\coit h) = h^\istar. 
\end{align}
Let us verify that $\rho$ is a monad morphism. For one thing 
\begin{align*}
\rho\eta^\nu = [\eta,\rho]^\klstar\out\eta^\nu= [\eta,\rho]^\klstar\eta\inl = \eta.
\end{align*}
Next, we have to check that $\rho f^\kklstar = (\rho f)^\klstar\rho$ for any $f\colon X\to S^\nu Y$.
Note that 
\begin{align*}
f^\kklstar = \coit\bigl([[\eta\inl,S(\id+S^\nu\inl)\out f],\eta\inr]^\klstar\out\bigr)\comp (S^\nu\inr). 
\end{align*}
Therefore
\begin{flalign*}
&&\rho f^\kklstar =&\; \rho \coit\bigl([[\eta\inl,S(\id+S^\nu\inl)\out f],\eta\inr]^\klstar\out\bigr)\comp (S^\nu\inr)\\
&&=&\; \bigl([[\eta\inl,S(\id+S^\nu\inl)\out f],\eta\inr]^\klstar\out\bigr)^\istar\comp (S^\nu\inr)&\by{\eqref{eq:rho_coit_is_iter}}\\
&&=&\; \bigl(S[\id,\inr]\comp [(S\inl)\comp [\eta\inl,S(\id+S^\nu\inl)\out f],\eta\inr]^\klstar\out\bigr)^\istar\comp (S^\nu\inr)\\
&&=&\; \bigl(([(S\inl)\comp [\eta\inl,S(\id+S^\nu\inl)\out f],\eta\inr]^\klstar\out)^\istar\bigr)^\istar\comp (S^\nu\inr)&\by{codiagonal}\\
&&=&\; ([\eta\inl,S(\id+S^\nu\inl)\out f]^\klstar \out^\istar)^\istar\comp (S^\nu\inr)&\by{naturality}\\
&&=&\; ([\eta\inl,S(\id+S^\nu\inl)\out f]^\klstar \rho)^\istar\comp (S^\nu\inr)\\
&&=&\; [\eta,([\eta\inl,\rho]^\klstar S(\id+S^\nu\inl)\out f)^\istar]^\klstar\rho\comp (T\inr)&\by{dinaturality}\\
&&=&\; [\eta,([\eta\inl,\rho]^\klstar S(\id+S^\nu\inl)\out f)^\istar]^\klstar (S\inr)\comp\rho\\
&&=&\; (([\eta\inl,\rho]^\klstar S(\id+S^\nu\inl)\out f)^\istar)^\klstar \comp\rho\\
&&=&\; (((S\inl)\comp [\eta,\rho]^\klstar \out f)^\istar)^\klstar \comp\rho\\
&&=&\; (((S\inl)\comp \rho f)^\istar)^\klstar \comp\rho\\
&&=&\; ([\eta, ((S\inl)\comp \rho f)^\istar]^\klstar (S\inl)\comp \rho f)^\klstar \comp\rho&\by{fixpoint}\\
&&=&\; (\rho f)^\klstar\rho.
\end{flalign*}
Finally, let us check that $\rho$ is an iteration
congruence. Let $f,g\colon X\to_2 S^\nu(Y+X)$, which means that there are $f',g'\colon X\to S(Y+S^\nu(X+Y))$
such that $\out f=S(\inl+\id) f'$ and $\out f=S(\inl+\id) g'$. Suppose that $\rho f=\rho g$, 
which amounts to
\begin{align}\label{eq:cong_simp}
[\eta\inl,\rho]^\klstar f' = [\eta\inl,\rho]^\klstar g',
\end{align}
for
\begin{align*}
\rho f = [\eta,\rho]^\klstar\out f = [\eta,\rho]^\klstar S(\inl+\id) f'= [\eta\inl,\rho]^\klstar f'
\end{align*}
and analogously for $g$. Our goal is to prove that
\begin{align*}
\rho f^\iistar = [\eta,([\eta\inl,\rho]^\klstar f')^\iistar]^\klstar\eta\inr, 
\end{align*}
from which $\rho f^\istar=\rho g^\istar$ will follow by the analogous formula for $\rho g^\istar$ and~\eqref{eq:cong_simp}.
Observe that
\begin{align*}
f^\iistar = (\coit h)\comp\comp\eta^\nu\inr
\end{align*}
where $h=[[\eta\inl,f'],\eta\inr]^\klstar\out$. Now
\begin{flalign*}
&&\rho f^\iistar =&\; \rho\comp (\coit h)\comp\eta^\nu\inr\\
&& =&\; h^\istar\comp\eta^\nu\inr&\by{\eqref{eq:rho_coit_is_iter}}\\
&& =&\; ([[\eta\inl,f'],\eta\inr]^\klstar\out)^\istar\comp\eta^\nu\inr&\\
&& =&\; (S[\id,\inr]\comp [[\eta\inl\inl, (S\inl)f'],\eta\inr]^\klstar\out)^\istar\comp\eta^\nu\inr&\\
&& =&\; (([[\eta\inl\inl, (S\inl)f'],\eta\inr]^\klstar\out)^\istar)^\istar\comp\eta^\nu\inr&\by{codiagonal}\\
&& =&\; (([(S\inl)[\eta\inl, f'],\eta\inr]^\klstar\out)^\istar)^\istar\comp\eta^\nu\inr&\\
&& =&\; ([\eta\inl, f']^\klstar\out^\istar)^\istar\comp\eta^\nu\inr&\by{naturality}\\
&& =&\; ([\eta\inl, f']^\klstar\rho)^\istar\comp\eta^\nu\inr&\\
&& =&\; [\eta,([\eta\inl,\rho]^\klstar f')^\istar]^\klstar\rho\comp\eta^\nu\inr&\by{dinaturality}\\
&& =&\; [\eta,([\eta\inl,\rho]^\klstar f')^\istar]^\klstar\eta^\nu\inr
\end{flalign*}
and we are done.
\end{proof}

\begin{example}[Finite trace semantics]\label{expl:trace}
Let us revisit Example~\ref{ex:transfer-example}~(4), with $A$ assumed to be finite 
throughout. Recall that 
$\mu\gamma.\,(X+1)+A\times\gamma\cong A^\star + A^\star\times X$ is a final $((X+1)+A\times\argument)$-coalgebra in 
the Kleisli category $\Set_{\CSet}$ of $\CSet$. Note that $\nu\gamma.\,\CSet(X+A\times\gamma)$
is a coalgebra of the same type in the same category, with
\begin{displaymath}
	\nu\gamma.\,\CSet(X+A\times\gamma) \xto{~\CSet(\inl+\id)\out\cup\,\sgl\inl\inr\bang~} \CSet((X+1)+A\times\nu\gamma.\,\CSet(X+A\times\gamma))
\end{displaymath}
as the structure morphism, where $\cup$ denotes pointwise union and
$\sgl$ is the map $x\mapsto\{x\}$, i.e.\ the unit of
$\CSet$. Intuitively, we thus add `non-termination', i.e.\ the element of
the right-hand summand $1$ in $X+1$, as a possible result to every
state (in the original view of Hasuo et al.~\cite{HasuoJacobsEtAl07},
this element instead represents acceptance, so the above definition
would correspond to converting a labelled transition system into an
automaton by making every state accepting). This yields a final
coalgebra map
$\xi_X\colon \nu\gamma.\,\CSet(X+A\times\gamma) \to
\CSet(\mu\gamma.\,{(X+1)}+A\times\gamma)$ characterized by the diagram
\begin{equation*}
\begin{tikzcd}[column sep = large,row sep = large]
\nu\gamma.\,\CSet(X+A\times\gamma)
	\ar[r,"(\inl+\id)\out\cup\,\sgl\inl\inr\bang"]
	\ar[d,"\xi_X"'] &[6em] 
(X+1)+A\times\nu\gamma.\,\CSet(X+A\times\gamma) 
	\ar[d,"X+1 + A\times \xi_X"]\\[2ex]
\mu\gamma.\,(X+1)+A\times\gamma
	\ar[ru, phantom, "(\text{in}~\Set_{\CSet})"]	
	\ar[r,"\inm^\mone"] &
(X+1) + A\times \mu\gamma.\,(X+1)+A\times\gamma
\end{tikzcd}
\end{equation*}
which amounts to the following corecursive definition of $\xi_X$:
\begin{displaymath}
	\xi_X(t) = \{\inm\inl\inl x\mid \inl x\in\out t\}\cup \{\inm\inl\inr\star\}\cup \{\inm\inr\brks{a,t''}\mid t''\in \xi_X(t'),\inr\brks{a,t'}\in\out t\}. 
\end{displaymath} 
The result of applying $\xi_X$ to a tree is the set of finite traces in it, which 
are finite sequences from $A^\star$ followed either by an element of $X$ 
(successfully terminating traces) or by the single inhabitant of $1$ 
(divergent traces). 
It is easy to see that~$\xi$ is a natural transformation; we show that
it is in fact a monad morphism. The domain of~$\xi$ is a generalized
coalgebraic resumption monad
${\nu\gamma.\,\CSet(\argument+A\times\gamma)}$ (on $\Set$) as
discussed in Example~\ref{ex:gen-proc}, while the codomain
$\CSet(\mu\gamma.\,(\argument +1)+A\times\gamma)$ is obtained by
sandwiching the monad $\nu\gamma.\,(\argument +1)+A\times\gamma$ (on
$\Set_{\CSet}$) between the adjoint pair
$F\dashv G\colon \Set_{\CSet}\to\Set$ generating the monad $\CSet$, and
therefore is also a monad (cf.\ Section~\ref{sec:prelim}). The
corresponding structure is defined as follows:
{\allowdisplaybreaks[0]
\begin{align*}
	\eta(x) =&\;\{\inm\inl\inl x\} \\
\bigl(f\colon X\to \CSet(\mu\gamma.\,(Y +1)+A\times\gamma)\bigr)^\klstar(p) =&\; \bigcup~\{(\iter \hat f)(t) \mid t\in p\}
\end{align*}
}
where $\iter\hat f\colon \mu\gamma.\,(X +1)+A\times\gamma\to \CSet(\mu\gamma.\,(Y +1)+A\times\gamma)$ is the initial
algebra morphism to the algebra $(\CSet(\mu\gamma.\,(Y +1)+A\times\gamma), \hat f)$ whose structure map
\begin{displaymath}
	\hat f\colon (X + 1) + A\times \CSet(\mu\gamma.\,(Y +1)+A\times\gamma)\to \CSet(\mu\gamma.\,(Y +1)+A\times\gamma)
\end{displaymath}
is as follows: $\hat f(\inl\inl x) = f(x)$, $\hat f(\inl\inr\star) = \{\inm\inl\inr\star\}$, 
$\hat f(\inr\brks{a,t}) = \{\inm\inr\brks{a,t}\}$.
This results in the following inductive definition of $\iter\hat f$:
\begin{align*}
	(\iter\hat f)\inm\inl\inl x =& f(x)\\ 
    (\iter\hat f)\inm\inl\inr\star =& \{\inm\inl\inr\star\}\\
    (\iter\hat f)\inm\inr\brks{a,t} =& \{\inm\inr\brks{a,t'}\mid t'\in (\iter\hat f)(t)\} 
\end{align*}
It is then easy to see that $\xi$ respects 
$\eta$. The fact that $\xi$ respects Kleisli lifting amounts to a rather technical 
verification of the fact that both $\xi f^\klstar$ and $(\xi f)^\klstar\xi$ satisfy 
the same corecursive definition and are thus equal:
\begin{flalign*}
&& \xi\comp f^\klstar(p) 
  = \;&\bigl\{\inm\inl\inl x\mid \inl x\in\out f^\klstar(p)\bigr\}\cup \bigl\{\inm\inl\inr\star\bigr\}&\\
&&  \;&\quad \cup\bigl\{\inr\brks{a,t''}\mid t''\in \xi(t'),\inr\brks{a,t'}\in\out f^\klstar(p)\bigr\}\\
&&  = \;&\bigl\{\inm\inl\inl x\mid \inl x\in \{\out f(x) \mid \inl x\in \out p\} \bigr\}\cup \{\inm\inl\inr\star\}\\
&&  \;&\quad\cup\bigl\{\inm\inl\inl x\mid \inl x\in \{\inr\brks{a,f^\klstar(p')} \mid \inr\brks{a,p'}\in\out p \} \bigr\}\\
&&  \;&\quad\cup \bigl\{\inr\brks{a,t''}\mid t''\in \xi(t'),\inr\brks{a,t'}\in \{\out f(x) \mid \inl x\in \out p\} \bigr\}\\
&&  \;&\quad\cup \bigl\{\inr\brks{a,t''}\mid t''\in \xi(t'),\inr\brks{a,t'}\in \{\inr\brks{a,f^\klstar(p')} \mid \inr\brks{a,p'}\in\out p \} \bigr\}\\
&&  = \;&\{\inm\inl\inl x\mid \inl x\in\out f(x), \inl x\in \out p\} \cup \{\inm\inl\inr\star\}\\
&&  \;&\quad\cup \{\inr\brks{a,t''}\mid t''\in \xi(t'),\inr\brks{a,t'}\in \out f(x), \inl x\in \out p\}\\
&&  \;&\quad\cup  \{\inr\brks{a,t''}\mid t''\in \xi(f^\klstar(p')), \inr\brks{a,p'}\in\out p \}\\[2ex]
&&  (\xi f)^\klstar(\xi(p)) 
    = \;& \bigcup\,\{(\iter \widehat{\xi\comp f})(t) \mid t\in \xi(p)\}\\
&&  = \;& \bigcup\,\bigl\{(\iter \widehat{\xi\comp f})(t) \mid t\in \{\inm\inl\inl x\mid \inl x\in\out p\}\bigr\}\\
&&  \;&\quad\cup \bigcup\,\bigl\{(\iter  \widehat{\xi\comp f})(t) \mid t\in \{\inm\inl\inr\star\}\bigr\}\\
&&  \;&\quad\cup \bigcup\,\bigl\{(\iter  \widehat{\xi\comp f})(t) \mid t\in \{\inm\inr\brks{a,t''}\mid t''\in \xi(t'),\inr\brks{a,t'}\in\out p\} \bigr\}\\
&&  = \;& \bigcup\,\{\xi f(x)\mid \inl x\in\out p\}\cup  \{\inm\inl\inr\star\}\\
&&  \;&\quad\cup \{\inr\brks{a,t} \mid t\in (\iter \widehat{\xi\comp f})(t''),  t''\in \xi(t'),\inr\brks{a,t'}\in\out p\}\\
&&  = \;&\{\inm\inl\inl x\mid \inl x\in\out f(x), \inl x\in \out p\} \cup \{\inm\inl\inr\star\}\\
&&  \;&\quad\cup \{\inr\brks{a,t''}\mid t''\in \xi(t'),\inr\brks{a,t'}\in \out f(x), \inl x\in \out p\}\\
&&  \;&\quad\cup \{\inr\brks{a,t''} \mid  t''\in (\xi f)^\klstar(\xi(t')),\inr\brks{a,t'}\in\out p\}
\end{flalign*}
Now consider the situation where guardedness for $\nu\gamma.\,\CSet(X+A\times\gamma)$
is induced by vacuous guardedness for $\CSet(\argument+A\times\gamma)$ by Theorem~\ref{thm:ext_guard}~(1)
and with guardedness for $\CSet(\mu\gamma.\,{(X+1)}+A\times\gamma)$ defined as follows:
$f\colon X\to \CSet(\mu\gamma.\,(Y+1)+A\times\gamma)$ is $\sigma$-guarded iff as a 
morphism $f\colon X\to\nu\gamma.\,(Y+1)+A\times\gamma$ in $\Set_{\CSet}$ it is 
$\sigma$\dash guarded under the notion of guardedness induced by vacuous guardedness
for $(\argument+1)+A\times\argument$ in $\Set_{\CSet}$, again by Theorem~\ref{thm:ext_guard}~(1).
This turns $\xi$ into a guarded monad morphism, and moreover~$\xi$ is iteration-preserving by 
Lemma~\ref{lem:iter-preserve}, because, as we argued before in Example~\ref{ex:transfer-example}~(4),
its codomain $\CSet(\mu\gamma.\,{(X+1)}+A\times\gamma)$ is guarded iterative (a more abstract argument showing that sandwiching a guarded 
iterative monad between an adjoint pair produces a guarded iterative monad is later given in Theorem~\ref{thm:sandwich_iter}). 

In order to obtain a guarded retraction from $\xi$, let $\rho$ be the
epimorphic part of the image factorization of $\xi$. It is easy to verify
that the codomain of $\rho$ consists precisely of the prefix-closed
subsets of $\CSet(\mu\gamma.\,{(X+1)}+A\times\gamma)$, i.e.\ is the
guarded iterative submonad of
$\CSet(\mu\gamma.\,{(X+1)}+A\times\gamma)$ mentioned in
Example~\ref{ex:transfer-example}~(4).  Under the \emph{axiom of choice},
this is sufficient to turn $\rho$ into a retraction because every epi
splits. However, the requisite section $\upsilon$ can also be
constructed explicitly without choice, for every prefix-closed subset
of $\CSet(\mu\gamma.\,{(X+1)}+A\times\gamma)$ standardly induces an
$A$-branching tree, hence an element of
$\nu\gamma.\,\CSet(X+A\times\gamma)$. In summary,
\begin{align*}
 \rho_X(t)     =&\; \{\inm\inl\inl x\!\mid\! \inl x\in\out t\}\cup \{\inm\inl\inr\star\}\cup \{\inm\inr\brks{a,t''}\!\mid\! t''\in \rho_X(t'),\inr\brks{a,t'}\in\out t\}\\
 \upsilon_X(S) =&\; \tuo\bigl(\bigl\{\inl x\mid \inm\inl\inl x\in S\bigr\}\cup \bigl\{\inr\brks{a,\upsilon_X(\{t\mid \inm\inr\brks{a, t}\in S\})}\mid a\in A\bigr\}\bigr) 
\end{align*}
where $t\in\nu\gamma.\,\CSet(X+A\times\gamma)$ and $S$ is a countable
prefix-closed subset of $\mu\gamma.\,{(X+1)}+{A\times\gamma}$. Note
that the tree constructed by $\upsilon$ has only very special kind of
nondeterminism, not including non-deterministic choice between
processes prefixed by actions. Roughly, we can have $x+y$ and $x+a.t$
in the image of $\upsilon$ with $x,y\in X$ and $a\in A$, but not
$a.t + b.s$ with $a,b\in A$. The composition~$\upsilon\rho$ can
therefore be seen as a determinization procedure, pushing the
non-deterministic choice downwards along the tree. Of course,
non-determinism can not be entirely eliminated, because in the end we
arrive at subsets of $X$, which must remain intact. We conjecture that
this effect is generic, i.e.\ that the same scenario can be run with
$\omega_1$ replaced by any other regular infinite cardinal $\kappa$;
that is, $\upsilon\rho$ pushes $\kappa$-branching non-determinism
downwards in the same sense as above.  We also conjecture that
$\upsilon$ is a monad morphism and hence so is $\upsilon\rho$.

The established retraction $(\rho,\upsilon)$ can thus be reused in two further cases.

\medskip \myparagraph{Guarded iteration for finitely-branching
  processes} We can restrict $\rho$ to the monad
$\nu\gamma.\,\FSet(\argument+A\times X)$ capturing finitely branching
processes with outputs in~$X$. As indicated above, we then essentially
again obtain countable prefix-closed sets $P$ of traces as the image
of $\rho$, which however now additionally satisfy the condition that
for each $w\in A^*$, the set $\{x\in X\mid (w,x)\in P\}$ is finite
(while in the countably branching case, and for infinite $X$, these
sets may be countably infinite). The section $\upsilon$ restricts
accordingly, and we thus obtain a guarded retraction.

\medskip
\myparagraph{Unguarded iteration for countably-branching processes} %
As discussed in Example~\ref{ex:transfer-example}~(2), 
$\nu\gamma.\,\CSet(X+A\times\gamma)$ supports
unguarded iteration, and in fact is an Elgot monad~\cite{GoncharovEA18}.
In the remainder of the example we use the terms ``unguarded'' for total guardedness
and ``guarded'' for the notion of guardedness on $\nu\gamma.\,\CSet(X+A\times\gamma)$ discussed above.  
Now, in order to conclude by Theorem~\ref{thm:elgot_is_retract} that the codomain 
of~$\rho$ as above is an Elgot monad, it suffices to check that~$\rho$ remains iteration preserving if we equip its domain with total guardedness, i.e.\ that~$\rho$ preserves iteration also of unguarded morphisms.\SG{I even more wonder if sandwiching theorems help here.} 
So let $f\colon X\to \nu\gamma.\,\CSet((Y+X)+A\times\gamma)$. The unguarded iterate
$f^\iistar$ is defined as the guarded iterate $\hat f^\iistar$, where $\hat f$ has 
the same profile as $f$ and is defined  as the guarded morphism
\begin{displaymath}
 \hat f = \tuo\CSet(\inl+\id)\comp \bigl(\CSet[\inl+\id,\inl\inr]\out f\bigr)^\istar,	
\end{displaymath}
with iteration $(\argument)^\istar$ on $\CSet$ calculated in the
expected way using least fixpoints~\cite{GoncharovEA18}.  It is easy
to check that
\begin{displaymath}
 \upsilon\rho\hat f = \tuo\CSet(\inl+\id)\comp \bigl(\CSet[\inl+\id,\inl\inr]\out \upsilon\rho f\bigr)^\istar	
\end{displaymath}
and thus $(\upsilon\rho\hat f)^\iistar = (\upsilon\rho f)^\iistar$ by
the above definition of $(\upsilon\rho f)^\iistar$. Therefore,
using~\eqref{eq:retraction-iteration} and the fact that, as we argued
above,~$\rho$ preserves guarded iteration,
$\rho f^\iistar = \rho\hat f^\iistar = \rho(\upsilon\rho \hat
f)^\iistar = \rho(\upsilon\rho f)^\iistar$,
which means that $\rho$ is iteration preserving.

\end{example}

\noindent Recall from Section~\ref{sec:parametrized} that guardedness,
guarded iteration, and the coinductive resumption transform work at
the level of \emph{parametrized monads}, i.e.\ functors from a
parameter category $\BD$ into the category of monads on a
category~$\BC$, typically rearranged into bifunctors
$\hash\colon \BC\times\BD\to\BC$. The notions of guarded retraction and
iteration congruence extend straightforwardly to parametrized monads;
explicitly:
\begin{definition}
  A parametrized guarded monad morphism is a \emph{guarded retraction}
  (an \emph{iteration congruence}) if its components are guarded
  retractions (iteration congruences).
\end{definition}
\noindent We then can take the claims of Theorem~\ref{thm:ext_guard}
further:
\begin{theorem}\label{thm:rho_ext}
  Let $\IB{}{},\IB[\hat\hash]{}{}\colon \BC\times(\BC\times\BD)\to\BC$ be
  guarded parametrized monads, and let
  $\rho\colon\IB{}{}\to\IB[\hat\hash]{}{}$ be an iteration-congruent
  retraction. By Theorem~\ref{thm:ext_guard},
  $\IB[\hash^\nu]{}{}=\nu\gamma.\,\IB{\argument}{(\gamma,\argument)}$ and
  $\IB[\hat\hash^\nu]{}{}=\nu\gamma.\,\IB[\hat\hash]{\argument}{(\gamma,\argument)}$
  are also parametrized guarded monads. Then
  $\rho^\nu\colon\IB[\hash^\nu]{}{}\to\IB[\hat\hash^\nu]{}{}$,
with components 
\begin{align*}
\rho^\nu_{X,Y} = \coit\bigl(\nu\gamma.\, \IB{X}{(\gamma,Y)}\xrightarrow{~\rho\out~}
\IB[\hat\hash]{X}{(\nu\gamma.\, \IB{X}{(\gamma,Y),Y)}} \bigr), 
\end{align*}
is again an iteration-congruent retraction.
\end{theorem}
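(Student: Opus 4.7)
The plan is to establish the three conditions constituting an iteration-congruent retraction separately, leveraging both the functoriality of the $(-)^\nu$ construction proved inside Theorem~\ref{thm:ext_guard} and the coalgebraic description of its iteration operator. That $\rho^\nu$ is a parametrized guarded monad morphism is immediate: the functoriality argument in the proof of Theorem~\ref{thm:ext_guard}(1), applied pointwise to $\rho$, delivers exactly this.

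To exhibit $\rho^\nu$ as a pointwise guarded retraction, I would construct the section by coiteration, setting
\begin{equation*}
\upsilon^\nu_{X,Y} = \coit\bigl(\IB[\hat\hash^\nu]{X}{Y} \xrightarrow{\out'} \IB[\hat\hash]{X}{(\IB[\hat\hash^\nu]{X}{Y}, Y)} \xrightarrow{\upsilon} \IB{X}{(\IB[\hat\hash^\nu]{X}{Y}, Y)}\bigr),
\end{equation*}
where $\out'$ is the unfolding isomorphism for $\IB[\hat\hash^\nu]{}{}$. Then both $\id$ and $\rho^\nu\upsilon^\nu$ are $\IB[\hat\hash]{X}{(-,Y)}$-coalgebra endomorphisms of $(\IB[\hat\hash^\nu]{X}{Y},\out')$: this is verified by unfolding the defining diagrams of $\rho^\nu$ and $\upsilon^\nu$ and cancelling $\rho\upsilon=\id$, after which finality yields $\rho^\nu\upsilon^\nu=\id$. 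Preservation of guardedness by $\upsilon^\nu$ then reduces, via the identity $\out\comp\upsilon^\nu = (\IB{\id}{\upsilon^\nu})\comp\upsilon\comp\out'$ read off the defining diagram, to guardedness preservation by $\upsilon$ and by the parametrized-monad functor $\IB{\id}{\upsilon^\nu}$ (the latter being an instance of~\eqref{eq:par-moph}).

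For iteration-congruence, the plan is to reduce to iteration-congruence of $\rho$ via the explicit coiterative formula for $(-)^\iistar$ given in Theorem~\ref{thm:ext_guard}(2). Given $f,g\colon X\to_2 \IB[\hash^\nu]{(Y+X)}{Z}$ with $\rho^\nu f = \rho^\nu g$, post-composition with $\out'$ together with naturality of $\rho$ and the defining diagram of $\rho^\nu$ should yield equality of $\rho\comp\out f$ and $\rho\comp\out g$ up to a common parametrized-functor application of $\rho^\nu$. Iteration-congruence of~$\rho$ then gives $\rho\comp(\out f)^\istar = \rho\comp(\out g)^\istar$. Finally, an auxiliary identity of the form $\rho^\nu\comp\coit h = \coit(\rho\comp h)$ (again proved by finality, mirroring~\eqref{eq:rho_coit_is_iter}) lets us commute $\rho^\nu$ past the outer $\coit$ in the formula for $f^\iistar$, giving $\rho^\nu f^\iistar = \rho^\nu g^\iistar$.

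The main obstacle I anticipate is precisely this last step: commuting $\rho^\nu$ through the composite of $\coit$, parametrized Kleisli lifting and $\eta^\nu\inr$ that makes up the formula for $f^\iistar$ requires a careful diagram chase combining the coalgebraic definitions of $\rho^\nu$ and $\eta^\nu$ and of Kleisli lifting from~\eqref{eq:IBnu-monad} with naturality of~$\rho$ and the iteration-congruence hypothesis, in a computation closely analogous to the one at the end of the proof of Theorem~\ref{thm:elgot_is_retract}. Once the auxiliary coalgebraic identities and the section $\upsilon^\nu$ are in place, the remaining verifications are essentially formal.
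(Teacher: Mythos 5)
Your proposal follows essentially the same route as the paper's proof: the section $\upsilon^\nu=\coit(\upsilon\comp\out)$, the finality argument for $\rho^\nu\upsilon^\nu=\id$, the reduction of guardedness preservation of $\upsilon^\nu$ to that of $\upsilon$ and of $\IB{\id}{\upsilon^\nu}$, and the auxiliary identity $\rho^\nu\comp(\coit h)=\coit(\rho\comp h)$ driving the iteration-congruence step are all exactly what the paper does. The only caveat is that the iteration-congruence hypothesis yields $\rho\comp((\IB{\id}{\rho^\nu})\comp\out f)^\istar=\rho\comp((\IB{\id}{\rho^\nu})\comp\out g)^\istar$ rather than the equality $\rho\comp(\out f)^\istar=\rho\comp(\out g)^\istar$ you write (since $\IB[\hat\hash]{\id}{\rho^\nu}$ need not be monic, the factor cannot be dropped), and it is the former version that must be threaded through the final coalgebra chase.
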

\begin{proof}
It is already shown in Theorem~\ref{thm:ext_guard} that $\rho^\nu$ is a monad morphism.

We define the associated section by $\upsilon^\nu = \coit(\upsilon\out)$. Indeed it is easy to check 
that $\rho^\nu\upsilon^\nu=\id$: since
\begin{align*}
\out\comp \rho^\nu\upsilon^\nu 
=&\; (\IB[\hat\hash]{\id}{\rho^\nu})\comp\rho\out \upsilon^\nu\\
=&\; (\IB[\hat\hash]{\id}{\rho^\nu})\comp\rho (\IB{\id}{\upsilon^\nu})\comp \upsilon \out\\
=&\; (\IB[\hat\hash]{\id}{\rho^\nu}) (\IB[\hat\hash]{\id}{\upsilon^\nu})\comp \rho\upsilon \out\\
=&\; (\IB[\hat\hash]{\id}{\rho^\nu\upsilon^\nu})\comp  \out.
\end{align*}
and also $\out\id=(\IB[\hat\hash]{\id}{\id})\out$, the claim $\rho^\nu\upsilon^\nu=\id$ follows by uniqueness of final coalgebra morphisms.

Next, suppose that $f\colon X\to \IB[\hat\hash^\nu]{Y}{Z}$ is
$\sigma$-guarded, which according to
Theorem~\ref{thm:ext_guard} means that $\out f$ is
$\sigma$-guarded. We need to show that so is
$\upsilon^\nu f\colon X\to \IB[\hash^\nu]{Y}{Z}$. Now
\begin{align*}
\out\upsilon^\nu f =  (\IB{\id}{\upsilon^\nu})\comp \upsilon\out f 
\end{align*}
is $\sigma$-guarded because $\rho$ is a guarded retraction and hence
$\upsilon\out f $ is $\sigma$-guarded, and $\IB{\id}{\upsilon^\nu}$ is
a parametrized guarded monad morphism. Hence, again, according to
Theorem~\ref{thm:ext_guard}, $\upsilon^\nu f$ is $\sigma$-guarded. We
have thus proved that $\rho^\nu$ is a guarded retraction.

We are left to check that $\rho^\nu$ is an iteration congruence. Suppose that
$\rho^\nu f=\rho^\nu g$ for some $f,g\colon X\to_2 \IB[\hash^\nu]{(Y+X)}{Z}$. Then
\begin{align*}
\rho\comp(\IB{\id}{\rho^\nu})\out f = (\IB[\hat\hash]{\id}{\rho^\nu})\comp \rho\comp\out f = \out\rho^\nu f
\end{align*}
(by naturality of~$\rho$ and the definition of $\rho^\nu$) and
analogously for~$g$ in place of~$f$, so using that $\rho$ is an
iteration congruence, we obtain
\begin{align}\label{eq:rho_nu_simp}
\rho\comp((\IB{\id}{\rho^\nu})\out f)^\istar = \rho\comp((\IB{\id}{\rho^\nu})\out g)^\istar.
\end{align}
Observe that
for suitably typed $h$, 
\begin{align*}
\out\coit(\rho\out)\comp(\coit h)
=&\;(\IB[\hat\hash]{\id}{\coit(\rho\out)})\comp\rho\out\comp (\coit h)\\ 
=&\;(\IB[\hat\hash]{\id}{\coit(\rho\out)})\comp\rho\comp (\IB{\id}{(\coit h)})\comp h\\ 
=&\;(\IB[\hat\hash]{\id}{\coit(\rho\out)\comp(\coit h)})\comp \rho\comp h,
\end{align*} 
and therefore, by finality of $\coit(\rho h)$,
\begin{align}\label{eq:rho_out_coit}
\coit(\rho\out)\comp(\coit h) = \coit(\rho h).
\end{align}
Therefore,
\begin{flalign*}
&&\rho^\nu f^\iistar 
=&\; \rho^\nu\coit([\eta,(\out f)^\istar]^\klstar\out)\comp\eta\inr&\by{Theorem~\ref{thm:ext_guard}}\\
&&=&\; \coit(\rho\out)\coit([\eta,(\out f)^\istar]^\klstar\out)\comp\eta\inr\\ 
&&=&\; \coit(\rho[\eta,(\out f)^\istar]^\klstar\out)\comp\eta\inr&\by{\eqref{eq:rho_out_coit}}\\ 
&&=&\; \coit([\eta,\rho\comp((\IB{\id}{\rho^\nu})\comp\out f)^\istar]^\klstar\out)\comp\rho^\nu\eta\inr.
\intertext{
The last step is due to uniqueness of the final coalgebra morphism $\coit([\eta,\rho\comp(\out f)^\istar]^\klstar\rho\out)$ and 
the following calculation:}
&&\out \coit\bigl(&[\eta,\rho\comp((\IB{\id}{\rho^\nu})\comp\out f)^\istar]^\klstar\out\bigr)\comp\rho^\nu&\\
&&=&\;\bigl(\IB[\hat\hash]{\id}{\coit([\eta,\rho\comp((\IB{\id}{\rho^\nu})\comp\out f)^\istar]^\klstar\out)}\bigr)\\
&&&\;\quad[\eta,\rho\comp((\IB{\id}{\rho^\nu})\comp\out f)^\istar]^\klstar\out\rho^\nu\\
&&=&\;\bigl(\IB[\hat\hash]{\id}{\coit([\eta,\rho\comp((\IB{\id}{\rho^\nu})\comp\out f)^\istar]^\klstar\out)}\bigr)\\
&&&\;\quad\rho\comp [\eta,((\IB{\id}{\rho^\nu})\comp\out f)^\istar]^\klstar(\IB{\id}{\rho^\nu})\out&\by{$\rho$ is a monad morphism}\\
&&=&\;\bigl(\IB[\hat\hash]{\id}{\coit([\eta,\rho\comp((\IB{\id}{\rho^\nu})\comp\out f)^\istar]^\klstar\out)}\bigr)\\
&&&\;\quad\rho\comp(\IB{\id}{\rho^\nu})\comp [\eta,(\out f)^\istar]^\klstar\out&\by{$\IB{\id}{\rho^\nu}$ is a monad morphism}\\
&&=&\;\bigl(\IB[\hat\hash]{\id}{\coit([\eta,\rho\comp((\IB{\id}{\rho^\nu})\comp\out f)^\istar]^\klstar\out)}\comp\rho^\nu\big)\\
&&&\;\quad\rho\comp [\eta,(\out f)^\istar]^\klstar\out.
\end{flalign*}
An analogous calculation applies to $\rho^\nu g^\iistar$, and therefore 
by~\eqref{eq:rho_nu_simp}, $\rho^\nu f^\iistar=\rho^\nu g^\iistar$.
\end{proof}
\noindent Theorems~\ref{thm:elgot_is_retract} and~\ref{thm:rho_ext}
jointly provide a simple and structured way of showing that Elgotness
extends along the parametrized monad transformer
$\hash{}{}\mapsto\IB[\hat\hash]{}{}$: If $\IB{\argument}{X}$ is Elgot,
then by Theorem~\ref{thm:elgot_is_retract} there is an
iteration-congruent retraction
$\rho\colon\nu\gamma.\,\IB{\argument+\gamma}{X}\to\IB{\argument}{X}$. By
Theorem~\ref{thm:rho_ext}, this gives rise to an iteration-congruent
retraction
\begin{displaymath}
\rho^\nu\colon\nu\gamma'.\,\nu\gamma.\,\IB{\argument+\gamma}{(\gamma',X)}\to\nu\gamma'.\,\IB{\argument}{(\gamma',X)}
\end{displaymath}
and by Theorem~\ref{thm:elgot_is_retract}, the right-hand side is 
again Elgot. We have thus proved%
\begin{cor}\label{cor:ext}
Given a parametrized monad $\IB{}{}$ and $X\in |\BC|$, if $\IB{\argument}{X}$ is Elgot then so is $\IB[\hash^\nu]{\argument}{X}=\nu\gamma.\,\IB{\argument}{(\gamma,X)}$. 
\end{cor}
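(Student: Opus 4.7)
My plan is to carry out the two-step argument sketched immediately before the corollary; no ingredients are needed beyond Theorems~\ref{thm:elgot_is_retract} and~\ref{thm:rho_ext}. First, I would apply Theorem~\ref{thm:elgot_is_retract} to the Elgot monad $\IB[\hash]{\argument}{X}$, obtaining an iteration-congruent retraction
\begin{equation*}
\rho\colon\nu\gamma.\,\IB[\hash]{\argument+\gamma}{X}\to\IB[\hash]{\argument}{X}
\end{equation*}
from its coinductive resumption transform. Both $\rho$ and its section $\upsilon$ are given by uniform categorical constructions (namely $\rho=\out^\istar$ and $\upsilon=\tuo\eta\inr\tuo(S\inl)$, as in the proof of Theorem~\ref{thm:elgot_is_retract}), hence they are natural in the parameter $X$ and assemble into an iteration-congruent retraction between suitably typed parametrized monads, the domain being equipped with the guardedness structure induced via Theorem~\ref{thm:ext_guard}~(1) from vacuous guardedness.

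Next, I would invoke Theorem~\ref{thm:rho_ext} on this parametrized retraction. Lifting along the parametrized $\nu$-transform, this yields an iteration-congruent retraction
\begin{equation*}
\rho^\nu\colon\nu\gamma'.\,\nu\gamma.\,\IB[\hash]{\argument+\gamma}{(\gamma',X)}\to\nu\gamma'.\,\IB[\hash]{\argument}{(\gamma',X)}=\IB[\hash^\nu]{\argument}{X}.
\end{equation*}
The domain here is a nested coinductive resumption built from a vacuously guarded datum, and so, by two successive applications of Theorem~\ref{thm:ext_guard}~(3), is itself guarded iterative. Appealing to the ``if'' direction of Theorem~\ref{thm:elgot_is_retract} then yields that $\IB[\hash^\nu]{\argument}{X}$ is an Elgot monad, as claimed.

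The main potential obstacle is the bookkeeping in the first step: promoting the componentwise monad retraction to a retraction of parametrized monads of the arity required by Theorem~\ref{thm:rho_ext}, and checking that the guardedness structures on both sides align correctly. However, this reduces to the naturality in $X$ of the formulas defining $\rho$ and $\upsilon$, which is immediate from their definitions in the proof of Theorem~\ref{thm:elgot_is_retract}; hence no essentially new calculation is required beyond what has been done in Theorems~\ref{thm:elgot_is_retract} and~\ref{thm:rho_ext}.
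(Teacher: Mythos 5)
Your proposal is correct and follows exactly the paper's own argument, which appears in the paragraph immediately preceding the corollary: apply Theorem~\ref{thm:elgot_is_retract} to obtain an iteration-congruent retraction $\rho\colon\nu\gamma.\,\IB{\argument+\gamma}{X}\to\IB{\argument}{X}$, lift it via Theorem~\ref{thm:rho_ext} to $\rho^\nu$ between the $\nu$-transforms, and conclude by the ``if'' direction of Theorem~\ref{thm:elgot_is_retract}. The extra details you supply (naturality of $\rho$ and $\upsilon$ in the parameter, and guarded iterativity of the nested domain via Theorem~\ref{thm:ext_guard}) are exactly the bookkeeping the paper leaves implicit.
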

\noindent In particular, we have thus obtained a more structured and
simpler proof of one of the main results
in~\cite{GoncharovEA18}, which states that the coinductive
generalized resumption monad transformer preserves Elgotness. 
Theorem~\ref{thm:elgot_is_retract} characterizes iteratable Elgot
monads as iteration-congruent retracts of their
$(\argument)^\nu$-transforms. We take this perspective further as
follows. 

\begin{definition}
  We extend the notation $F^\nu=\nu\gamma.F(-+\gamma)$ to
  functors~$F$. We say that a functor~$F$ is
\begin{itemize}
\item \emph{$1$-iteratable} if $F^\nu$ exists,
  \item \emph{$(n+1)$-iteratable} if $F^\nu$ is $n$-iteratable, 
  \item \emph{$\omega$-iteratable} if $F$ is $n$-iteratable for every $n$.
\end{itemize}
We apply all these notions mainly to monads $\BBT$, referring to their
underlying functor~$T$.
\end{definition}
\begin{remark}
  Note that for every natural number $n$,
\begin{align*}
\nu\gamma'.\,\nu\gamma.\,T(X+\gamma'+n\times\gamma)
  \;\iso \nu\gamma.\,T(X+\gamma+n\times\gamma)
  \;\iso \nu\gamma.\,T(X+(n+1)\times\gamma),
\end{align*}
where $n\times X$ denotes the $n$-fold sum $X+\ldots+X$. It follows by
induction that $n$-iteratability of $T$ is equivalent to the
assumption that all coalgebras $\nu\gamma.\,T(X+n\times\gamma)$ exist,
a condition that does not appear much stronger than iteratability
of~$T$. Still, the $2$-iteratable functors are properly contained in
the iteratable functors, as the following example shows. Let $\BC$ be
the category of countable sets and $T=\Id$. Then, it is easy to see
that $T^\nu X$ is isomorphic to $X\times \Nat+1$, hence $\Id$ is
iteratable.  However, it is not $2$-iteratable, because
$(T^\nu)^\nu\iobj \cong \nu\gamma.\,2\times\gamma$ can be
characterized as the object of all infinite bit streams, which does
not fit into $\BC$ for cardinality reasons. Showing this formally 
amounts to mimicking Cantor's classical diagonalization argument.

We expect that separating $n$-iteratability from $(n+1)$-iteratability
for $n>1$ would involve much less natural examples, as the previous
cardinality argument typically would not apply.

\end{remark}
\noindent Consider the functor $\BBT\mapsto\BBT^\nu$ on the category
of $\omega$-iteratable monads over
$\BC$. %
This construction is itself a monad: the unit~$\bta$
is the natural transformation with components
$\bta_X=\tuo(T\inl)\colon TX\to T^\nu X$, and the multiplication
$\bmu\colon T^{\nu\nu}\to T^\nu$ has components
\begin{align*}
\bmu_X = \coit\bigl(T[\id,\inr\tuo]\out\out\gray{\colon T^{\nu\nu} X\to T(X+T^{\nu\nu} X)}\bigr). 
\end{align*} 
We record explicitly that the relevant laws are satisfied:
\begin{lem}
  With multiplication $\bmu$ and unit $\bta$ as defined above, the
  construction $(\argument)^\nu$ becomes a monad on the (overlarge)
  category of $\omega$-iteratable monads.
\end{lem}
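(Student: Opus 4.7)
The plan is to use finality of $T^\nu X$ as a $T(X+\argument)$-coalgebra throughout. Specifically, I need to verify that (i) each $\bta_\BBT$ and $\bmu_\BBT$ is a monad morphism on $\BC$; (ii) $\bta$ and $\bmu$ are natural in $\BBT$ with respect to monad morphisms; and (iii) the three monad laws hold.

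For (i), that $\bta_\BBT=\tuo(T\inl)$ is a monad morphism reduces to short computations using $\out\tuo=\id$ and the explicit formulas for $\eta^\nu$ and $(\argument)^\kklstar$ recalled in Remark~\ref{rem:t-nu}: unit preservation is direct, and Kleisli-lifting compatibility follows by showing that $\bta f^\klstar$ and $(\bta f)^\kklstar\bta$ both satisfy the same corecursive equation into the final coalgebra $(T^\nu Y,\out)$. The analogous verifications for $\bmu_\BBT$ proceed by the same pattern, now exploiting that $\bmu_X$ is itself defined by coiteration. For (ii), naturality of $\bta$ and $\bmu$ in $\BBT$ along a monad morphism $\alpha\colon\BBT\to\BBS$ exploits the fact that $\alpha^\nu$ is characterized as the unique $S(X+\argument)$-coalgebra morphism from $(T^\nu X,\alpha\out)$ to $(S^\nu X,\out)$, which lets the naturality squares be discharged by finality.

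For (iii), both unit laws are short: one computes, for instance, $\out(\bmu_X\bta_{T^\nu X})=T(\id+\bmu_X\bta_{T^\nu X})\out$ by combining $\out\tuo=\id$ with the defining equation of $\bmu$, which exhibits $\bmu_X\bta_{T^\nu X}$ as a $T(X+\argument)$-coalgebra endomorphism of the final $(T^\nu X,\out)$, hence equal to $\id$; the right unit law $\bmu\comp T^\nu\bta=\id$ is analogous, using naturality of $\out$ along $\bta_X$. The main obstacle is the associativity law $\bmu\comp\bmu_{T^\nu}=\bmu\comp T^\nu\bmu$, which lives on $T^{\nu\nu\nu}X$ and is the only place where $\omega$-iteratability beyond $2$-iteratability is genuinely used. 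My plan is to equip $T^{\nu\nu\nu}X$ with the $T(X+\argument)$-coalgebra structure obtained by collapsing three nested $\out$'s via the appropriate tagging maps $T[\id,\inr\tuo]$, and then to show that both composites are $T(X+\argument)$-coalgebra morphisms from this object into $(T^\nu X,\out)$, concluding by uniqueness. The delicate step will be unfolding the right-hand side, where $\out$ must first be pushed past $T^\nu\bmu$ using naturality of $\out$ (exactly as in the functoriality argument within the proof of Theorem~\ref{thm:ext_guard}), before the defining equation for $\bmu$ can be applied twice; once this is set up, the book-keeping is routine but requires care.
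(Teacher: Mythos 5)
This is essentially the paper's proof: the three monad laws are verified exactly as you describe, by applying $\out$ to both sides of each law and concluding from uniqueness of coalgebra morphisms into the final coalgebra $(T^\nu X,\out)$, with associativity handled by exhibiting both composites as coalgebra morphisms out of $T^{\nu\nu\nu}X$ carrying the structure built from three nested $\out$'s and two copies of $T[\id,\inr\tuo]$. Your additional items (i) and (ii) --- that $\bta$ and $\bmu$ are componentwise monad morphisms and natural in $\BBT$ --- are left implicit in the paper but are indeed part of what the statement requires, and the finality arguments you sketch for them go through.
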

\begin{proof}
  By coinduction. Using the definitions of $\bmu$ and $\bta$, we have
  \begin{flalign*}
    && \out\bmu\bta
    =\;& \out \coit\bigl(T[\id,\inr\tuo]\out\out\bigr) \tuo(T^\nu\inl)&\\
    &&=\;& T\bigl(\id+\coit(T[\id,\inr\tuo]\out\out)\bigr)\comp T[\id,\inr\tuo]\out\comp (T^\nu\inl)&\\
    &&=\;& T\bigl(\id+\coit(T[\id,\inr\tuo]\out\out)\bigr)\comp T[\id,\inr\tuo]\comp T\bigl(\inl+ (T^\nu\inl)\bigr)\out&\\
    &&=\;& T\bigl(\id+\coit(T[\id,\inr\tuo]\out\out)\tuo\comp (T^\nu\inl)\bigr)\out&\\
    &&=\;& T\bigl(\id+ \bmu\bta\bigr)\out,& \intertext{
      and therefore $\bmu\bta = \id$ by uniqueness of final coalgebra
      morphisms. Analogously, }
    && \out\bmu\bta^\nu
    =\;& \out\coit\bigl(T[\id,\inr\tuo]\out\out\bigr)\coit (\bta\out) &\\
    &&=\;& T\bigl(\id+\coit(T[\id,\inr\tuo]\out\out)\bigr)\comp\\
    &&& T[\id,\inr\tuo]\out T(\id+\coit (\bta\out))\comp \bta\out &\\
    &&=\;& T\bigl(\id+\coit(T[\id,\inr\tuo]\out\out)\bigr)\comp\\
    &&& T[\id,\inr\tuo]\comp (T\inl)\comp T\bigl(\id+\coit (\bta\out)\bigr)\comp \out &\\
    &&=\;& T\bigl(\id+\coit(T[\id,\inr\tuo]\out\out)\bigr)\comp T\bigl(\id+\coit (\bta\out)\bigr)\comp \out &\\
    &&=\;& T(\id+\bmu \bta^\nu) \out &\\
    \intertext{and therefore $\out\bmu \bta^\nu=\id$. The remaining
      law $\bmu\bmu = \bmu\comp\bmu^\nu$ follows by the same argument
      from} && \out\bmu\bmu^\nu
    =\;& \out \coit\bigl(T[\id,\inr\tuo]\out\out\bigr)\coit (\coit\bigl(T[\id,\inr\tuo]\out\out\bigr)\out) &\\
    &&=\;& T\bigl(\id+\coit(T[\id,\inr\tuo]\out\out)\bigr)\comp T[\id,\inr\tuo]\out\\
    &&& T\bigl(\id+\coit
    (\coit\bigl(T[\id,\inr\tuo]\out\out\bigr)\out) \bigr)\comp
    \coit\bigl(T[\id,\inr\tuo]\out\out\bigr)\out &\\
    &&=\;& T(\id+\bmu)\comp T[\id,\inr\tuo]\out T(\id+\bmu^\nu)\comp
    \bmu\out &\\
    &&=\;& T[\id+\bmu\bmu^\nu, \inr\bmu\tuo T(\id+\bmu^\nu)]\out\bmu\out &\\
    &&=\;& T[\id+\bmu\bmu^\nu, \inr\bmu\tuo T(\id+\bmu^\nu)\comp\bmu]\comp T[\id,\inr\tuo]\out\out\out &\\
    &&=\;& T[\id+\bmu\bmu^\nu, \inr\bmu\bmu^\nu\tuo]\comp T[\id,\inr\tuo]\out\out\out &\\
    &&=\;& T(\id+\bmu\bmu^\nu)\comp T[\id,\inr\tuo]\comp T[\id,\inr\tuo]\out\out\out &\\[2ex]
    && \out\bmu\bmu
    =\;& \out \coit\bigl(T[\id,\inr\tuo]\out\out\bigr)\coit\bigl(T^\nu[\id,\inr\tuo]\out\out\bigr) &\\
    &&=\;& T(\id+\bmu)\comp T[\id,\inr\tuo]\out T^\nu(\id+\bmu)\comp T^\nu[\id,\inr\tuo]\out\out\\
    &&=\;& T[\id+\bmu\bmu,\inr\bmu\tuo T^\nu(\id+\bmu)]\comp\out T^\nu[\id,\inr\tuo]\out\out\\
    &&=\;& T[(\id+\bmu\bmu)\comp [\id,\inr\tuo],\inr\bmu\tuo T^\nu(\id+\bmu)\comp T^\nu[\id,\inr\tuo]]\comp\out\out\out\\
    &&=\;& T[(\id+\bmu\bmu)\comp [\id,\inr\tuo],\inr\bmu\bmu\tuo\tuo]\comp\out\out\out\\
    &&=\;& T(\id+\bmu\bmu)\comp T[[\id,\inr\tuo],\inr\tuo\tuo]\comp\out\out\out\\
    &&=\;& T(\id+\bmu\bmu)\comp T[\id,\inr\tuo]\comp T[\id,\inr\tuo]\out\out\out \tag*{\qedhere}
  \end{flalign*}
\end{proof}
\noindent For every $T$ we now define the \emph{delay transformation}
\begin{equation*}
\del =\tuo\eta\inr\colon T^\nu\to T^\nu.
\end{equation*}
This leads to our second main result:
\begin{theorem}\label{thm:nu-algebras}
  The category of $\omega$-iteratable Elgot monads over $\BC$ is
  isomorphic to the full subcategory of the category of
  $(\argument)^\nu$-algebras consisting of the
  $(\argument)^\nu$-algebras $(\BBS,\rho\colon\BBS^\nu\to \BBS)$ (for $\omega$-iteratable $\BBS$) satisfying ${\rho\del=\rho}$.
\end{theorem}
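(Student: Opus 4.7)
The plan is to exhibit the isomorphism on objects via $\BBS \mapsto (\BBS,\rho)$ with $\rho_X = (\out\colon S^\nu X \to S(X+S^\nu X))^\istar$ supplied by Theorem~\ref{thm:elgot_is_retract}, and then to check that this extends to a functor inverting the obvious construction in the other direction.

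For the \emph{forward direction} (an Elgot monad yields such an algebra), Theorem~\ref{thm:elgot_is_retract} already establishes that $\rho$ is a monad morphism. The unit axiom $\rho\bta = \id$ then follows by a short Kleisli computation using the fixpoint identity $\rho = [\eta,\rho]^\klstar\out$, namely $\rho\bta = [\eta,\rho]^\klstar \out\tuo(S\inl) = [\eta,\rho]^\klstar(S\inl) = \eta^\klstar = \id$; the delay-cancellation $\rho\del = \rho$ is similarly immediate from $\rho\tuo\eta\inr = [\eta,\rho]^\klstar\eta\inr = \rho$. The multiplication law $\rho\bmu = \rho\rho^\nu$ is the substantive verification: using equation~\eqref{eq:rho_coit_is_iter} from the proof of Theorem~\ref{thm:elgot_is_retract} (namely $\rho\coit h = h^\istar$), both sides rewrite as iterates of concrete morphisms out of $S^{\nu\nu}X$, and equality is then obtained by a calculation using codiagonal, naturality, and the fixpoint identity for $\rho$.

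For the \emph{backward direction} (an algebra $(\BBS,\rho)$ with $\rho\del = \rho$ yields an Elgot monad), equip $\BBS$ with total guardedness, making $\rho$ a guarded monad morphism. The key observation is that $\upsilon := \del\bta\colon\BBS\to\BBS^\nu$ is a guarded section: $\rho\upsilon = \rho\bta = \id$ by the algebra unit law and delay-cancellation, and for every $f\colon X \to SY$ the morphism $\out\upsilon f = \eta\inr\tuo(S\inl)f$ factors through $\eta\inr$, hence $\upsilon f$ is $\sigma$-guarded in $\BBS^\nu$ for every summand $\sigma$. Thus $\rho$ is a guarded retraction, and Theorem~\ref{thm:ext_iter} equips $\BBS$ with a guarded pre-iterative structure $f^\istar = \rho(\upsilon f)^\iistar$. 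Elgotness then follows from Theorem~\ref{thm:ext_ax} provided that $\rho$ is iteration-congruent, which by the characterization~\eqref{eq:retraction-iteration} reduces to $\rho(\upsilon\rho h)^\iistar = \rho h^\iistar$ for guarded $h\colon X\to_2 S^\nu(Y+X)$; this is precisely where the algebra multiplication law $\rho\bmu = \rho\rho^\nu$ is crucially used, since it forces agreement between the two ways of iterating $h$ ``through'' $\rho$.

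Mutual inversion of the two constructions then follows because $\rho = \out^\istar$ is uniquely determined by the fixpoint identity and because $f^\istar = \rho(\upsilon f)^\iistar$ agrees with any other iteration satisfying the Elgot axioms, by uniqueness of guarded solutions in $\BBS^\nu$. For functoriality, one shows that a monad morphism $\alpha\colon\BBS_1\to\BBS_2$ is iteration-preserving iff $\rho_2\alpha^\nu = \alpha\rho_1$: one direction follows from $\rho_2\alpha^\nu = \rho_2\coit(\alpha\out_1) = (\alpha\out_1)^\istar$ (by equation~\eqref{eq:rho_coit_is_iter}) matched with $\alpha\rho_1 = \alpha\out_1^\istar$, while the converse combines Lemma~\ref{lem:iter-preserve} (applied to the iterative monad $\BBS_2^\nu$) with the naturality identity $\alpha^\nu\upsilon_1 = \upsilon_2\alpha$ to propagate preservation from $\rho$ to arbitrary iterates. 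The main obstacle will be the multiplication law in the forward direction together with the dual iteration-congruence in the backward direction; both turn on a delicate use of uniqueness of guarded iterates in $\BBS^\nu$, whereas the remaining verifications reduce to routine Kleisli manipulations.
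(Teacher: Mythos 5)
Your forward direction and your functoriality argument essentially coincide with the paper's proof (the paper's verification of $\rho\bmu=\rho\rho^\nu$ additionally invokes the iteration\-/congruence of $\rho$ supplied by Theorem~\ref{thm:elgot_is_retract}, not just codiagonal, naturality and the fixpoint identity, but that ingredient is available to you and your sketch of this half is sound). The problem is the backward direction.

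There you route everything through Theorems~\ref{thm:ext_iter} and~\ref{thm:ext_ax}, which forces you to first prove that $\rho\colon\BBS^\nu\to\BBS$ is an iteration congruence, equivalently that~\eqref{eq:retraction-iteration} holds: $\rho(\upsilon\rho\comp h)^\iistar=\rho h^\iistar$ for every guarded $h\colon X\to_2 S^\nu(Y+X)$. You assert that this ``is precisely where $\rho\bmu=\rho\rho^\nu$ is crucially used'' but give no argument, and this is exactly the hard part. Unwinding the definitions, $(\upsilon\rho\comp h)^\iistar=\del\comp\coit(\rho h)$, so after delay cancellation the left-hand side is $\rho\comp\coit(\rho h)$, while the right-hand side is $\rho\comp\coit\bigl([\eta,(\out h)^\istar]^\klstar\out\bigr)\comp\eta^\nu\inr$ --- coiterations of two coalgebras on different carriers. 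Applying $\rho$ to the fixpoint identities only shows that both sides solve the same unguarded fixpoint equation for $\rho h$ in $\BBS$, which determines nothing since such solutions are not unique. In the proof of Theorem~\ref{thm:elgot_is_retract} the analogous statement (that $\out^\istar$ is an iteration congruence) takes a page of computation using naturality, codiagonal and dinaturality of the \emph{target} monad --- precisely the laws you are still trying to establish at this stage, so the natural attempt to fill the gap is circular. The paper sidesteps the issue entirely: it defines $f^\istar=\rho\comp(\coit f)$ and verifies the four Elgot axioms one by one, needing only that $\rho$ is a monad morphism with $\rho=[\eta,\rho]^\klstar\out$ (for fixpoint, naturality, uniformity) together with $\rho\bmu=\rho\rho^\nu$ and the coalgebraic identities $\coit(\rho\comp\coit f)=\rho^\nu\comp\coit(\coit f)$ and $\bmu\comp\coit(\coit f)=\coit(S[\id,\inr]\comp f)$ (for codiagonal). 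You would either have to supply a genuinely new derivation of~\eqref{eq:retraction-iteration} from the algebra laws, or fall back on the paper's direct verification --- in which case the detour through Theorems~\ref{thm:ext_iter} and~\ref{thm:ext_ax} buys nothing.
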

\noindent We refer to the condition ${\rho\del=\rho}$ as \emph{delay
  cancellation}. 
\begin{rem}\label{exmp:delay}
  The point of the above result is to systematize the connection
  between the $(\argument)^\nu$ construction and Elgot monads
  previously indicated by
  Theorem~\ref{thm:elgot_is_retract}. Alternative efforts to show that
  Elgotness is monadic exist (see Section~\ref{sec:related}) but
  necessarily involve quite different monads than $(\argument)^\nu$:
  Any monad~$\mathfrak{M}$ (on a category of monads) whose algebras are
  precisely the Elgot monads would itself have to produce Elgot monads
  $\mathfrak{M}\BBT$, while the point of involving $(\argument)^\nu$ is to obtain
  Elgot monads from guarded iterative ones.

  Formally, the following simple example shows that the delay
  cancellation condition $\rho\del=\rho$ cannot be omitted from
  Theorem~\ref{thm:nu-algebras}. Let $\catname{Mon}(\BC)^\nu$ be the
  category of $(\argument)^\nu$-algebras, and let
  $\catname{Mon}(\BC)^\nu_{\del}$ be the full subcategory of
  $\catname{Mon}(\BC)^\nu$ figuring in
  Theorem~\ref{thm:nu-algebras}. Since the identity functor is the
  initial monad, the initial object of $\catname{Mon}(\BC)^\nu$ is
  Capretta's delay monad~\cite{Capretta05}
  $D=\nu\gamma.\,(\argument+\gamma)$.  On the other hand, the initial
  object of $\catname{Mon}(\BC)^\nu_{\del}$ (if it exists) is the
  \emph{initial Elgot monad}~$\BBL$, which on $\BC=\Set$ is the
  \emph{maybe monad} $(\argument)+1$.

  If $\BC=\Set$, then $DX=(X\times\Nat + 1)$ does turn out to be
  Elgot~\cite{GoncharovMiliusEtAl16} (but applying
  Theorem~\ref{thm:nu-algebras} to $D$ qua Elgot monad yields a
  different $(\argument)^\nu$-algebra structure than the initial one),
  and $\BBL$ is, in this case, a retract of $\BBD$ in
  $\catname{Mon}(\BC)^\nu_{\del}$. %
  The situation is more intricate in categories with a nonclassical
  internal logic, for which $\BBD$ is mainly intended. We believe that
  in such a setting, neither is $\BBD$ Elgot in general, nor is $\BBL$
  the maybe monad.  However, there will still be a unique
  $(\argument)^\nu$-algebra morphism $\BBD\to\BBL$ in
  $\catname{Mon}(\BC)^\nu$. %
\end{rem}

\begin{proof}[Proof (Theorem~\ref{thm:nu-algebras})]
We fix the notation $(\eta,\argument^\klstar,\istar)$ for (potential) Elgot monads over 
$\BC$ and $(\eta^\nu,\argument^\kklstar,\iistar)$ for their $(\argument)^\nu$-transforms. 
We record the following identity, satisfied by any monad morphism~$\rho$ for which
$\rho\bta=\id$ and $\rho\del=\rho$: 
\begin{align}\label{eq:rho_out}
\rho=[\eta,\rho]^\klstar\out.
\end{align}
Indeed, 
\begin{flalign*}
&&[\eta,\rho]^\klstar\out=&\;[\eta,\rho]^\klstar\rho\bta\out&\by{$\rho\bta=\id$}\\
&&=&\;[\eta,\rho\del]^\klstar\rho\bta\out&\by{$\rho\del=\rho$}\\
&&=&\;\rho\comp[\eta,\del]^\kklstar\bta\out&\by{$\rho$ is a monad morphism}\\
&&=&\;\rho\comp[\eta,\del]^\kklstar\tuo(S\inl)\out\\
&&=&\;\rho\tuo\comp[\out[\eta,\id],\eta\inr[\eta,\id]^\kklstar]^\klstar(S\inl)\out\\
&&=&\;\rho\tuo\comp[\eta\inl,\eta\inr]^\klstar\out\\
&&=&\;\rho.
\end{flalign*}
For the inclusion from Elgot monads to $(\argument)^\nu$-algebras, let
$\BBS$ be an Elgot monad. By Theorem~\ref{thm:elgot_is_retract},
$\BBS$ is an iteration-congruent retract of $\BBS^\nu$ with
$S^\nu X = \nu\gamma.\,S(X+\gamma)$; specifically,
$\upsilon = \del\bta\colon S\to S^\nu$ is a left inverse to $\rho=\out^\istar\colon S^\nu\to S$.

First of all, it is easy to see that %
\begin{flalign*}
\rho\del = [\eta,\rho]^\klstar\out\del 
= [\eta,\rho]^\klstar\eta\inr 
= \rho.
\end{flalign*}
Moreover, we need to show the axioms of $(\argument)^\nu$-algebras:
\begin{align}\label{eq:meta_m}
\rho\bta=\id\qquad \text{and}\qquad  \rho\bmu =\rho\rho^\nu
\end{align}
where $\rho^\nu = \coit(\rho\out)\colon S^{\nu\nu}\to S^\nu$. For the left axiom, we 
readily have $\id = \rho\upsilon = \rho\del\bta=\rho\bta$. The right axiom is 
shown as follows:
\begin{align*}
\rho\bmu&\; 
\overset{\scriptscriptstyle(i)}{=\joinrel=}
\rho[\eta,\comp(\del\out)^\iistar]^\kklstar\out\overset{\scriptscriptstyle(ii)}{=\joinrel=}
\rho\bigl(\tuo S(\inl+\eta\inr)\rho\out\bigr)^\iistar
\overset{\scriptscriptstyle(iii)}{=\joinrel=}
\rho\rho^\nu.
\end{align*}
To show step (i), first observe that on the one hand
\begin{align*}
\bmu = \coit\bigl(S[\id+\out,\inr]\out\bigr)\comp\out
\end{align*}
Indeed, let $t=\coit\bigl(S[\id+\out,\inr]\out\bigr)$. Then
\begin{align*}
\out\comp t\comp\out=&\; S(\id+t)\comp S[\id+\out,\inr]\out\out\\
=&\; S[\id+t\out,t\inr]\out\out\\
=&\; S(\id+\comp t\out)\comp S[\id,\inr\tuo]\out\out,
\end{align*}
which means that $\comp t\out$ satisfies the equation uniquely characterizing 
$\bmu$, hence $\bmu=t\out$. On the other hand, $[\eta,\comp(\del\out)^\iistar]^\kklstar$
satisfies the equation characterizing~$t$. In order to see this, note that
\begin{equation}
  \out(\del\out)^\iistar =\;\eta\inr[\eta^\nu,(\del\out)^\iistar]^\kklstar\out, \label{eq:out_del_dag}
\end{equation}
witnessed by the following calculation:
\begin{align*}
 \out(&\del\out)^\iistar \\
=&\;\out [\eta^\nu,(\del\out)^\iistar]^\kklstar\del\out\\ 
=&\;[\out[\eta^\nu,(\del\out)^\iistar],\eta\inr[\eta^\nu,(\del\out)^\iistar]^\kklstar]^\klstar\out\del\out\\
=&\;[\out[\eta^\nu,(\del\out)^\iistar],\eta\inr[\eta^\nu,(\del\out)^\iistar]^\kklstar]^\klstar\eta\inr\out\\ 
=&\;\eta\inr[\eta^\nu,(\del\out)^\iistar]^\kklstar\out
\end{align*}
Therefore,
\begin{flalign*}
&&\out [\eta^\nu,\comp(\del\out)^\iistar]^\kklstar 
=&\; [\out [\eta^\nu,\comp(\del\out)^\iistar],\eta\inr [\eta^\nu,\comp(\del\out)^\iistar]^\kklstar]^\klstar\out\\
&&=&\; [[\out \eta^\nu,\out \comp(\del\out)^\iistar],\eta\inr [\eta^\nu,\comp(\del\out)^\iistar]^\kklstar]^\klstar\out\\ 
&&=&\; [\eta\comp(\id+[\eta^\nu,\comp(\del\out)^\iistar]^\kklstar\comp\out),\eta\inr [\eta^\nu,\comp(\del\out)^\iistar]^\kklstar]^\klstar\out&\by{\eqref{eq:out_del_dag}}\\ 
&&=&\; [\eta\comp(\id+[\eta^\nu,\comp(\del\out)^\iistar]^\kklstar)\comp[\id + \out,\inr]]^\klstar\out\\ 
&&=&\; S(\id+[\eta^\nu,\comp(\del\out)^\iistar]^\kklstar)\comp S[\id+\out,\inr]\out.
\end{flalign*}
In summary we obtain
\begin{align*}
\bmu = \coit\bigl(S[\id+\out,\inr]\out\bigr)\comp\out = [\eta^\nu,(\del\out)^\iistar]^\kklstar\out,
\end{align*}
which justifies~(i). Let us check (iii). Let us denote $\tuo S(\inl+\eta^\nu\inr)\rho\out$ by $t$. Then
\begin{align*}
\out t^\iistar =\;&\out [\eta^\nu,t^\iistar]^\kklstar \tuo S(\inl+\eta^\nu\inr)\rho\out\\
=\;&[\out[\eta^\nu,t^\iistar],\eta\inr[\eta^\nu,t^\iistar]^\kklstar]^\klstar S(\inl+\eta^\nu\inr)\rho\out\\
=\;&[\out\eta^\nu,\eta\inr t^\iistar]^\klstar\rho\out\\
=\;&S(\id+ t^\iistar)\rho\out,
\end{align*}
and therefore $t^\iistar$ satisfies the equation characterizing $\rho^\nu$, hence
$\rho^\nu=t^\iistar$. Finally, we proceed with the proof of~(ii). Using the fact
that $\rho$ is a monad morphism and that it cancels $\del$, we obtain that
\begin{align*}
\rho[\eta,(\del\out)^\iistar]^\kklstar\comp\out =&\;
[\eta,\rho\comp(\del\out)^\iistar]^\kklstar\rho\comp\out\\
=&\;[\eta,\rho\comp(\del\out)^\iistar]^\kklstar\rho\del\comp\out\\
=&\;\rho [\eta,\comp(\del\out)^\iistar]^\kklstar\del\comp\out\\
=&\;\rho(\del\out)^\iistar.&\by{fixpoint}
\end{align*}
In order to finish the proof of~(ii), it suffices to check that
\begin{align}\label{eq:ii}
\rho\del\out = \rho\tuo S(\inl+\eta\inr)\rho\out
\end{align}
and call the assumption that $\rho$ is an iteration congruence. The proof 
of~\eqref{eq:ii} runs as follows:
\begin{align*}
\rho\tuo S(\inl+\eta\inr)\rho\out 
=&\; [\eta,\rho]^\klstar\out\tuo S(\inl+\eta\inr)\rho\out\\
=&\; [\eta,\rho]^\klstar S(\inl+\eta\inr)\rho\out\\
=&\; [\eta\inl,\rho\eta\inr]^\klstar\rho\out\\
=&\; \rho\out\\
=&\; \rho\del\out.
\end{align*}
We have thus proved the claimed inclusion on objects. To extend the claim to morphisms, suppose that $\alpha\colon \BBS\to\BBT$ is an Elgot monad morphism, i.e.\ a monad morphism
such that $\alpha f^\istar = (\alpha f)^\istar$, and let us show that it is also
a morphism of the corresponding $(\argument)^\nu$-algebras, i.e.\ $\alpha\rho=\rho\alpha^\nu$.
Indeed, on the one hand $\alpha\rho=\alpha\out^\istar=(\alpha\out)^\istar$, and also
on the other hand, by uniformity of $(\argument)^\istar$, $\rho\alpha^\nu = 
\out^\istar\alpha^\nu=(\alpha\out)^\istar$, since $\out\alpha^\nu = T(\id+\alpha^\nu)\alpha\out$.

We proceed with the converse inclusion, i.e.\ from
$(\argument)^\nu$-algebras to Elgot monads. So assume that
$(\BBS,\rho)$ is a $(\argument)^\nu$-algebra, i.e.\ the
laws~\eqref{eq:meta_m} are satisfied, and $\rho\del=\rho$. We claim
that $\BBS$ equipped with the iteration operation
$f^\istar = \rho(\coit f)$ is an Elgot monad. The corresponding axioms
are verified as follows.
\begin{citemize}
 \item\emph{Fixpoint.} Let $f\colon X\to S(Y+X)$. Then $f^\iistar = \rho(\coit f)$ and 
hence
\begin{flalign*}
&&f^\iistar =\;&\rho(\coit f)\\
&&=\;&[\eta,\rho]^\klstar\out\comp(\coit f)\\
&&=\;&[\eta,\rho]^\klstar S(\id+\coit f) f&\by{\eqref{eq:rho_out}}\\
&&=\;&[\eta,\rho\coit f]^\klstar f\\
&&=\;&[\eta,f^\iistar]^\klstar f.
\end{flalign*}
 \item\emph{Naturality.} Let $f\colon X\to S(Y+X)$ and $g\colon  Y \to SZ$. Then
\begin{align*}
g^{\klstar} f^{\istar} =&\; g^\klstar \rho \comp(\coit f) = \rho\comp (\bta g)^\kklstar \comp(\coit f),\\
([(S\inl) \comp g, \eta\inr]^{\klstar} \comp f)^{\istar} =&\; \rho \coit([(S\inl) \comp g, \eta\inr]^{\klstar} \comp f).
\end{align*}
We are left to show that $(\bta g)^\kklstar (\coit f)$ satisfies the equation for 
$\coit([(S\inl) \comp g, \eta\inr]^{\klstar} \comp f)$. This runs as follows:
\begin{align*}
\out (\bta g)^\kklstar (\coit f) 
=&\; [\out\bta\comp g,\eta\inr(\bta g)^\kklstar]^\klstar\out \comp (\coit f)\\
=&\; [(S\inl) g,\eta\inr(\bta g)^\kklstar]^\klstar S(\id+\coit f) f\\
=&\; [(S\inl) g,\eta\inr(\bta g)^\kklstar (\coit f)]^\klstar  f\\
=&\; S(\id+(\bta g)^\kklstar (\coit f))\comp [(S\inl) \comp g, \eta\inr]^{\klstar} \comp f.
\end{align*}
\item\emph{Codiagonal.} Let $f\colon X\to S((Y+X)+X)$. Observe that since
  \begin{align*}
    &\out \comp (\coit(\rho \comp \out)) \comp (\coit(\coit f)) \\
    =&~S(\id+\coit(\rho \comp \out)) \comp \rho \comp \out \comp (\coit(\coit f)) \\
    =&~S(\id+\coit(\rho \comp \out)) \comp \rho \comp S^{\nu}(\id + \coit(\coit f)) \comp (\coit f) \\
    =&~S(\id+\coit(\rho \comp \out)) \comp S(\id + \coit(\coit f)) \comp \rho \comp (\coit f) &\by{naturality of $\rho$} \\
    =&~S(\id+(\coit(\rho \comp \out))\comp (\coit(\coit f))) \comp \rho \comp (\coit f),
  \end{align*}
  we have that
  \begin{equation}\label{eq:rho_nu_coit_f}
    \coit(\rho \comp (\coit f)) = (\coit(\rho \comp \out)) \comp (\coit (\coit f))
  \end{equation}
  by uniqueness of final morphisms. Thus,
\begin{flalign*}
&&f^{\istar\istar} =\;&\rho\comp (\coit(\rho(\coit f)))&\\
&&=\;&\rho\comp (\coit(\rho \comp \out))(\coit (\coit f))&\by{\eqref{eq:rho_nu_coit_f}}\\
&&=\;&\rho\comp\rho^\nu\comp(\coit(\coit f))&\by{definition of~$\rho^\nu$}\\
&&=\;&\rho\comp\bmu\comp(\coit(\coit f)). &\by{\eqref{eq:meta_m}}
\end{flalign*}
Since by definition, $(S[\id,\inr]\comp f)^\istar = \rho\coit(S[\id,\inr]\comp f)$, we are only
left to check that $\coit(S[\id,\inr]\comp f)=\bmu \comp (\coit(\coit f))$. This is easy to
establish directly by showing that the right-hand side satisfies the equation
characterizing the left-hand side:
\begin{align*}
  \out\bmu&\coit(\coit f)\\
  =&\;S(\id+\bmu) \comp S[\id,\inr\tuo]\out\out\coit(\coit f)\\
  =&\;S[\id+\bmu,\inr\bmu\tuo]\out\out\comp(\coit(\coit f))\\
  =&\;S[\id+\bmu,\inr\bmu\tuo]\comp \out \comp S^\nu(\id+\coit(\coit f))\comp \coit f\\
  =&\;S[\id+\bmu,\inr\bmu\tuo]\comp S((\id+\coit(\coit f))+S^\nu(\id+\coit(\coit f)))\comp \out \comp \coit f\\
  =&\;S[\id+\bmu,\inr\bmu\tuo]\comp S((\id+\coit(\coit f))+S^\nu(\id+\coit(\coit f)))\comp S(\id+\coit f) f\\
  =&\;S[\id+\bmu\comp (\coit(\coit f)),\inr\comp\bmu\comp\tuo S^\nu(\id+\coit(\coit f))(\coit f)]\comp f\\
  =&\;S[\id+\bmu\comp (\coit(\coit f)),\inr\comp\bmu\comp(\coit(\coit f))]\comp f\\
  =&\;S(\id+\bmu\comp (\coit(\coit f)))\comp S[\id,\inr]\comp f.
\end{align*} 
 \item\emph{Uniformity.} Let $f\colon  X \to T(Y + X)$, $g\colon  Z \to S(Y + Z)$, $h\colon  Z \to X$
and suppose that $f \comp h = T(\id+ h) \comp g$. If follows standardly by uniqueness of final coalgebra morphisms that $(\coit f)\comp h = \coit g$ and therefore
\begin{align*}
f^\iistar \comp h 
= \rho\comp (\coit f)\comp h
= \rho\comp (\coit g)
= g^\iistar.
\end{align*}
\end{citemize}
Finally, let us check that every $(\argument)^\nu$-algebra morphism $\alpha\colon \BBT\to\BBS$
is an Elgot monad morphism. By assumption we have that $\alpha\rho=\rho\alpha^\nu$,
and therefore, for every $f\colon X\to S(Y+X)$, $\alpha f^\istar = \alpha\rho(\coit f)=
\rho\alpha^\nu(\coit f)=\rho\coit(\alpha\out)(\coit f)$. It is then 
straightforward to verify that $\rho\coit(\alpha\out)(\coit f) = \rho\coit(\alpha f)=(\alpha f)^\istar$.
\end{proof}

\section{A Sandwich Theorem for Elgot Monads}\label{sec:sandwich}

As an application of Theorem~\ref{thm:elgot_is_retract}, we show that
sandwiching an Elgot monad between a pair of adjoint functors again
yields an Elgot monad. A similar result has previously been shown for
completely iterative monads~\cite{PirogGibbons15}; this result
generalizes straightforwardly to guarded iterative monads:
\begin{theorem}\label{thm:sandwich_iter}
  Let $F \colon\BC \to \BD$ and $U \colon\BD \to \BC$ be a pair of adjoint
  functors with associated natural isomorphism
  $\Phi \colon\BD(FX,Y) \to \BC(X,UY)$, and let $\BBT$ be a guarded
  iterative monad on\/ $\BD$. Then the monad induced on the composite
  functor $UTF$ is guarded iterative, with the guardedness relation
  defined by taking $f\colon  X \to_\sigma UTFY$ if and only if
  $\Phi^{\mone}f\colon  FX \to_\sigma TFY$, and unique solutions given by
  $f \mapsto \Phi((\Phi^{\mone}f)^\istar)$.
\end{theorem}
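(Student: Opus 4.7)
The plan is to exploit the fact that the sandwiched monad $\BBS = UTF$ has a Kleisli category that embeds fully and faithfully into $\BD_\BBT$ via the canonical translation $X \mapsto FX$ on objects and $f \mapsto \Phi^{\mone}f$ on morphisms. Since $F$ is a left adjoint it preserves coproducts, so this translation is compatible with the coproduct structures of the two Kleisli categories (which are inherited from the respective base categories). The stated guardedness relation on $\BBS$ is thus, by design, precisely the restriction of guardedness on $\BBT$ along this embedding. Given this, verifying the required axioms amounts to applying $\Phi^{\mone}$ to the corresponding facts for $\BBT$ and reading them back through $\Phi$.

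First I would verify that $\BBS$ with the given guardedness relation satisfies \textbf{(trv)}, \textbf{(par)} and \textbf{(cmp)}. Each case is discharged by applying $\Phi^{\mone}$ to the hypothesis, invoking the corresponding rule for $\BBT$, and applying $\Phi$ to the conclusion; the key bookkeeping facts are that $\Phi^{\mone}$ commutes with Kleisli composition in the sense $\Phi^{\mone}(g^{\klstar_S}\comp f) = (\Phi^{\mone}g)^{\klstar_T}\comp \Phi^{\mone}f$, that $\Phi^{\mone}(\eta^S\comp h) = \eta^T\comp Fh$ for $h$ in $\BC$, and that $F$ identifies coproduct injections in $\BC$ with those in $\BD$.

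Second, for an $\inr$-guarded $f\colon X \to_2 UTF(Y+X)$, the morphism $\Phi^{\mone}f\colon FX \to TF(Y+X)$ is $F\inr$-guarded, and the coproduct preservation $F(Y+X)\cong FY+FX$ yields an $\inr$-guarded equation morphism in $\BD_\BBT$. By guarded iterativity of $\BBT$ this has a \emph{unique} solution $(\Phi^{\mone}f)^\istar\colon FX\to TFY$; set $f^\istar = \Phi((\Phi^{\mone}f)^\istar)$. The fixpoint identity $f^\istar = [\eta^S, f^\istar]^{\klstar_S}\comp f$ in $\BC_\BBS$ translates under $\Phi^{\mone}$ to $(\Phi^{\mone}f)^\istar = [\eta^T, (\Phi^{\mone}f)^\istar]^{\klstar_T}\comp \Phi^{\mone}f$, which holds by construction. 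For uniqueness, any solution $g$ of $f$ in $\BC_\BBS$ produces a solution $\Phi^{\mone}g$ of $\Phi^{\mone}f$ in $\BD_\BBT$ by the same translation, hence $\Phi^{\mone}g = (\Phi^{\mone}f)^\istar$ and $g = f^\istar$ since $\Phi$ is a bijection.

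The only potential obstacle is keeping track of the unit, counit and the natural isomorphism $\Phi$ as they interact with the monad operations of $\BBS$, and in particular confirming that the translation $f\mapsto \Phi^{\mone}f$ really is a coproduct-preserving functor $\BC_\BBS \to \BD_\BBT$ commuting with the monad structures; but these are classical properties of sandwiched monads, and no genuinely new calculation is required. Everything else is pure transport along the isomorphism $\Phi$.
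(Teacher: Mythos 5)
Your proposal is correct and follows essentially the same route as the paper: the paper likewise verifies \textbf{(trv)}, \textbf{(par)} and \textbf{(cmp)} by transporting along $\Phi$ using preservation of coproducts by the left adjoint and the compatibility of $\Phi^{\mone}$ with the Kleisli structure (the latter being exactly the calculation in its \textbf{(cmp)} case), and then obtains existence and uniqueness of solutions by the same transport argument (which it delegates to Theorem~3.1 of Pir\'og--Gibbons rather than spelling out, as you do). No gap.
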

\begin{proof}
  First, we need to verify that the guardedness relation defined in
  the claim satisfies the rules from Definition~\ref{def:g-cat}. Note
  that since left adjoints preserve coproducts (LAPC), we can assume
  w.l.o.g.\ that $F(X+Y) = FX + FY$.
\begin{citemize}
 \item\textbf{(trv)} Let $f\colon  X \to UTFY$ be a morphism. By \textbf{(trv)} for $\BBT$, we have $(T\inl)(\Phi^{\mone}f)\colon FX \to_\sigma T(FY+FX)$. Then, the following holds:
\begin{flalign*}
&&(T\inl)(\Phi^{\mone}f)
=&\; (TF\inl)(\Phi^{\mone}f)&\by{LAPC}\\
&&=&\; \Phi^{\mone}((UTF\inl)f)&\by{$\Phi$ is a natural isomorphism}
\end{flalign*}
Thus, $\Phi^{\mone}((UTF\inl)f)\colon FX \to_\sigma T(FY+FX)$, so
$(UTF\inl)f\colon  X \to_\sigma UTF(Y+X)$.

\item\textbf{(par)} Let $f\colon  X \to_\sigma UTFZ$ and $g\colon  Y \to_\sigma UTFZ$. This means that $\Phi^{\mone}f\colon  FX \to_\sigma TFZ$ and $\Phi^{\mone}g\colon  FY \to_\sigma TFZ$, hence, by \textbf{(par)} for $\BBT$, $[\Phi^{\mone}f,\Phi^{\mone}g]\colon  FX + FY \to_\sigma TFZ$. By LAPC, we have $[\Phi^{\mone}f,\Phi^{\mone}g] = \Phi^{\mone}[f,g]\colon  F(X + Y) \to_\sigma TFZ$, so $[f,g]\colon  X+Y \to_\sigma UTFZ$.

 \item\textbf{(cmp)} Let $f\colon  X \to_2 UTF(Y+Z)$, $g\colon  Y \to_\sigma UTFV$, and $h\colon  Z \to UTFV$ be morphisms. Then, by \textbf{(cmp)} for $\BBT$, we obtain that $[\Phi^{\mone}g, \Phi^{\mone}h]^{\klstar}(\Phi^{\mone} f)\colon  FX \to_\sigma TFV$. Then, the following holds:
\begin{flalign*}
&& &\;[\Phi^{\mone}g, \Phi^{\mone}h]^{\klstar}(\Phi^{\mone} f)&\\
&&=&\; \mu^\BBT(T[\Phi^{\mone}g, \Phi^{\mone}h])(\Phi^{\mone} f)&\\
&&=&\; \mu^\BBT(\Phi^{\mone}(UT[\Phi^{\mone}g, \Phi^{\mone}h])f)&\by{$\Phi$ is a nat. iso.}\\
&&=&\; \Phi^{\mone}((U\mu^\BBT)(UT[\Phi^{\mone}g, \Phi^{\mone}h])f)&\by{$\Phi$ is a nat. iso.}\\
&&=&\; \Phi^{\mone}((U\mu^\BBT)(UT\Phi^{\mone}[g, h])f)&\by{LAPC}\\
&&=&\; \Phi^{\mone}((U\mu^\BBT)(UT\Phi^{\mone}\id)(UTF[g, h])f)&\by{$\Phi$ is a nat. iso.}\\
&&=&\; \Phi^{\mone}(\mu^{UTF}(UTF[g, h])f)&\\
&&=&\; \Phi^{\mone}([g, h]^{\klstar}f).
\end{flalign*}
Thus, $\Phi^{\mone}([g, h]^{\klstar}f)\colon  X \to_\sigma UTF(Y+X)$ in
$\BBT$, so $[g, h]^{\klstar}f\colon  FX \to_\sigma TF(Y+X)$ in the monad on $UTF$.
\end{citemize}
This means that if $f\colon  X \to_2 UTF(Y+X)$, then
$\Phi^{\mone}f\colon  FX \to_2 T(FY+FX)$, so $\Phi^{\mone}f$ has a unique
solution due to the fact that $\BBT$ is guarded iterative. The rest of
the proof is the same as for Theorem~3.1 in~\cite{PirogGibbons15}.
\end{proof}
\noindent Now, to obtain a similar result for Elgot monads, we can
easily combine Theorems~\ref{thm:elgot_is_retract}
and~\ref{thm:sandwich_iter} without having to verify the equational
properties by hand.

\begin{theorem}\label{thm:sandwich_elgot}
  With an adjunction as in Theorem~\ref{thm:sandwich_iter}, let $\BBS$
  be an Elgot monad on $\BD$. Then, the monad induced on the
  composite $USF$ is an Elgot monad.
\end{theorem}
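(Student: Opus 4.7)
The plan is to reduce the claim to the combination of Theorem~\ref{thm:elgot_is_retract} and Theorem~\ref{thm:sandwich_iter}. Since $\BBS$ is Elgot on $\BD$, Theorem~\ref{thm:elgot_is_retract} supplies the guarded iterative monad $\BBS^\nu$ on $\BD$ together with an iteration-congruent retraction $\rho\colon\BBS^\nu\to\BBS$ with section $\upsilon$. Apply Theorem~\ref{thm:sandwich_iter} to $\BBS^\nu$: this yields a guarded iterative monad $\widetilde{\BBS^\nu}$ on the composite $US^\nu F$, with guardedness defined via $\Phi^{\mone}$. Now equip the composite monad on $USF$ with total guardedness. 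By the ``if'' direction of Theorem~\ref{thm:elgot_is_retract}, it suffices to exhibit an iteration-congruent retraction $\widetilde{\BBS^\nu}\to USF$; the obvious candidate is $\tilde\rho=U\rho F$, with section $\tilde\upsilon=U\upsilon F$.

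The first three conditions on $(\tilde\rho,\tilde\upsilon)$ are essentially bookkeeping. Monad-morphism-ness of $\tilde\rho$ with respect to the two sandwiched monad structures is routine from the fact that $U(-)F$ preserves monad morphisms between monads on $\BD$ (using the unit/counit of the adjunction). The retraction identity $\tilde\rho\tilde\upsilon=\id$ is immediate from $\rho\upsilon=\id$ by applying $U(-)F$. Guardedness preservation by $\tilde\upsilon$ is trivial because the target $USF$ is totally guarded, so no condition has to be checked; this already makes $\tilde\rho$ a guarded retraction.

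The substantive step is iteration congruence. Suppose $f,g\colon X\to_2 US^\nu F(Y+X)$ satisfy $\tilde\rho f=\tilde\rho g$. By the naturality $\Phi^{\mone}(Uh\comp k)=h\comp\Phi^{\mone}k$ of the adjunction isomorphism, $\rho\comp\Phi^{\mone}f=\rho\comp\Phi^{\mone}g$ in $\BD$, and by the guardedness relation of Theorem~\ref{thm:sandwich_iter} the morphisms $\Phi^{\mone}f,\Phi^{\mone}g$ are $\inr$-guarded in $\BBS^\nu$. Iteration congruence of $\rho$ then yields $\rho\comp(\Phi^{\mone}f)^\istar=\rho\comp(\Phi^{\mone}g)^\istar$. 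Since the iteration operator on $\widetilde{\BBS^\nu}$ is $h\mapsto\Phi((\Phi^{\mone}h)^\istar)$, applying $\Phi$ and the same naturality rule once more gives $\tilde\rho\comp f^\istar=\tilde\rho\comp g^\istar$, as required. The whole argument is a bookkeeping exercise in the naturality of $\Phi$; the main thing to get right is the matching of guardedness conditions and iteration operators on both sides of the adjunction, but no genuinely new iteration-theoretic work is required beyond what Theorems~\ref{thm:elgot_is_retract} and~\ref{thm:sandwich_iter} already provide.
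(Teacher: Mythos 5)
Your overall strategy is exactly the paper's: invoke Theorem~\ref{thm:elgot_is_retract} to realize $\BBS$ as an iteration-congruent retract of a guarded iterative monad, push that monad through Theorem~\ref{thm:sandwich_iter}, and then check that $U\rho F$ is again an iteration-congruent retraction onto the (totally guarded) monad on $USF$. Your treatment of the retraction identity and of iteration congruence via naturality of $\Phi$ matches the paper's argument.

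There is, however, one genuine error: the claim that guardedness preservation by the section $\tilde\upsilon=U\upsilon F$ is ``trivial because the target $USF$ is totally guarded, so no condition has to be checked.'' This inverts the direction of the condition in the definition of a guarded retraction. For $\tilde\rho\colon US^\nu F\to USF$ to be a guarded retraction, the requirement is that for every $f\colon X\to_\sigma USFY$ one has $\tilde\upsilon f\colon X\to_\sigma US^\nu FY$; that is, the hypothesis lives in the \emph{target} and the conclusion in the \emph{source}. Total guardedness of $USF$ therefore makes the hypothesis vacuously true for \emph{every} $f$ and every $\sigma$, so the condition becomes \emph{strongest} rather than vacuous: you must show that $\tilde\upsilon f$ is $\sigma$-guarded in $US^\nu F$ for all $f$ and all $\sigma$. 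This does hold, but for a reason specific to the section produced by Theorem~\ref{thm:elgot_is_retract} (namely $\upsilon=\tuo\eta\inr\tuo(S\inl)$ sends \emph{arbitrary} morphisms into $\sigma$-guarded ones, as noted in that proof), not because of anything about the target. The correct one-line verification is the paper's: by naturality of $\Phi$, $\Phi^{\mone}((U\upsilon)f)=\upsilon(\Phi^{\mone}f)$, and since $\BBS$ is totally guarded and $\rho$ is a guarded retraction, $\upsilon(\Phi^{\mone}f)$ is $\sigma$-guarded in $\BBS^\nu$, whence $(U\upsilon)f$ is $\sigma$-guarded in the sandwiched monad by the definition of guardedness from Theorem~\ref{thm:sandwich_iter}. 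As written, your justification would wrongly suggest that any section of $U\rho F$ whatsoever would do; with this step repaired, the proof is correct and coincides with the paper's.
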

\begin{proof}
By Theorem~\ref{thm:elgot_is_retract}, there exist a guarded
iterative monad $\BBT$ and an iteration-congruent retraction
$\rho \colon\BBT \to \BBS$. By Theorem~\ref{thm:sandwich_iter}, the monad
induced on $UTF$ is guarded iterative. Thus, it is enough to show that
$U \rho F : UTF \to USF$ is an iteration-congruent retraction.
\begin{citemize}
\item It is a retraction, since retractions are preserved by all functors.
\item To see that it is guarded, let $f\colon  X \to USFY$ be
  $\sigma$-guarded. By definition, this means that
  $\Phi^{\mone}f\colon  FX \to SFY$ is $\sigma$-guarded in $\BBS$. Since
  $\rho$ is a guarded retraction, it follows that
  $\upsilon(\Phi^{\mone}f)$, for $\rho$'s family of sections $\upsilon$,
  is also $\sigma$-guarded. By the fact that $\Phi$ is a natural
  isomorphism, we obtain
  $\upsilon(\Phi^{\mone}f) = \Phi^{\mone}((U\upsilon)f)$, hence, by
  definition, $(U\upsilon)f$ is also $\sigma$-guarded.
\item To see that $U \rho F$ is an iteration congruence, let us denote
  by $(\argument)^\istar$ the solution in~$\BBT$, and by
  $(\argument)^\iistar$ the solution in the monad on $UTF$. Let
  $f,g\colon  X \to_2 UTF(X+Y)$ be morphisms such that
  $(U\rho)f = (U\rho)g$. First, using this and the fact that $\Phi$ is
  a natural isomorphism, we obtain the following: 
\begin{displaymath}
\rho(\Phi^{\mone}f)
=
\Phi^{\mone}((U\rho)f)
=
\Phi^{\mone}((U\rho)g)
=
\rho(\Phi^{\mone}g)
\end{displaymath}
Thus, by the fact that $\rho$ is an iteration congruence, we obtain that $\rho(\Phi^{\mone}f)^\istar = \rho(\Phi^{\mone}g)^\istar$. Now, we check that $U\rho F$ is an iteration congruence:
\begin{flalign*}
&&(U\rho)f^\iistar
=&\; (U\rho)(\Phi(\Phi^{\mone}f)^\istar)\\
&&=&\; \Phi(\rho(\Phi^{\mone}f)^\istar)&\by{$\Phi$ is a natural isomorphism}\\
&&=&\; \Phi(\rho(\Phi^{\mone}g)^\istar)&\by{the above}\\
&&=&\; (U\rho)(\Phi(\Phi^{\mone}g)^\istar)&\by{$\Phi$ is a natural isomorphism}\\
&&=&\; (U\rho)g^\iistar &\tag*{\qedhere}
\end{flalign*}
\end{citemize}
\end{proof}

\newcommand{\CBUS}{\mathbf{CbUMet}}
\newcommand{\FD}{F_{\mathrm D}}
\newcommand{\UD}{U_{\mathrm D}}
\newcommand{\CPO}{\mathbf{Cpo}_\bot}
\newcommand{\FL}{F_{\mathrm L}}
\newcommand{\UL}{U_{\mathrm L}}

\begin{example}[From Metric to CPO-based Iteration]\label{expl:met-lifting}
  As an example exhibiting sandwiching as well as the setting of
  Theorem~\ref{thm:elgot_is_retract}, we compare two iteration
  operators on $\mathbf{Set}$ that arise from different fixed point
  theorems: Banach's, for complete metric spaces, and Kleene's, for
  complete partial orders, respectively. We obtain the first operator
  by sandwiching Escardo's \emph{metric lifting
    monad}~$\BBS$~\cite{Escardo99} in the adjunction between sets and
  bounded complete ultrametric spaces (which forgets the metric in one
  direction and takes discrete spaces in the other), obtaining a monad
  $\bar\BBS$ on $\Set$. Given a bounded complete metric space $(X,d)$,
  $S(X,d)$ is a metric on the set $(X\times\mathbb N)\cup\{\bot\}$. As
  we show in the appendix, $\BBS$ is guarded iterative if we define
  $f\colon (X,d)\to S(Y,d')$ to be $\sigma$-guarded if $k>0$ whenever
  $f(x)=(\sigma(y),k)$.  By Theorem~\ref{thm:sandwich_iter},
  $\bar\BBS$ is also guarded iterative (of course, this can also be
  shown directly). The second monad arises by sandwiching the identity
  monad on cpos with bottom in the adjunction between sets and cpos
  with bottom that forgets the ordering in one direction and adjoins
  bottom in the other, obtaining an Elgot monad $\BBL$ on Set
  according to Theorem~\ref{thm:sandwich_elgot}. The latter is
  unsurprising, of course, as $\BBL$ is just the maybe monad $LX=X+1$.

  The monad $\bar\BBS$ keeps track of the number of steps needed to
  obtain the final result. We have an evident extensional collapse map
  $\rho\colon\bar\BBS\to\BBL$, which just forgets the number of steps. One
  can show that $\rho$ is in fact an iteration-congruent retraction,
  so we obtain precisely the situation of
  Theorem~\ref{thm:elgot_is_retract}. Technical details are in the
  appendix.

\end{example}

\section{Related Work}\label{sec:related}
\newcommand{\brhd}{\text{\raisebox{-1.8pt}{\LARGE$\blacktriangleright$}}}

Alternatively to our guardedness relation on Kleisli morphisms,
guardedness can be formalized using type constructors~\cite{Nakano00}
or, categorically, functors, as in \emph{guarded fixpoint
  categories}~\cite{MiliusLitak17}. Roughly speaking, in such settings
a morphism $X\to Y+Z$ is guarded in~$Z$ if it factors through a
morphism $X\to Y+\blacktriangleright Z$ where $\blacktriangleright$ is
a functor or type constructor to be thought of as isolating the
guarded inhabitants of a type. The functorial approach, giving rise to
\emph{guarded fixpoint categories}, covers also total guardedness,
like we do. Our approach is slightly more fine-grained, and in
particular natively supports the two variants of the dinaturality
axiom (Figure~\ref{fig:ax}), which, e.g., in guarded fixpoint
categories require additional assumptions~\cite[Proposition
3.15]{MiliusLitak17} akin to the one we discuss in
Remark~\ref{rem:multi_guard}. In our own subsequent work, we have
generalized the notion of abstract guardedness from co-Cartesian to
symmetric monoidal categories~\cite{GoncharovSchroder18}, where
guardedness becomes a more symmetric concept: among morphisms
$X\otimes Y\to Z\otimes W$, where $\otimes$ is the monoidal structure,
one distinguishes morphisms that are (simultaneously) \emph{unguarded}
in the input~$A$ and \emph{guarded} in the output~$D$.

A result that resembles our Theorem~\ref{thm:nu-algebras}, due to
Ad\'amek et~al.~\cite{AdamekMiliusEtAl11}, states roughly that if
$\BC$ is locally finitely presentable and hyperextensive (a property
imposing certain compatibility constraints between pullbacks and
countable coproducts, satisfied, e.g., over sets and over complete
partial orders), then the finitary Elgot monads are the algebras for a
monad on the category of endofunctors given by
$H\mapsto L_H=\rho\gamma.\,(\argument+1+H\gamma)$ where $\rho$ takes
\emph{rational fixpoints} (i.e.\ final coalgebras among those where
every point generates a finite subcoalgebra); that is, in the
mentioned setting, finitary Elgot monads are monadic over
endofunctors. Besides Theorem~\ref{thm:nu-algebras} making fewer
assumptions on $\BC$, the key difference (indicated already in
Remark~\ref{exmp:delay}) is that, precisely by dint of the mentioned
result, $L_H$ is already a finitary Elgot monad (namely, the free
finitary Elgot monad over~$H$); contrastingly, we characterize Elgot
monads as quotients of \emph{guarded iterative} monads, i.e.\ of
monads where guarded recursive definitions have \emph{unique}
fixpoints.

\section{Conclusions and Further Work}\label{sec:concl}
We have given a unified account of monad-based guarded and unguarded
iteration by axiomatizing the notion of guardedness to cover standard
definitions of guardedness, and additionally, as a corner case, what
we call \emph{total guardedness}, i.e.\ the situation when all
morphisms are declared to be guarded. We thus obtain a common umbrella
for \emph{guarded iterative monads}, i.e.\ monads with unique iterates
of guarded morphisms, and Elgot monads, i.e.\ totally guarded monads
satisfying Elgot's classical laws of iteration. We reinforce the view
that the latter constitute a canonical model for monad-based unguarded
iteration by establishing the following equivalent characterizations:
Provided requisite final coalgebras exist, a monad $\BBT$ is Elgot iff
it satisfies one of the following equivalent conditions:
\begin{itemize}
 \item it satisfies the quasi-equational theory of iteration~\cite{AdamekMiliusEtAl10,GoncharovEA18} (definition);
 \item it is an iteration-congruent retract of a guarded iterative monad (Theorem~\ref{thm:elgot_is_retract});
 \item it is an algebra $(\BBT,\rho)$ of the monad $T\mapsto\nu\gamma.\,T(X+\gamma)$ in
the category of monads satisfying a natural delay cancellation condition (Theorem~\ref{thm:nu-algebras}).    
\end{itemize}
In future work, we aim to investigate further applications of this
machinery, in particular to examples which did not fit previous
formalizations. One prospective target is suggested by work of Nakata
and Uustalu~\cite{NakataUustalu15}, who give a coinductive big-step
trace semantics for a while-language. We conjecture that this work has
an implicit guarded iterative
monad~$\mathbb{T}$\kern.5pt\textscale{.86}{$\mathbb{R}$} under the
hood, for which guardedness cannot be defined using the standard
argument based on a final coalgebra structure of the monad because
$\mathbb{T}$\kern.5pt\textscale{.86}{$\mathbb{R}$} is not a final
coalgebra. Moreover, we aim to extend the treatment of iteration in
finite trace semantics via iteration-congruent retractions
(Example~\ref{expl:trace}) to infinite traces, possibly taking
orientation from recent work on coalgebraic infinite trace
semantics~\cite{UrabeHasuo15}.

In type theory, there is growing interest in forming an extensional
quotient of the delay
monad~\cite{ChapmanUustaluEtAl15,AltenkirchDanielssonEtAl17}. It is shown
in~\cite{ChapmanUustaluEtAl15} that under certain reasonable
conditions, a suitable collapse of the delay monad by removing delays
is again a monad; however, the proof is already quite complex, and
proving directly that the collapse is in fact an Elgot monad, as one
would be inclined to expect, seems daunting.  We expect that
Theorem~\ref{thm:nu-algebras} may shed light on this issue. A natural
question that arises in this regard is whether the subcategory of
$(\argument)^\nu$-algebras figuring in the theorem is reflexive. A
positive answer would provide a means of constructing canonical
quotients of $(\argument)^\nu$-algebras (such as the delay monad) with
the results automatically being Elgot monads.

\subsubsection*{Acknowledgements} We would like to thank the
anonymous referees for their thorough attention to the text and their useful comments on improving the 
presentation.

\bibliographystyle{myabbrv}
\bibliography{monads}

\providecommand{\HasCASL}{HasCASL}
\begin{thebibliography}{10}

\bibitem{AczelAdamekEtAl03}
P.~Aczel, J.~Ad{\'{a}}mek, S.~Milius, and J.~Velebil.
\newblock Infinite trees and completely iterative theories: a coalgebraic view.
\newblock {\em Theor.\ Comput.\ Sci.}, 300(1–3):1--45, 2003.

\bibitem{AdamekMiliusEtAl10}
J.~Ad\'amek, S.~Milius, and J.~Velebil.
\newblock Equational properties of iterative monads.
\newblock {\em Inf.\ Comput.}, 208(12):1306--1348, 2010.

\bibitem{AdamekMiliusEtAl11}
J.~Ad\'amek, S.~Milius, and J.~Velebil.
\newblock Elgot theories: a new perspective of the equational properties of
  iteration.
\newblock {\em Math.\ Struct.\ Comput.\ Sci.}, 21(2):417--480, 2011.

\bibitem{AltenkirchDanielssonEtAl17}
T.~Altenkirch, N.~Danielsson, and N.~Kraus.
\newblock Partiality, revisited - the partiality monad as a quotient
  inductive-inductive type.
\newblock In J.~Esparza and A.~Murawski, eds., {\em Foundations of Software
  Science and Computation Structures, {FOSSACS} 2017}, vol. 10203 of {\em
  LNCS}, pp. 534--549, 2017.

\bibitem{BergstraPonseEtAl01}
J.~Bergstra, A.~Ponse, and S.~Smolka, eds.
\newblock {\em Handbook of Process Algebra}.
\newblock Elsevier, 2001.

\bibitem{BloomEsik93}
S.~Bloom and Z.~\'{E}sik.
\newblock {\em Iteration theories: the equational logic of iterative
  processes}.
\newblock Springer, 1993.

\bibitem{Capretta05}
V.~Capretta.
\newblock General recursion via coinductive types.
\newblock {\em Log.\ Meth.\ Comput.\ Sci.}, 1(2), 2005.

\bibitem{CarboniLackEtAl93}
A.~Carboni, S.~Lack, and R.~Walters.
\newblock Introduction to extensive and distributive categories.
\newblock {\em J.\ Pure Appl.\ Algebra}, 84:145--158, 1993.

\bibitem{ChapmanUustaluEtAl15}
J.~Chapman, T.~Uustalu, and N.~Veltri.
\newblock Quotienting the delay monad by weak bisimilarity.
\newblock In M.~Leucker, C.~Rueda, and F.~Valencia, eds., {\em Theoretical
  Aspects of Computing, {ICTAC} 2015}, vol. 9399 of {\em LNCS}, pp. 110--125.
  Springer, 2015.

\bibitem{Elgot75}
C.~Elgot.
\newblock Monadic computation and iterative algebraic theories.
\newblock In H.~Rose and J.~Shepherdson, eds., {\em Logic Colloquium 1973},
  vol.~80 of {\em Studies in Logic and the Foundations of Mathematics}, pp.
  175--230. Elsevier, 1975.

\bibitem{ElgotBloomEtAl78}
C.~Elgot, S.~Bloom, and R.~Tindell.
\newblock On the algebraic atructure of rooted trees.
\newblock {\em J.\ Comput.\ Syst.\ Sci.}, 16(3):362--399, 1978.

\bibitem{Escardo99}
M.~Escard{\'{o}}.
\newblock A metric model of {PCF}.
\newblock In {\em Realizability Semantics and Applications}, 1999.

\bibitem{Esik99}
Z.~{\'{E}}sik.
\newblock Axiomatizing iteration categories.
\newblock {\em Acta Cybern.}, 14(1):65--82, 1999.

\bibitem{EsikGoncharov16}
Z.~{\'{E}}sik and S.~Goncharov.
\newblock Some remarks on {C}onway and iteration theories.
\newblock {\em CoRR}, abs/1603.00838, 2016.

\bibitem{GoncharovMiliusEtAl16}
S.~Goncharov, S.~Milius, and C.~Rauch.
\newblock Complete {Elgot} monads and coalgebraic resumptions.
\newblock In L.~Birkedal, ed., {\em Mathematical Foundations of Programming
  Semantics, MFPS 2016}, vol. 325 of {\em ENTCS}, pp. 147--168. Elsevier, 2016.

\bibitem{GoncharovSchroder13}
S.~Goncharov and L.~Schr{\"o}der.
\newblock A coinductive calculus for asynchronous side-effecting processes.
\newblock {\em Inf.\ Comput.}, 231:204--232, 2013.

\bibitem{GoncharovSchroder18}
S.~Goncharov and L.~Schr{\"{o}}der.
\newblock Guarded traced categories.
\newblock In C.~Baier and U.~{Dal Lago}, eds., {\em Foundations of Software
  Science and Computation Structures, {FOSSACS} 2018}, vol. 10803 of {\em
  LNCS}, pp. 313--330. Springer, 2018.

\bibitem{GoncharovEA18}
S.~Goncharov, L.~Schr{\"{o}}der, C.~Rauch, and J.~Jakob.
\newblock Unguarded recursion on coinductive resumptions.
\newblock {\em Log.\ Methods Comput.\ Sci.}, 14(3), 2018.

\bibitem{GoncharovSchroderEtAl17}
S.~Goncharov, L.~Schr\"oder, C.~Rauch, and M.~Pir\'og.
\newblock Unifying guarded and unguarded iteration.
\newblock In J.~Esparza and A.~Murawski, eds., {\em Foundations of Software
  Science and Computation Structures, FoSSaCS 2017}, vol. 10203 of {\em LNCS},
  pp. 517--533. Springer, 2017.

\bibitem{HasuoJacobsEtAl07}
I.~Hasuo, B.~Jacobs, and A.~Sokolova.
\newblock Generic trace semantics via coinduction.
\newblock {\em Log.\ Meth.\ Comput.\ Sci.}, 3(4), 2007.

\bibitem{Lawvere63}
W.~Lawvere.
\newblock Functorial semantics of algebraic theories.
\newblock {\em Proc.\ Natl.\ Acad.\ Sci.\ USA}, 50(5):869--872, 1963.

\bibitem{MacLane71}
S.~Mac~Lane.
\newblock {\em Categories for the Working Mathematician}.
\newblock Springer, 2nd edition, 1998.

\bibitem{Milius05}
S.~Milius.
\newblock Completely iterative algebras and completely iterative monads.
\newblock {\em Inf.\ Comput.}, 196(1):1--41, 2005.

\bibitem{MiliusLitak17}
S.~Milius and T.~Litak.
\newblock Guard your daggers and traces: Properties of guarded (co-)recursion.
\newblock {\em Fund.\ Inform.}, 150:407--449, 2017.

\bibitem{Milner89}
R.~Milner.
\newblock {\em Communication and concurrency}.
\newblock Prentice-Hall, 1989.

\bibitem{Moggi91}
E.~Moggi.
\newblock A modular approach to denotational semantics.
\newblock In D.~Pitt, P.-L. Curien, S.~Abramsky, A.~Pitts, A.~Poign{\'e}, and
  D.~Rydeheard, eds., {\em Category Theory and Computer Science, CTCS 1991},
  vol. 530 of {\em LNCS}, pp. 138--139. Springer, 1991.

\bibitem{Moggi91a}
E.~Moggi.
\newblock Notions of computation and monads.
\newblock {\em Inf.\ Comput.}, 93:55--92, 1991.

\bibitem{Nakano00}
H.~Nakano.
\newblock A modality for recursion.
\newblock In {\em Logic in Computer Science, LICS 2000}, pp. 255--266. {IEEE}
  Computer Society, 2000.

\bibitem{NakataUustalu15}
K.~Nakata and T.~Uustalu.
\newblock A {H}oare logic for the coinductive trace-based big-step semantics of
  while.
\newblock {\em Log.\ Methods Comput.\ Sci.}, 11(1), 2015.

\bibitem{PirogGibbons14}
M.~Pir{\'{o}}g and J.~Gibbons.
\newblock The coinductive resumption monad.
\newblock In B.~Jacobs, ed., {\em Mathematical Foundations of Programming
  Semantics, MFPS 2014}, vol. 308 of {\em ENTCS}, pp. 273--288, 2014.

\bibitem{PirogGibbons15}
M.~Pir{\'{o}}g and J.~Gibbons.
\newblock Monads for behaviour.
\newblock In D.~Kozen, ed., {\em Mathematical Foundations of Programming
  Semantics, MFPS 2013}, vol. 298 of {\em ENTCS}, pp. 309--324, 2015.

\bibitem{SimpsonPlotkin00}
A.~Simpson and G.~Plotkin.
\newblock Complete axioms for categorical fixed-point operators.
\newblock In {\em Logic in Computer Science, LICS 2000}, pp. 30--41, 2000.

\bibitem{Smyth92}
M.~Smyth.
\newblock Topology.
\newblock In {\em Handbook of Logic in Computer Science}, vol.~1, pp. 641--761.
  Clarendon Press, 1992.

\bibitem{UrabeHasuo15}
N.~Urabe and I.~Hasuo.
\newblock Coalgebraic infinite traces and kleisli simulations.
\newblock In L.~Moss and P.~Sobocinski, eds., {\em Algebra and Coalgebra in
  Computer Science, {CALCO} 2015}, vol.~35 of {\em LIPIcs}, pp. 320--335.
  Schloss Dagstuhl - Leibniz-Zentrum fuer Informatik, 2015.

\bibitem{Uustalu03}
T.~Uustalu.
\newblock Generalizing substitution.
\newblock {\em ITA}, 37(4):315--336, 2003.

\bibitem{UustaluVene99}
T.~Uustalu and V.~Vene.
\newblock {Primitive (Co)Recursion and Course-of-Value (Co)Iteration,
  Categorically}.
\newblock {\em Informatica (Lithuanian Academy of Sciences)}, 10(1):5--26,
  1999.

\end{thebibliography}

\clearpage
\appendix

\section{Details of Example~\ref{expl:met-lifting}}
As an example of the setting described in
Theorem~\ref{thm:elgot_is_retract}, we compare two iteration operators
on $\mathbf{Set}$ that arise from different fixed point theorems:
Banach's for complete metric spaces and Kleene's for complete partial
orders, respectively. The first operator keeps track of the number of
steps needed to obtain the final result. We show that its extensional
collapse, defined as a morphism that forgets the number of steps, is
an iteration-congruent retraction with the second operator as the
retract.

First, we consider the category $\CBUS$ of complete $1$-bounded
ultrametric spaces and nonexpansive maps. Note that $\CBUS$ has
coproducts and is Cartesian closed~\cite{Smyth92}. Following
Escard{\'o}~\cite{Escardo99}, one has a monad $T$ on $\CBUS$
given on objects as
$T(A,d) = ((A \times \mathbb N) \cup \{ \infty \}, d')$, with $d'$
given by
\begin{align*}
& d'(\infty, \infty) = 0
&&
d'((x,k), \infty) = d'(\infty, (x,k)) = (1/2)^{k}
\\
& d'((x,k),(y,k)) = (1/2)^{k} d(x,y)
&&
d'((x,k),(y,t)) = (1/2)^{\min(k,t)} \text{ if $k \neq t$}
\end{align*}
The monad structure on~$T$ is defined as expected, by
\begin{math}
\eta(a) = (a,0)
\end{math}
and
\begin{math}
f^\klstar(a,k) = (b, k+t) \text{ where $f (a) = (b, t)$}
\end{math}

\begin{theorem}\label{thm:met-delay-iterative}
The monad $T$ is guarded iterative with $f\colon  X \to TY$ being $\sigma$-guarded if for all $x$ and $y$,  $f(x) = (\sigma(y), k)$ implies $k > 0$.
\end{theorem}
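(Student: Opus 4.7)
The plan is to split the proof into two stages: first verify that the proposed relation satisfies the three closure rules of Figure~\ref{fig:guard}, and second establish existence and uniqueness of solutions via Banach's fixed point theorem applied in the complete ultrametric enrichment of $\CBUS$.

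For the guardedness axioms, I would proceed as follows. Rule \textbf{(trv)} is immediate, since $(T\inl)\comp f$ never takes a value of the form $(\inr z, k)$, so the guardedness condition is vacuously satisfied. Rule \textbf{(par)} follows by a case split on whether the input lies in the left or right summand of $X+Y$, each of which reduces to the guardedness assumption on $f$ respectively $g$. Rule \textbf{(cmp)} requires unpacking Kleisli composition: for $f\colon X\to_{\inr}T(Y+Z)$, $g\colon Y\to_\sigma TV$, $h\colon Z\to TV$ and any $x\in X$, a value $[g,h]^\klstar f(x) = (\sigma(v),k)$ decomposes as $k = k_1 + t$ where $f(x) = (w,k_1)$ and $[g,h](w) = (\sigma(v),t)$. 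If $w = \inl y$, then $t>0$ by $\sigma$-guardedness of $g$; if $w = \inr z$, then $k_1 > 0$ by $\inr$-guardedness of $f$. Either way $k > 0$, as required.

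For uniqueness and existence of solutions, fix a $\inr$-guarded $f\colon X\to T(Y+X)$ and consider the operator
\begin{equation*}
\Phi\colon \CBUS(X,TY)\to\CBUS(X,TY),\qquad \Phi(g) = [\eta,g]^\klstar f,
\end{equation*}
whose fixpoints are precisely the solutions of $f$. Since $\CBUS$ is Cartesian closed and complete, the hom-set $\CBUS(X,TY)$ carries the structure of a complete $1$-bounded ultrametric space under the sup metric. The key point is that $\Phi$ is a $\tfrac12$-contraction: for each $x\in X$, either $f(x) = \infty$ or $f(x) = (\inl y, k_1)$, in which case $\Phi(g_1)(x) = \Phi(g_2)(x)$, or $f(x) = (\inr x', k_1)$ with $k_1 \geq 1$. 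In the last case, $\Phi(g_i)(x)$ is the shift of $g_i(x')$ by $k_1$ steps, and inspection of the definition of the metric on $T(Y)$ shows that shifting by $k_1$ steps multiplies distances by exactly $(1/2)^{k_1} \leq 1/2$. Taking the supremum over $x$ and $x'$ gives $d(\Phi(g_1),\Phi(g_2))\leq \tfrac12\, d(g_1,g_2)$. Banach's fixed point theorem then delivers the unique solution $f^\istar$.

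The main obstacle is the case analysis in the shrinkage estimate: the ultrametric on $TY$ has three distinct regimes (both values $\infty$, one value $\infty$, both finite with possibly differing step counts), and one must verify uniformly that shifting by $k_1\geq 1$ steps on each side of the comparison shrinks the distance by a factor of $(1/2)^{k_1}$. Once this is checked, the Banach argument is straightforward, and the remaining verification that $\Phi$ indeed lands in $\CBUS(X,TY)$ (i.e.\ that $\Phi(g)$ is nonexpansive whenever $g$ is) is routine from the nonexpansiveness of Kleisli composition in $\CBUS$.
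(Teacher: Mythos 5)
Your proposal is correct and follows essentially the same route as the paper: the guardedness rules are checked by the same case analysis on Kleisli composition (with \textbf{(cmp)} splitting on whether $f(x)$ lands in the left or right summand), and solutions are obtained exactly as the unique Banach fixpoint of $g\mapsto[\eta,g]^\klstar f$ on the sup-metric function space, with the same $\tfrac12$-contraction estimate coming from the guard forcing a step count $k_1\geq 1$. Your version is if anything slightly more complete, since you explicitly treat the $f(x)=\infty$ case and the nonexpansiveness of the operator, which the paper leaves implicit.
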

\begin{proof}
  First, note that the product of $\tuple{A,d_A}$ and $\tuple{B, d_B}$ in $\mathbf{CbUMet}$ is given by $\tuple{A \times B, d_{A \times B}}$, where
\begin{equation*}
d_{A \times B}(\tuple{x_1,y_1}, \tuple{x_2,y_2}) = \max \{ d_A(x_1, x_2) , d_B(y_1, y_2) \}.
\end{equation*}
The exponential object is equal to $\tuple{B^A, d_{A \Rightarrow B}}$ where
\begin{equation*}
d_{A \Rightarrow B}(f,g) = \sup \{ d_B(f(x), g(x)) \ |\ x \in A \}.
\end{equation*}
The coproduct is given by $\tuple{A+B, d_{A+B}}$, where
\begin{equation*}
d_{A+B}(p,q) = 
\begin{cases}
d_A(x_1, x_2) & \text{if } p = \inl\, x_1 \text{ and } q =\inl\, x_2
\\
d_B(y_1, y_2) & \text{if } p = \inr\, y_1 \text{ and } q =\inr\, y_2
\\
1 & \text{otherwise}
\end{cases}
\end{equation*}

\noindent Now, we show that the monad $\BBT$ is guarded. The only nontrivial case is \textbf{(cmp)}. So, assume $([g,h]^\klstar f)(x) = (\sigma(y), k)$. We consider two cases:
\begin{itemize}
\item $f(x) = (\inr z, k')$. Then, since $f$ is $\inr$-guarded, $k' > 0$, so, by definition of~$(\argument)^\klstar$,  $k > 0$.
\item $f(x) = (\inl z, k')$. Then, $([g,h]^\klstar f)(x) = [g,h]^\klstar (\inl z, k') = (\sigma(y), k' + k'')$, where $g(z) = (\sigma(y), k'')$. Since $g$ is $\sigma$-guarded, $k'' > 0$, so $k = k' + k'' > 0$.
\end{itemize}

\noindent Given a guarded morphism $f\colon  X \to_1 T(Y+X)$, we define the morphism $f^\istar\colon  X \rightarrow TY$ as the unique fixed point of the following map $\psi : (X \rightarrow TY) \rightarrow (X \rightarrow TY)$:
\begin{align*}
\psi (g) &= [\eta,g]^\klstar f
\end{align*}

\noindent One can easily see that any fixed point of $\psi$ satisfies the fixed point identity, and that the uniqueness of such a fixed point gives us that $f$ has a unique solution. We use Banach's theorem to achieve both.

By Banach's theorem, it is enough to show that $\psi$ is contractive, that is, there exists a non-negative real $c < 1$ such that for maps $g, g'\colon  X \rightarrow TY$, the following holds:
\begin{equation}\label{eq:contractiveCimL}
d_{X \Rightarrow TY}(\psi(g), \psi(g')) \leq c \cdot d_{X \Rightarrow TY}(g, g')
\end{equation}
The left-hand side of the equation~\eqref{eq:contractiveCimL} is equal to:
\begin{align*}
d_{X \Rightarrow TY}(\psi(g), \psi(g')) = \sup \{ d_{TY}(\psi (g)(x), \psi(g')(x)) \ |\ x \in X \}
\end{align*}
In turn, the right-hand side is as follows:
\begin{align*}
c \cdot d_{X \Rightarrow TY}(g, g') & = c \cdot \sup \{ d_{TY}(g(x), g'(x)) \ |\ x \in X \}
\\
& = \sup \{ c \cdot d_{TY}(g(x), g'(x)) \ |\ x \in X \}
\end{align*}
Thus, it is enough to show that for all $x \in X$, there exists $y \in X$ such that:
\begin{equation*}
d_{TY}(\psi (g)(x), \psi(g')(x)) \leq c \cdot d_{TY}(g(y), g'(y))
\end{equation*}
We show this for $c = 1/2$. We consider two cases:
\begin{citemize}
\item $f(x) = (\inl y, k)$ for some $y \in Y$ and $k \in \mathbb N$. Then, for all $g\colon  X \to TY$, the following holds:
\begin{flalign*}
&&\psi(g)(x)
=&\; ([\eta, g]^\klstar f)(x) &\\
&&=&\; [\eta, g]^\klstar(\inl y, k) &\\
&&=&\; (y, k) 
\end{flalign*}
So, the following holds:
\begin{flalign*}
&&d_{TY}(\psi(g)(x),\psi(g')(x))
=&\; d_{TY}((y, k),(y, k)) &\\
&&=&\; 0 &\\
&&\leq&\; (1/2) \cdot d_{TY}(g(x), g'(x)) 
\end{flalign*}
\item $f(x) = (\inr y, k + 1)$ for some $y \in X$ and $k \in \mathbb N$ (the `$+1$' part follows from the fact that $f$ is guarded). Assume that $g(y) = (z, t)$ for some $z$ and $t$. Then
\begin{flalign*}
&&\psi(g)(x)
=&\; ([\eta, g]^\klstar f)(x) &\\
&&=&\; [\eta, g]^\klstar(\inr y, k + 1) &\\
&&=&\; (z, t + k + 1). 
\end{flalign*}
Similarly, let $g'(y) = (z', t')$, and so $\psi(g')(x) = (z', t' + k + 1)$. Then, it follows that:
\begin{flalign*}
&&d_{TY}(\psi(g)(x),\psi(g')(x))
=&\; d_{TY}((z, t+k+1),(z', t'+k+1)) &\\
&&=&\; (1/2)^{k+1} \cdot d_{TY}((z, t),(z', t')) &\\
&&=&\; (1/2)^{k} \cdot (1/2) \cdot d_{TY}((z, t),(z', t')) &\\
&&\leq&\; (1/2) \cdot d_{TY}((z, t),(z', t')) &\\
&&=&\; (1/2) \cdot d_{TY}(g(y),g'(y))  &\tag*{\qedhere}
\end{flalign*}
\end{citemize}
\end{proof}

\noindent We thus obtain a monad $\UD T \FD$ on $\mathbf{Set}$ by
sandwiching $T$ in the adjunction $\FD \dashv \UD$ where $\UD$ is the
forgetful functor $\CBUS\to\Set$ and $\FD$ takes discrete metrics. By
Theorem~\ref{thm:sandwich_iter}, $\UD T \FD$ is guarded iterative.

For the second operator, let $\CPO$ be the category of complete
partial orders and continuous bottom-preserving functions. The
identity on $\CPO$ is an Elgot monad, hence, by
Theorem~\ref{thm:sandwich_elgot}, we obtain an Elgot monad $\UL \FL$
on $\Set$ by sandwiching in the adjunction $\FL\dashv\UL$ where $\UL$
is the forgetful functor $\CPO\to\Set$ and $\FL$ adjoins bottom. The
relation between the two monads on $\Set$ is an instance of our notion
of iteration-congruent retraction:
\begin{theorem}
  Define $\rho \colon\UD T \FD \to UL \FL$ by $\rho(a, k) = a$ and
  $\rho(\infty) = \bot$.  Then~$\rho$ is an iteration-congruent
  retraction with the section given by $\upsilon(a) = (a,1)$ and
  $\upsilon(\bot) = \infty$.  Moreover, the respective iteration
  operators induced by $\rho$ and the sandwich theorem coincide.
\end{theorem}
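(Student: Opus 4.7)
The proof breaks into verifying the three defining conditions of an iteration-congruent retraction, and then separately establishing the agreement of the two iteration operators. The first two conditions are near-immediate. First, $\rho\upsilon = \id$ holds by inspection: $\rho(\upsilon(a)) = \rho(a,1) = a$ and $\rho(\upsilon(\bot)) = \rho(\infty) = \bot$. Second, $\rho$ is a monad morphism: compatibility with units is direct, while compatibility with Kleisli lifting $\rho\comp f^\klstar = (\rho\comp f)^\klstar\comp\rho$ reduces to a two-way case split on whether the input in $\UD T \FD X$ is $\infty$ or of the form $(a, k)$. Third, $\upsilon$ lands in $\sigma$-guarded morphisms for every summand $\sigma$: by construction $(\upsilon f)(x)$ is either $\infty$ or $(z,1)$, and since $1 > 0$, the guardedness condition from Theorem~\ref{thm:met-delay-iterative} holds for every $\sigma$. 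This already establishes that $\rho$ is a guarded retraction witnessed by $\upsilon$.

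For both iteration congruence and the coincidence of the two iteration operators on $\UL\FL$, I will establish the single stronger identity
\begin{equation*}
\rho\comp f^\istar \;=\; (\rho\comp f)^\istar_{\mathrm L}
\end{equation*}
for every $\inr$-guarded $f\colon X \to \UD T \FD(Y+X)$, where $(\argument)^\istar_{\mathrm L}$ denotes the sandwich iteration on $\UL\FL$ obtained from Theorem~\ref{thm:sandwich_elgot}. This identity immediately yields iteration congruence, because then $\rho f = \rho g$ forces $\rho f^\istar = (\rho f)^\istar_{\mathrm L} = (\rho g)^\istar_{\mathrm L} = \rho g^\istar$. It also yields coincidence of iteration operators on $\UL\FL$: the operator induced by $\rho$ via Theorem~\ref{thm:ext_iter} sends $g$ to $\rho\comp (\upsilon\comp g)^\istar$, which by the identity applied to $\upsilon\comp g$ equals $(\rho\upsilon\comp g)^\istar_{\mathrm L} = g^\istar_{\mathrm L}$, using $\rho\upsilon = \id$.

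The identity itself is established by pinning down both sides explicitly. Since $\rho$ is a monad morphism, $\rho f^\istar$ satisfies the fixpoint equation $h = [\eta, h]^\klstar(\rho\comp f)$ in $\UL\FL$, while $(\rho f)^\istar_{\mathrm L}$ is the \emph{least} such fixpoint, computed via Kleene iteration in $\CPO$ after the sandwich adjunction; thus $(\rho f)^\istar_{\mathrm L} \sqsubseteq \rho f^\istar$ pointwise in the flat order on $\UL\FL Y = Y + \{\bot\}$. For the reverse direction I unfold the Banach fixpoint from the proof of Theorem~\ref{thm:met-delay-iterative}: by induction on the $\inr$-chain length, $f^\istar(x) = (y, k_0 + \cdots + k_n)$ precisely when there exist $x = x_0, x_1, \ldots, x_n$ with $f(x_i) = (\inr x_{i+1}, k_i)$ for $i < n$ and $f(x_n) = (\inl y, k_n)$, while $f^\istar(x) = \infty$ otherwise. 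Applying $\rho$ strips all clock information, so $\rho f^\istar(x) \neq \bot$ precisely when such a terminating chain exists, and in that case equals $y$, which is exactly the value at which the Kleene iteration for $(\rho f)^\istar_{\mathrm L}$ stabilises. The main obstacle is the explicit description of $f^\istar$: the contractivity argument of Theorem~\ref{thm:met-delay-iterative} guarantees uniqueness of the Banach fixpoint, but extracting the precise form above requires a careful induction on chain length together with an argument that the divergent cases all collapse to $\infty$ in the metric limit.
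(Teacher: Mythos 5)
Your proof is correct, but it is organized differently from the paper's. The paper dismisses the guarded-retraction part as trivial (you spell it out, harmlessly) and then handles the two remaining claims separately: for iteration congruence it introduces the relation $f \sim h$ (``equal up to clock values''), observes that $\rho f = \rho h$ implies $f\sim h$, and shows by induction that the Banach approximants $W^f_n(x)$ and $W^h_n(x)$ stabilize at the same index with the same underlying value; for the coincidence of iteration operators it separately identifies the Kleene approximants in $\UL\FL$ with the $\rho$-images of the Banach approximants for $\upsilon\comp f$. You instead prove the single stronger identity $\rho\comp f^\istar = (\rho\comp f)^\istar$ (Banach iteration on the left, Kleene/sandwich iteration on the right) for every guarded $f$, and derive both claims from it; this amounts to showing that $\rho$ is iteration-preserving on guarded morphisms, which is a cleaner and slightly stronger statement that unifies the two halves of the paper's argument. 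The cost is that your route leans on the explicit chain characterization of $f^\istar$, which the paper's congruence argument avoids (it only needs synchronized stabilization of the two approximant sequences). The ``careful induction'' and limit analysis you flag at the end can, however, be short-circuited: define $h$ by your chain description, verify directly by case analysis on $f(x)$ that $h=[\eta,h]^\klstar f$, and invoke uniqueness of the Banach fixpoint (all maps out of the discrete space $\FD X$ are nonexpansive, so $h$ lives in the right space); no reasoning about metric limits of the divergent cases is needed.
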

\begin{proof}
It is trivial that $\rho$ is a guarded retraction. To see that it is a iteration
congruence, we first define an auxiliary relation: given two functions $f,
h\colon  X \to (B \times \mathbb N) \cup \{ \infty \}$, we write $f \sim h$ if $f
(x) = (a,k)$ for some $a \in B$, $k \in \mathbb N$ if and only if $h (x) =
(a,k')$ for some $k' \in \mathbb N$ and $f (x) = \infty$ if and only if $h(x) =
\infty$ (i.e.\ the two functions differ only in the number of steps needed to obtain the value). We also write $\psi_f(g) = [\eta, g]^\klstar f$ for the function $\psi$ from the proof of Theorem~\ref{thm:met-delay-iterative}.

Given a $2$-guarded function $f\colon  X \to ((Y + X) \times \mathbb N) \cup \{ \infty \}$, the function~$f^\istar$ can be defined as the unique fixed point of $\psi_f$ (see the proof of Theorem~\ref{thm:met-delay-iterative}), which by Banach's fixed-point theorem is given by the limit of the sequence $W^f_0 = c$ and $W^f_{(n+1)} = \psi_f(W^f_n)$, where $c$ is the constant function $c(x) = \infty$. It is easy to see that for each $x$ the sequence $W^f_n(x)$ stabilizes. Given a function $h$ such that $f \sim h$, it is easy to show by induction that for every $x$, the sequence $W^f_n(x)$ stabilizes with $(a,k)$ for some $k$ if and only if $W^h_n(x)$ stabilizes with $(a,k')$ for some $k'$ at the same index $n$. Then, for all $x$, $f^\istar(x) = (a,k)$ and $h^\istar(x) = (a, k')$, so $\rho(f^\istar(x)) = \rho (h^\istar(x))$. Then, the result is obtained by noticing that for all $f$ and $h$, $\rho f = \rho h$ implies $f \sim h$.

It is left to see that the solution operator that follows from the sandwich theorem and the one that follows from the iteration-congruent retraction coincide. Given $f\colon  X \to (X + Y) \cup \{ \bot \}$, its solution in the Elgot monad $\UL \FL$ is given by the fixed point of the equation $\phi(g) = [\eta,g]^\klstar f$, that is, by Kleene's theorem, by the limit of the sequence $W'_0 = c'$ and $W'_{(n+1)} = \phi_f(W'_n)$, where $c'$ is the constant function $c'(x) = \bot$. It is easy to see that $W'_n = \rho W^{(\upsilon f)}_n$, so the solutions coincide.
\end{proof}
\noindent Forgetting the provenance of the above-mentioned monads on
$\Set$ via sandwiching, we obtain that the maybe monad $(\argument)+\{\bot\}$
on $\Set$ is an iteration-congruent retract of the delay monad
$(\argument)\times\Nat+\{\bot\}$, which is, of course, not surprising. In
categories beyond sets (where the delay monad, or partiality monad, is
more generally defined as $\nu\gamma.(-+\gamma)$~\cite{Capretta05}),
the situation is more complex, see Remark~\ref{exmp:delay}.

\end{document}

